\pgfplotsset{compat=1.17}
\definecolor{mydarkblue}{rgb}{0,0.08,0.45}
\newcommand\addstarred[1]{%
    \expandafter\let\csname\string#1@nostar\endcsname#1%
    \edef#1{\noexpand\@ifstar\expandafter\noexpand\csname\string#1@star\endcsname\expandafter\noexpand\csname\string#1@nostar\endcsname}%
    \expandafter\newcommand\csname\string#1@star\endcsname%
}
\newcommand{\1}[1]{\mathbbm{1}{\{#1\}}}
\DeclareMathOperator*{\argmax}{arg\,max}
\DeclareMathOperator*{\argmin}{arg\,min}
\renewcommand{\ge}{\geqslant}
\renewcommand{\le}{\leqslant}
\newcommand{\type}[1]{Type-\uppercase\expandafter{\romannumeral#1}}
\DeclareMathOperator*{\Cov}{{\normalfont Cov}}
\newtheorem{theorem}{Theorem}
\newtheorem{lemma}[theorem]{Lemma}
\newtheorem{condition}{Condition}
\let\oldComment=\Comment
\renewcommand{\Comment}[1]{\oldComment{\texttt{#1}}}
\algnewcommand{\LeftComment}[1]{\Statex $\triangleright$ \texttt{#1}}
\algnewcommand{\RightComment}[1]{\Statex \leavevmode\hfill$\triangleright$ \texttt{#1}}
\algnewcommand\algorithmicinput{\textbf{Input:}}
\algnewcommand\Input{\item[\algorithmicinput]}%
\algnewcommand\algorithmicoutput{\textbf{Output:}}
\algnewcommand\Output{\item[\algorithmicoutput]}%
\algnewcommand\algorithmicinitial{\textbf{Initialize:}}
\algnewcommand\Initial{\item[\algorithmicinitial]}%
\newcommand{\opal}{{\normalfont \textsc{Opal}}\xspace}
\begin{document}

\title{Optimal Online Probe Allocation for \\ Classical and Quantum Network Tomography}

\author{Xuchuang Wang, \IEEEmembership{Member, IEEE},  Yu-Zhen Janice Chen, Matheus Guedes
  de Andrade, \IEEEmembership{Member, IEEE}, Mohammad Hajiesmaili, \IEEEmembership{Member, IEEE}, John C.S. Lui, \IEEEmembership{Fellow, IEEE},
  Ting He, \IEEEmembership{Senior Member, IEEE},\\ Don Towsley, \IEEEmembership{Life Fellow, IEEE}%
  % <-this % stops a space
  % \thanks{Manuscript received XX, XX, 2025; revised XXXXXXX; accepted XXXXXXX. The work of XXX was supported in part by a xxx details grant from xxx details
  % }
  % \thanks{Preprint.}
  \thanks{
    Xuchuang Wang, Yu-Zhen Janice Chen, Matheus Guedes de Andrade, Mohammad Hajiesmaili, and Don Towsley are with the Manning College of  Computer Informations and Science, University of Massachusetts Amherst, Amherst, MA. (Email: \{xuchuangwang, yuzhenchen, mguedesdeand, hajiesmaili, towsley\}@cs.umass.edu)
  }\thanks{John C.S. Lui is with the Department of Computer Science and Engineering, The Chinese University of Hong Kong, Hong Kong. (Email: cslui@cse.cuhk.edu.hk)
  }\thanks{Ting He is with the the School of Electrical Engineering and Computer Science at Pennsylvania State University, University Park, PA. (Email: tinghe@psu.edu)} % <-this % stops a space
  % \thanks{Manuscript received April 19, 2021; revised August 16, 2021.}
}

% The paper headers
\markboth{
  % Journal of \LaTeX\ Class Files,~Vol.~XX, No.~X, 
  Preprint, October~2025}%
{Wang \MakeLowercase{\textit{et al.}}: Optimal Online Probe Allocation for Classical and Quantum Network Tomography}

% \IEEEpubid{0000--0000/00\$00.00~\copyright~2021 IEEE}
% Remember, if you use this you must call \IEEEpubidadjcol in the second
% column for its text to clear the IEEEpubid mark.

\maketitle

\begin{abstract}
  How to efficiently perform network tomography is a fundamental problem in network management and monitoring.
  A network tomography task usually consists of applying multiple probing experiments, e.g., across different paths or via different casts (e.g., unicast and multicast).
  We study how to optimize the network tomography process through online sequential decision-making.
  From the methodology perspective, we introduce an online probe allocation algorithm that sequentially performs network tomography based on the
  principles of
  optimal experimental design and the maximum likelihood estimation.
  We rigorously analyze the regret of the algorithm under the conditions that {\em i)} the optimal allocation is Lipschitz continuous in the parameters being estimated and {\em ii)} the parameter estimators satisfy a concentration property.
  From the application perspective, we present two case studies: {\em a)} the classical lossy packet-switched
  network and {\em b)} the quantum bit-flip network.
  We show that both cases fulfill the two theoretical conditions and provide their corresponding regrets when deploying our proposed online probe allocation algorithm.
  Besides case studies with theoretical guarantees, we also conduct simulations to compare our proposed algorithm
  with existing methods and demonstrate our algorithm's effectiveness in a broader range of scenarios.
  In an experiment on the Roofnet topology, our algorithm improves the estimation accuracy by \(13.64\%\) compared with the state-of-the-art baseline.
\end{abstract}

\begin{IEEEkeywords}
  Network tomography, Online learning, Experimental design, Quantum networks
\end{IEEEkeywords}

\section{Introduction}
% \mo{Alternate title: Online Learning of Classical and Quantum Network Tomography}

% Online experimental Design for Classical and Quantum Network Tomography
\IEEEPARstart{N}{etwork} tomography~\citep{coates2002internet,he2021network,de2023characterization}
% \ting{add a reference on quantum network tomography? \xw{Done}} 
is essential for inferring the internal network (e.g., link) parameters, such as the loss rate, delay, and bandwidth, via end-to-end (external) measurements.
These external measurements are practically more accessible than the network's internal components (e.g., routers), which may be owned by different internet service providers (ISPs)~\citep{caceres1999multicast} and therefore difficult to access.
A stochastic network tomography problem consists of a network with
a set \(\mathcal{L}\) of \(L\coloneqq \abs{\mathcal{L}}\)
% \in \mathbb{N}^+\) 
links, each link \(\ell \in\mathcal{L}\) with an unknown parameter \(\mu_\ell\) that characterizes its stochastic property (e.g., \(\mu_\ell\) may represent loss rate or average delay of the link), and a set \(\mathcal{M}\) of \(M\coloneqq\abs{\mathcal{M}}\) probes.
Performing a probe on the network refers to generating one or multiple stochastic measurements (i.e., \emph{observations}) that depend on the network parameters.
Network tomography aims to estimate the network parameters \(\mu_\ell\) from the observations.

Most prior works on network tomography focus on devising estimators for the network parameters from the stochastic observations, e.g., in packet delay tomography~\citep{duffield2000multicast,presti2002multicast,gu2010optimal}
% \ting{network$\to$tomography}
and loss network tomography~\citep{caceres1999multicast,xi2006estimating} (detailed related works are in Section~\ref{sec:related-works}).
% \ting{tomography}
That is, after obtaining these stochastic observations from probes, prior works aim to devise good \emph{static} estimators to infer the network parameters.
% \todo{static/adaptive, offline/online, passive/active is confusing. After having measurement, turn to collect measurements}
While static estimator design is crucial for network tomography, another essential yet less explored task is, without knowing the network parameters \((\mu_\ell)_{\ell\in \mathcal{L}}\),
% to \emph{adaptively} \ting{``adaptively'' is not mandatory; suggest removing it} 
how to \emph{dynamically} collect {online} observations to \emph{efficiently} estimate the network parameters using the least number of probes. In short, this is a sequential decision-making problem (a.k.a., online learning)~\citep{barto1989learning} for network tomography.

However, to apply online learning techniques to network tomography, one needs to tackle unique challenges not present in common online learning scenarios.
A key technical challenge comes from the complex feedback mechanism in network tomography. In standard online learning settings, e.g., basic multi-armed bandits~\citep{auer2002finite} or linear bandits~\citep{li2010contextual}, the feedback is usually a scalar stochastic reward or loss, which is only related to the action (probe) taken by the algorithm. In contrast, in network tomography, feedback consists of the stochastic observations (e.g., a random vector) generated by the probes via the compound effect of the links across a set of paths traversed by the probe.
% \mo{if possible, add one sentence on why combinatorial bandits cannot capture this.} 
The observations often have a non-linear relation with the network link parameters, which may correlate with each other, making the combinatorial bandits framework~\citep{chen2013combinatorial} difficult to apply. This complex feedback mechanism makes the design of the online algorithm and the analysis of its regret challenging.

% whic  are not directly related to the network parameters. This complex feedback mechanism makesdesigningf the online algorithm andanalyzingf its convergence rate challenging.

% In contrast, in the network tomography task, the feedback is the stochastic observations generated by the probes, which are not directly related to the network parameters.

% \check~Differ from the linear regresson, and complex feedback model

\subsection{Contributions}
This paper considers the network tomography task as an online sequential decision making problem and focuses on two new perspectives:
(1) A general framework that allows us to handle both classical and quantum network tomography tasks~\citep{de2022quantum,de2023characterization};
(2) fine-grained analysis, where we develop new algorithms to perform online network tomography with regret guarantees. Specifically, this paper makes the following contributions:

\begin{figure}[tb]
    \centering
    \begin{tikzpicture}
        % Section 3 box: Algorithm design for OPAL
        \node[draw=blue!40!black, fill=blue!15, rounded corners=5pt,
            minimum width=6.45cm, minimum height=0.95cm, line width=1pt]
        (alg) at (0, 1.4) {};

        \node[above=of alg, yshift=-2cm, align=left]
        {\fontsize{10}{10}\selectfont
            \textcolor{blue!40!black}{\textbf{Online Probe Allocation (\opal)} (\S\ref{sec:algorithm})}
            \\[-2pt]
            \textcolor{blue!40!black}{Algorithm for Network Tomography
            }
        };

        % ----------------------------------------------------------
        % Regret analysis box
        \node[draw=green!30!black, fill=green!30, rounded corners=5pt,
            minimum width=6.45cm, minimum height=1.3cm, line width=1pt]
        (analysis) at (0, -0.1) {};

        \node[above=of analysis, yshift=-2.3cm, align=left]
        {\fontsize{10}{10}\selectfont
            \textcolor{green!30!black}{\textbf{Regret analysis of \opal}
                (\S\ref{sec:analysis})}
            \\[-2pt]
            \textcolor{green!30!black}{Condition~\ref{cond:lipschitz}: Lipschitz continuity}
            \\[-2pt]
            \textcolor{green!30!black}{Condition~\ref{cond:finite-confidence-interval}:
                Estimator concentration}};

        % ----------------------------------------------------------
        % Case Study 1
        \node[draw=orange!40!black, fill=orange!30, rounded corners=5pt,
            minimum width=3.855cm, minimum height=0.84cm, line width=1pt]
        (cs1) at (-2, -1.5) {};

        \node[above=of cs1, yshift=-1.9cm, align=left]
        {\fontsize{10}{10}\selectfont
            \textcolor{orange!40!black}{\textbf{Case Study 1}
                (\S\ref{sec:case-study-classical}):}
            \\[-2pt]
            \textcolor{orange!40!black}{Classical Unicast}};

        % Case Study 2
        \node[draw=orange!40!black, fill=orange!30, rounded corners=5pt,
            minimum width=3.855cm, minimum height=0.84cm, line width=1pt]
        (cs2) at (2, -1.5) {};

        \node[above=of cs2, yshift=-1.95cm, align=left]
        {\fontsize{10}{10}\selectfont
            \textcolor{orange!40!black}{\textbf{Case Study 2}
                (\S\ref{sec:case-study-quantum}):}
            \\[-2pt]
            \textcolor{orange!40!black}{Quantum Star Multicast}};

        % ----------------------------------------------------------
        % Arrows connecting the boxes
        \draw[->, thick, -Triangle] (alg)      -- (analysis);
        \draw[->, thick, -Triangle] (analysis) -- (cs1);
        \draw[->, thick, -Triangle] (analysis) -- (cs2);
    \end{tikzpicture}
    \caption{Paper structure and contribution summary
    }
    \label{fig:paper-structure}
\end{figure}

$\blacktriangleright$ First, we formulate a general network tomography task as an online experimental design problem (Section~\ref{sec:model}).
One attractive feature of this general model is that it
% This general model 
covers both classical and quantum tomography scenarios.
The problem of adaptively collecting probe observations and efficiently inferring the network parameters is formulated as a \emph{regret} minimization problem, where the objective is to minimize the difference between the performance of an online algorithm and the optimal policy in terms of optimal experimental design (OED) criteria~\citep{pukelsheim2006optimal}. To our knowledge, this paper is the first to design network tomography algorithms for the purpose of minimizing regret.

% In this paper, we make the following \emph{contributions:}

$\blacktriangleright$ Second, we devise the \underline{O}nline \underline{P}robe \underline{Al}location (\opal) algorithm for the general network tomography task (Section~\ref{sec:algorithm}).
The design of this algorithm is inspired by the idea of ``chasing optimal bounds,'' which is the backbone of a class of online learning algorithms with optimal performance~\citep{garivier2016optimal,combes2017minimal}.
We refer to the actual probe allocation ratio (the allocation fraction of each probe) of the algorithm as \emph{actual allocation}
and the optimal allocation ratio based on the estimated network parameters as \emph{estimated optimal allocation}.
\opal uses the difference between the actual allocation and the {estimated optimal allocation}
% \todo{define \emph{estimated optimal allocation} here, and differentiate it from the ground truth optimal allocation. time-varying} 
to direct the algorithm to sample the most inadequately performed probes (compared with the estimated allocation ratio) sequentially.

$\blacktriangleright$ Third, we propose a new analysis framework for studying the regret of \opal (Section~\ref{sec:analysis}). Analyzing the regret of \opal is challenging:
The chased estimated optimal allocation---determined by the estimated network parameters whose accuracy depends on the past probing decisions---varies over time.
% whose accuracy is also {affected by the probing decisions.}
Furthermore, the online algorithm's actual allocation depends on all its past probe decisions, which lags behind the time-varying estimated optimal allocation it chases.
We propose two critical conditions (Section~\ref{subsec:network-tomography-conditions}): \emph{Lipschitz continuity} (Condition~\ref{cond:lipschitz}) and \emph{confidence interval concentration} (Condition~\ref{cond:finite-confidence-interval}), that abstract the property of any network tomography task.
With these two theoretical conditions, we then rigorously prove that \opal achieves a \(\tilde{O}((\xi T)^{-\gamma_{\min}} + \xi \1{\exists m\in\mathcal{M}: \phi^*_m < \xi})\) regret (convergence rate), where \(T\)
% \ting{don't think ``the'' here and after is needed}
is the total number of probes, \(\gamma_{\min} \in (0,1/2)\) is the smallest concentration rate of Condition~\ref{cond:finite-confidence-interval}, \(\xi\in (0,1)\) is an input constant of \opal, \(\1{}\) is an indicator function, \(\phi_m^*\) is the optimal allocation ratio of the \(m\)-th probe, and the \(\tilde{O}(\cdot)\) notation hides poly-logarithmic \(\log T\) terms.
% \ting{usually $\tilde{O}$ hides poly-logarithmic factors}.
% While one can simply set \(\xi = T^{-\nicefrac{\gamma_{\min}}{(1+\gamma_{\min})}}\) to maintain a worst-case \(\tilde O(T^{-\nicefrac{\gamma_{\min}}{(1+\gamma_{\min})}})\) regret, in scenarios with a positive lower bound for \(\phi_m^*\) of all probes, the second term can be set to $0$
% % \ting{you mean ``set to $0$''?} 
% by a constant \(\xi\), yielding a \(\tilde O(T^{-\gamma_{\min}})\) regret.
% \todo{\mo{add one sentence and explain what is interesting in this regret. also, any idea on how to simplify the second term? the indicator function in the intro sounds too much.}}

$\blacktriangleright$ Last but not least, we present
% \ting{perform} 
two case studies on classical and quantum networks to demonstrate the effectiveness of our proposed framework and algorithms from both theoretical and empirical aspects (Sections~\ref{sec:case-study-classical} and~\ref{sec:case-study-quantum}).
For \emph{classical network tomography} (Section~\ref{sec:case-study-classical}), we first consider loss tomography in a classical star network
% classical loss star network \ting{rewording: ``loss tomography in a classical star network''} 
with unicast probes and verify that the two theoretical conditions in our analysis framework are indeed satisfied in this tomography task, providing a \(\tilde{O}(T^{-\nicefrac{1}{2}})\) regret of \opal for this classical tomography task (Section~\ref{subsec:classical-unicast-case}).
Then, we conduct extensive experiments to corroborate the empirical performance of \opal in classical network tomography for both star and general networks (Section~\ref{subsec:empirical-classical}).
For \emph{quantum network tomography} (QNT) (Section~\ref{sec:case-study-quantum}), we verify the two theoretical conditions for bit-flip channel tomography in a quantum star network---the state-of-the-art quantum network studied in QNT literature~\citep{de2022quantum,de2023characterization,de2024quantum}---with multicast probes and prove a \(\tilde{O}(T^{-\nicefrac{1}{3}})\) regret for this quantum tomography task (Section~\ref{subsec:quantum-multicast-case}).
Finally, we provide the empirical results of \opal for this quantum network tomography task (Section~\ref{subsec:empirical-quantum}) and show that \opal outperforms alternative baseline policies.
The main structure of this paper is illustrated in Figure~\ref{fig:paper-structure}.

\subsection{Related Works}\label{sec:related-works}

Network tomography has been studied for decades~\citep{caceres1999multicast,duffield2000multicast,presti2002multicast,coates2002internet,bu2002network,tsang2003network,xi2006estimating,gu2010optimal,gopalan2011identifying,ma2014node,he2015fisher,de2022quantum,de2023characterization}, and most prior works focus on network parameter estimation. For example, for delay network tomography~\citep{tsang2003network}, \citet{presti2002multicast,gu2010optimal} study estimator design for unicast probes,
and \citet{duffield2000multicast} study the multicast probes.
For loss network tomography, \citet{caceres1999multicast} devise the maximum likelihood estimators, and~\citet{xi2006estimating} propose EM-based estimations, both of which are for multicast probes.
Recently, network tomography in quantum networks is studied by~\citet{de2022quantum,de2023characterization}, where, due to the complexity of the quantum network and information, only multicast probes on a quantum star network have been considered.
The estimators and probes studied in these previous works form a solid foundation for our study from the online experimental design perspective. But, as our work focuses on the sequential probe selection and theoretical guarantees, the existing works are not directly applicable to our study.

Optimal experimental design (OED) aims to design experiments to achieve specific statistical criteria~\citep{pukelsheim2006optimal}. It has been applied in a wide range of applications, including clinical trial~\citep{haines2003bayesian}, synthetic biology~\citep{gilman2021statistical}, and chemistry~\citep{leardi2009experimental}, etc.
OED for network tomography has also been studied, ~\citet{he2015fisher,kveton2022optimal,gu2010optimal,xi2006estimating}, to name a few.
% \todo{Check from Xia, and \citep{he2015fisher}'s related work on this topic; check works citing~\citep{he2015fisher}}
% Among them, \citet{he2015fisher} study A- and D-optimal designs with respect to the Fisher information matrix,
% and~\citet{kveton2022optimal} study the E- and A-optimal designs with respect to the sample covariance matrix in linear regression.
Among them, the most related work to this paper is~\citet{he2015fisher}, where they propose a heuristic iterative algorithm in terms of OED for the unicast network tomography task, and only an asymptotic convergence guarantee is provided.
In this paper, we make a concrete step forward by proposing a rigorous \textit{online} experimental design algorithm, \opal, for network tomography, and we are the \emph{first} to provide a fine-grained regret analysis for the network tomography task.
Outside the network tomography literature, online experimental design for linear regression is studied by~\citet{fontaine2021online,allen2021near}. However, the linear regression model is too simple to apply to network tomography, whose feedback can be non-linear and correlated.

\section{Modeling and Formulation
  % Model Formulation 
  % \ting{Modeling and Formulation or Problem Formulation}
 }\label{sec:model}

In this section, we first formulate the general network tomography problem in Section~\ref{subsec:general-model}
and then present the classical loss and quantum bit-flip network tomography problems in Sections~\ref{subsec:loss-model} and~\ref{subsec:flip-model}, respectively, to further elaborate on the tomography tasks.
Preliminaries of the optimal experimental design are introduced in Section~\ref{subsec:objective}. Lastly, we cast the network tomography problem as an online experimental design problem in Section~\ref{subsec:online-learning}. We summarize the notations used in this paper in Table~\ref{tab:notation}.

\begin{table}[tb]
    \centering
    \caption{Summary of Notations}
    \label{tab:notation}
    \resizebox{\columnwidth}{!}{
        \begin{tabular}{ll}
            \toprule
            \textbf{Symbol}
                                                         & \textbf{Description}
            \\
            \midrule
            $\mathcal{G} = (\mathcal{N}, \mathcal{L})$
                                                         & Network graph: nodes $\mathcal{N}$, links $\mathcal{L}$
            \\
            $N$
                                                         & Number of nodes ($|\mathcal{N}|$)
            \\
            $L$
                                                         & Number of links ($|\mathcal{L}|$)
            \\
            $M$
                                                         & Number of probes
            \\
            $\mathcal{M}$
                                                         & Set of probing experiments (probes), $|\mathcal{M}| = M$
            \\
            $X_\ell$                                     & Random variable for link $\ell$ (e.g., loss, bit-flip)       \\
            $\bm Y_m$                                    & Observation vector from probe $m$                            \\
            $\mathcal{N}_m$                              & Destination nodes for probe $m$                              \\
            $f_m(\bm Y_m; \bm\mu)$                       & Probability density/mass function for probe $m$              \\
            $T$                                          & Total number of decision rounds                              \\
            $m_t$                                        & Probe selected at round $t$                                  \\
            $\mathcal{H}_t$                              & History up to round $t$: $\{(m_s, \bm Y_{m_s, s})\}_{s=1}^t$ \\
            $S_{m,t}$                                    & Number of times probe $m$ has been performed up to $t$       \\
            $\bm S_t$                                    & Vector of all $S_{m,t}$ at time $t$                          \\
            $\bm\phi_t$                                  & Actual allocation at $t$: $(S_{m,t}/t)_{m\in\mathcal{M}}$    \\
            $\bm\phi$                                    & Generic allocation vector (probability simplex)              \\
            $\bm\phi^*$                                  & Optimal allocation for given $\bm\mu$ and OED criterion      \\
            $\bm\mu = (\mu_\ell)_{\ell \in \mathcal{L}}$ & Vector of unknown link parameters                            \\
            $\hat{\bm\mu}_t$                             & Estimated link parameters at time $t$                        \\
            $\hat{\bm\phi}_t^*$                          & Estimated optimal allocation at time $t$                     \\
            $\texttt{MLE}(\mathcal{H})$                  & Maximum likelihood estimator given history $\mathcal{H}$     \\
            $\bm I_m(\bm\mu)$                            & Fisher information matrix for probe $m$                      \\
            $\bm I(\bm\mu; \bm\phi)$                     & Fisher information matrix under allocation $\bm\phi$         \\
            $F(\bm\mu; \bm\phi)$                         & OED criterion (e.g., A-optimal: $\tr(\bm I^{-1})$)           \\
            $R_T$                                        & Regret: $F(\bm\mu; \bm\phi_T) - F(\bm\mu; \bm\phi^*)$        \\
            $\gamma_{\ell,m}$                            & Concentration rate exponent for link $\ell$, probe $m$       \\
            $c_{\ell,m}$                                 & Constant in confidence interval for link $\ell$, probe $m$   \\
            $\xi$                                        & Initial allocation lower bound parameter                     \\
            $\Delta^{M-1}$                               & $(M-1)$-dimensional probability simplex                      \\
            \bottomrule
        \end{tabular}
    }
\end{table}

% \ting{It might be better to provide a notation table (in appendix if no space), to avoid confusions such as difference between different versions of $\bm{\phi}$}

\subsection{General Network Tomography
    % Model \ting{omitting ``Model''?}
}\label{subsec:general-model}

Let \(\mathcal{G}=(\mathcal{N},\mathcal{L})\) denote the network topology with a set of \(N{\in\mathbb{N}^+}\)
nodes \(\mathcal{N}\coloneqq \{1,2,\dots,N\}\) and a set of \(L\in\mathbb{N}^+\) links \(\mathcal{L}\coloneqq \{1,2,\dots, L\}\).
Associated with each link \(\ell\in\mathcal{L}\) is a stochastic model \(X_\ell\) representing the characteristic of the link, e.g., a Bernoulli random variable for signal loss~\citep{caceres1999multicast}, or a Gaussian random variable for packet delay variation (PDV)~\citep{tsang2003network},
or the noise level or fidelity for quantum channel~\citep{nielsen2010quantum}.
We refer to a connection between two nodes (source and destination) across several consecutive links as a \emph{path} in the network.
The tomography task assumes that the link model is known, but that its parameters, denoted as a vector \(\bm\mu \coloneqq (\mu_\ell)_{\ell \in \mathcal{L}}\) (e.g., the link success rates),
% \ting{use something more meaningful such as ``the link success rates''}, 
are unknown.
% \mg{you can remove the --- if you add link before}---link parameters in the network---

The network tomography task aims to estimate these unknown link parameters via a given set of \(M\in\mathbb{N}^+\) probing experiments \(\mathcal{M}\coloneqq \{1,2,\dots, M\}\) (called \emph{probes} later), examples of which are unicast~\citep{he2015fisher} and multicast~\citep{caceres1999multicast}.
Each probe \(m\in\mathcal{M}\) is associated with one path or a tree rooted at a source node (sender) whose leaves are the set of destination nodes (receivers) denoted as \(\mathcal{N}_{m}\subseteq \mathcal{N}\) (see Figure~\ref{fig:star_network} for an example).   The feedback \(\bm Y_m \in \mathbb{R}^{\abs{\mathcal{N}_m}}\) from the probing experiment \(m\) consists of received signals at the destination nodes in \(\mathcal{N}_m\).
Denote the probability of observing feedback \(\bm Y_m\) given network parameters \(\bm\mu\) and probing experiment \(m\) as \(f_m(\bm Y_m; \bm\mu)\), where \(f_m(\cdot; \bm\mu)\) is the probability density function.
The goal is to estimate the network parameters based on the feedback from the probing experiments.
For ease of presentation, we follow the common identifiability assumption in network tomography literature~\citep{bu2002network,he2015fisher} that the network parameters \(\bm \mu\) can be uniquely determined from feedback from
an appropriate set of
% these \ting{sufficiently many?} 
probing experiments in \(\mathcal{M}\).
Given a set of \(T\in\mathbb{N}^+\) probe-feedback pairs \(\mathcal{H}\coloneqq\{(m_t, \bm Y_{m_t, t})\}_{t=1}^T\), we define the log-likelihood function as \(L(\bm \mu;\mathcal{H}) \coloneqq \sum_{t=1}^T  \log  f_{m_t}(\bm Y_{m_t, t}; \bm \mu).\)
The maximum likelihood estimation (MLE) estimator for the network parameters \(\bm\mu\) is defined as \(\texttt{MLE}(\mathcal{H}) \coloneqq \argmax_{\bm\mu} L(\bm \mu;\mathcal{H}).\)\footnote{Throughout this paper, we assume the outputs of the \(\argmin\) and \(\argmax\) functions are unique; otherwise, one can break the tie arbitrarily.}

Next, we introduce classical loss network tomography (Section~\ref{subsec:loss-model}) and quantum bit-flip network tomography (Section~\ref{subsec:flip-model}) as examples to illustrate network tomography.

\subsection{Classical Loss Network Tomography}\label{subsec:loss-model}

% \todo{shorter version here and give the lone version later. Parallel of the quantum cases}
% \mo{this section to me sounds more like an illustrative example using a 4-star network to further clarify the model and a better title may reflect that. How about something like: \textit{An Illustrative Example on a 4-star Loss Network} \xw{I will revise this after finish both case studies.}}
% \paragraph{Loss Network}

In a loss network, associated with each link \(\ell\in\mathcal{L}\) is a Bernoulli random variable \(X_\ell\sim \mathcal{B}(\mu_\ell)\) representing loss on that link, where \(\mu_\ell\in (0,1)\) is the probability that a transmitted signal is \emph{not} lost, i.e., \(\mathbb{P}(X_\ell = 1)=\mu_\ell\).
% the link success rate of link \(\ell,\) i.e., \(1-\mu_\ell\) is the \emph{loss} rate.
We assume these Bernoulli random variables are independent across links and time.
% When one sends a signal across link \(\ell\), the signal is \emph{not} lost with probability \(\mu_\ell,\) i.e., \(\mathbb{P}(X_\ell = 1)=\mu_\ell\).
As an example, a star loss network with \(N=4\) nodes and \(L=3\) links is depicted in Figure~\ref{fig:star_network}.
We  theoretically and empirically investigate this model in Section~\ref{sec:case-study-classical}.

\begin{figure}[tb]
    \centering
    \begin{tikzpicture}
        \node[circle,fill=black] at (360:0mm) (center) {} node[below,yshift=-1mm] {$4$};
        \foreach \n in {1,...,3}{
                \node[circle,fill=black,minimum size=2mm] at ({\n*360/3 - 30}:2cm) (n\n) {};
                \draw[line width=1.5pt] (center)--(n\n) node[midway,fill=white] {$X_{\n}$} node[below left] {$\n$};
            }
        \draw[<->,bend right=30,dashed,line width=1pt,color=blue] (n2) to (n1) node[midway,below] {};
        % \draw[<->,bend left=30,dashed,line width=1pt,color=blue] (n2) to (n3) node[midway,below] {};
        % \draw[<->,bend right=30,dashed,line width=1pt,color=blue] (n1) to (n3) node[midway,below] {};

        % \draw[->,bend right=10,dashed,line width=1pt,color=orange] (n2) to (n1) node[midway,below] {};
        % \draw[->,bend left=10,dashed,line width=1pt,color=orange] (n2) to (n3) node[midway,below] {};

        \draw[->,bend left=15,dotted,line width=1.5pt,color=red] (n1) to (n2) node[midway,below] {};
        \draw[->,bend right=15,dotted,line width=1.5pt,color=red] (n1) to (n3) node[midway,below] {};

        % \draw[->,bend left=5,dashed,line width=1pt,color=red!50!black] (n1) to (n2) node[midway,below] {};
        % \draw[->,bend right=5,dashed,line width=1pt,color=red!50!black] (n1) to (n3) node[midway,below] {};
    \end{tikzpicture}
    \caption{A star network with \(4\) nodes: Denote the central node as \(4\), and the peripheral nodes are denoted as \(1,2,3.\) The solid lines represent the loss characteristics of the network (modeled by Bernoulli random variables \(X_1, X_2, X_3\)), the dashed blue line represents a unicast probe (probe at node \(1\), receive at node \(2\), or vice versa), and the two dotted red lines represent a multicast probe (probe at node \(1\), duplicate at intermediate node \(4\), receive at nodes \(2\) and \(3\)).}
    \label{fig:star_network}
\end{figure}

Unicast~\citep{he2015fisher} and multicast~\citep{caceres1999multicast} are two common probes in the loss networks. To elaborate on their details, let us consider the star network in Figure~\ref{fig:star_network} and focus on end-to-end tomography where one only has access to the peripheral nodes \(1,2,3\), for either sending or receiving signals from or to any other peripheral nodes.
A unicast probe refers to sending signals across a path from a source node to a single destination node, and a multicast probe refers to sending
a signal from a source node down a tree to multiple destination nodes, where the signal is duplicated at bifurcation nodes within the tree.
For example, in Figure~\ref{fig:star_network},
the dashed blue arrow between nodes \(1\) and \(2\) represents a unicast probe, sending signals from node \(1\), across central node \(4\), and then received at node \(2\) or vice versa from node \(2\) to node \(1\), denoted as \((1\leftrightarrow 2)\) since a unicast probe is \emph{symmetric}.
The probability of receiving a signal at node \(2\) when sent from node \(1\) is \(f_{m=(1\leftrightarrow 2)}(\bm Y_m = 1; \bm \mu) = \mathbb{P}(X_1 X_2 = 1) = \mu_1\mu_2.\)
% \mo{why double arrow? \xw{because there are symmetric}} 
In this \(3\)-link star network, there are in total three possible unicast probes: \((1\leftrightarrow 2), (1\leftrightarrow 3), (2\leftrightarrow 3).\)
The two dotted red arrows in Figure~\ref{fig:star_network} together represent a multicast probe.
Here node \(1\) is the source node and send signals to nodes \(2\) and \(3.\)
% \mg{You can simplify notation for the multicast probe with something like $1\to (2, 3)$. }
Then, the probability of receiving signals at both nodes \(2\) and \(3\) is \(f_{m=(1\to (2, 3))}(\bm Y_m = (1,1); \bm \mu) = \mathbb{P}(X_1X_2=1, X_1X_3 = 1) = \mu_1\mu_2\mu_3,\) and the probability of receiving signals at node \(2\) but not at node \(3\) is \(f_{m=(1\to (2, 3))}(\bm Y_m = (1,0); \bm \mu) = \mathbb{P}(X_1X_2=1, X_1X_3 = 0) = \mu_1\mu_2(1-\mu_3),\) and so on.
Using the feedback from the unicast and multicast probes, one can infer the loss characteristics of all links---parameters \(\mu_1, \mu_2, \mu_3\) of the star network in Figure~\ref{fig:star_network}.

\subsection{Quantum Bit-flip Network Tomography}\label{subsec:flip-model}\label{subsec:quantum-tomography-model}

\textbf{Background of General quantum network tomography.}
Quantum networks are communication systems used to interconnect quantum devices such as quantum processors. Despite differences between the physical properties of classical and quantum information systems, network tomography extends naturally to the quantum setting. Similar to the classical case of learning link parameters, quantum network tomography (QNT) addresses the characterization of quantum channels representing links in a quantum network through end-to-end measurements~\cite{de2024quantum}.
% The difference between quantum and classical network tomography lies on the description of quantum channels and their compositions.
% Quantum channels are linear, Completely-Positive Trace preserving maps~\cite{nielsen2010quantum}, which can be expressed as $\mathcal{E}(\rho) = \sum_{k} M_{k}^{\dagger} \rho M_{k}$,
% where $\rho$ is a density matrix representing a quantum state, i.e., the input to the channel, and $\{M_k\}$ are the Kraus operators of channel $\mathcal{E}$.
% It is possible, in principle, to characterize quantum channels without any \textit{a priori} assumptions on their mathematical form with quantum process tomography~\cite{}. Operating under this general scenario is not practical, since the number of parameters to be estimated to fully characterize an arbitrary quantum process scales exponentially with the input size.
QNT is defined under the assumption that the quantum channel representing link $\ell \in \mathcal{L}$, has the parametric form
$\mathcal{E}(\rho) = \sum_{k} M_{\ell k}^{\dagger}(\mu_{\ell}) \rho M_{\ell k}(\mu_{\ell})$, where $\mu_{\ell} \in \mathbb{R}^{d_{\ell}}$ for $d_{\ell} \in \mathbb{N}^+$ are the parameters,
$\rho$ is an input state,
$\{M_{\ell k}\}$ denote the Kraus operators for $\mathcal{E}_{\ell}$.
The goal of QNT is to estimate $\mu_{\ell}$ for each link $\ell \in \mathcal{L}$.

QNT probes consist of
(a) a quantum algorithm for the distribution of quantum states in a network and
(b) quantum measurement operators performed in the end nodes for each distributed state. The algorithm for state distribution simultaneously specifies which links and intermediate node operations are used for distribution. Like the classical case, the output of quantum measurements performed in the end nodes yields a random vector $\bm Y_m$ whose statistics depend on network parameters $\bm\mu$.

\begin{figure}[tb]
    \centering
    \includegraphics[width=0.75\linewidth]{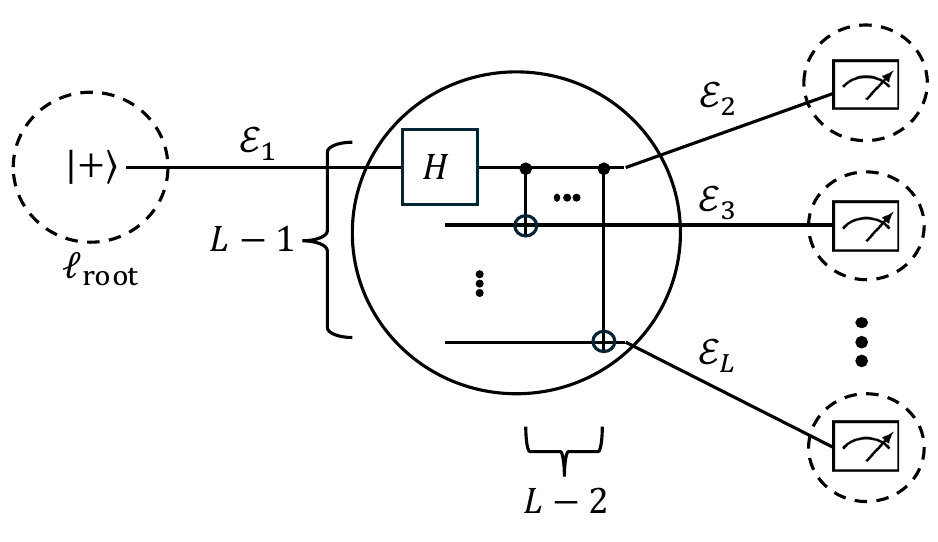}
    \caption{Root independent multicast for quantum star network tomography. The root $\ell_{\text{root}} = 1$ prepares state $\ket{+}$, transmits it to the central node, and an $(L-1)$ qubit state is prepared from the application of CNOT gates. Each output is transmitted to an end node via the links representing the quantum channels $\mathcal{E}_{\ell}$, for $\ell = 2,\ldots, L$. $\sigma_{\text{Z}}$ basis measurements performed in the end nodes generate boolean random variables $\bm Y_{\ell_{\text{root}}} \in \{0,1\}^{L-1}$.}
    \label{fig:root-independent}
\end{figure}

\textbf{Quantum bit-flip network tomography.}
In this work, we apply our online experimental design algorithm to optimize the quantum network tomography of bit-flip channels in a quantum network with a star topology~\cite{de2022quantum, de2023characterization}. Quantum network links are assumed to be \emph{single-qubit, bit-flip} channels of the form
$\mathcal{E}(\rho) = \mu_{\ell} \rho + (1 - \mu_{\ell}) \sigma_{\text{X}} \rho \sigma_{\text{X}}$, where $\mu_{\ell} \in (0, 1)$ denotes the probability of no bit-flip
% \ting{seems to be the non-bit-flip probability?} 
for link $\ell \in \mathcal{L}$, and $\sigma_{\text{X}}$ is the Pauli X operator~\cite{nielsen2010quantum}. We utilize probes defined in~\cite{de2023characterization} that rely on multipartite quantum state distribution, focusing on the \textit{Root Independent} distribution strategy depicted in Fig~\ref{fig:root-independent}, which we now describe.
% Let $N$ denote the number of nodes in the star network. 
The number of nodes \(N\) in the star network equals the number of links \(L\) plus one, i.e., \(N=L+1\).
The process starts with the selection of a root node (link) $\ell_{\text{root}}$, which sends a qubit in state $\ket{+}$ to the intermediate node.\footnote{We use Dirac notation for quantum states sparingly in this paper, such as, \(\ket{+}\) and \(\ket{0}\), for rigor. However, understanding these notations is not necessary for following the content. We refer interested readers to~\citet{nielsen2010quantum}.} Upon receiving the qubit from $\ell_{\text{root}}$, the intermediate node applies a Hadamard gate~\cite{nielsen2010quantum} in the qubit and performs an $(L - 1)$-qubit CNOT gate on $(L - 2)$ ancilla qubits initialized in state $\ket{0}$ controlled by the received qubit. Each output of the CNOT gate is sent to the end nodes of the star, excluding $\ell_{\text{root}}$, such that an $(L - 1)$-qubit quantum state is prepared in the rest \((L-1)\) end nodes. The process terminates with $\sigma_{\text{Z}}$-basis measurements performed on the qubits received by the end nodes.
The probability of observing a \(\ket{b_2\,b_3\dots\,b_L}\) quantum state (for \(b_\ell \in \{0,1\}\)) at the \(L-1\) end nodes after sending \(\ket{+}\) at node \(\ell_{\text{root}} = 1\) is \(f_{m=1}(Y_m = \ket{b_2\,b_3\dots\,b_L}; \bm\mu) = \prod_{\ell=2}^L \mu_{\ell}^{b_\ell} (1 - \mu_{\ell})^{1 - b_{\ell}}\).
% The online experimental design algorithm can be applied to different quantum network tomography probes, such as the additional multipartite probes defined in~\cite{de2023characterization}.

Later, we provide a detailed case study of applying our new algorithm to this quantum bit-flip network tomography model in Section~\ref{sec:case-study-quantum}.
The reason for choosing the simple quantum bit-flip star network is mainly because the current research on QNT is still quite limited, and the QNT protocols for the bit-flip star are already state-of-the-art in the literature~\citep{de2022quantum,de2023characterization,de2024quantum}.
We emphasize that this paper focuses on devising an online experimental design algorithm to optimize network tomography procedure. Once new QNT protocols are proposed for more practical quantum network infrastructures, one can apply our algorithm to optimize the probe allocation over these protocols.

\subsection{Preliminaries of Optimal Experimental Design} \label{subsec:objective}

\textbf{Fisher Information Matrix and Cram\'er-Rao Bound.}
The Fisher information matrix (FIM) of probing the network via probe \(m\in\mathcal{M}\) regarding link parameters \(\bm\mu\) is as follows, \begin{align}
     & \bm I_m(\bm\mu) \coloneqq
    \\
     & \,\, \left\{ \mathbb{E}_{\bm Y_m}\left[\left. \frac{\partial \log f_m(\bm Y_m; \bm\mu) }{\partial \mu_\ell} \frac{\partial \log f_m(\bm Y_m; \bm\mu) }{\partial \mu_{\ell'}} \right\rvert \bm\mu\right] \right\}_{(\ell, \ell') \in \mathcal{L}\times \mathcal{L}}.
\end{align}
Let \(\bm\phi \in\Delta^{M-1}\) denote the allocation ratio (probability distribution) of the number of times of each of the \(M\) probes \(S_m\) over the total number of times of probes,
% \ting{``times of probes'' sounds bit strange}, 
where \(\Delta^{M-1}\) is the probability simplex in \(\mathbb{R}^M.\)
%Let \(\bm I(\bm\mu; \bm\phi)\) denote the FIM of probing the network via the mixture of multiple probes according to an allocation \(\bm\phi\) with respect to the network parameters \(\bm\mu.\)
{Let \(\bm I(\bm\mu; \bm\phi)\) denote the expected FIM of a single probe chosen according to distribution $\bm\phi$  with respect to the network parameters \(\bm\mu.\)}
Given the linearity of FIM, we have \(
\bm I (\bm\mu; \bm\phi) = \sum_{m=1}^M \phi_m \bm I_m(\bm\mu).
\)
The Cram\'er-Rao bound (CRB) states that the covariance matrix of any unbiased estimator \(\hat{\bm\mu}\coloneqq (\hat \mu_\ell)_{\ell \in \mathcal{L}}\) of the network parameters \(\bm\mu\) (under allocation \(\bm \phi\)) is lower bounded by the inverse of the FIM, i.e., \(\Cov(\hat{\bm\mu}) \succcurlyeq \bm I^{-1}(\bm\mu; \bm\phi)\)~\citep{lehmann2006theory}, where \(\bm A\succcurlyeq \bm B\) denotes that \(\bm A \setminus \bm B\) is a positive semi-definite matrix.
In particular, when the MLE estimator \(\hat\mu_{\text{MLE}}\) is unbiased, it achieves the CRB, i.e., \(\Cov(\hat{\bm\mu}_{\text{MLE}}) = \bm I^{-1}(\bm\mu; \bm\phi)\).

% Taking the trace of the matrices on both sides of the CRB yields that the mean square error (MSE) of the unbiased estimator \(\hat{\bm\mu}\) is lower bounded by the trace of the inverse of the FIM, i.e., \(\mathbb{E}[\norm{\hat{\bm\mu} - \bm\mu}^2] \ge \tr(\bm I^{-1}(\bm\mu; \bm\phi))\).
% In addition to the MSE, estimation accuracy can be measured from many different perspectives, e.g., the width of the confidence interval of the link parameters.
% These optimization objectives can be cast as different optimal experimental design problems~\citep{pukelsheim2006optimal}.

Below, we introduce two typical optimal experimental design objectives, A-Optimal and D-Optimal,
and our model can handle other OED objectives.
\textit{A-Optimal} experimental design~\citep{pukelsheim2006optimal,lopez2023optimal}  minimizes the trace of the inverse of the FIM \(\bm I (\bm\mu; \bm\phi)\) with respect to the allocation \(\bm\phi\), i.e. \(\tr(\bm I^{-1}(\bm\mu;\bm\phi))\).
This quantity is a lower bound for the average of the variances of the estimated link parameters \(\sum_{\ell\in\mathcal{L}} \text{Var}(\hat\mu_\ell)\), when the estimator is unbiased.
\textit{D-Optimal} experimental design~\citep{lopez2023optimal} minimizes the reciprocal of the determinant of the FIM \(\bm I (\bm\mu; \bm\phi)\) with respect to the allocation \(\bm\phi\), which minimizes the volume of the confidence ellipsoid of the estimated parameters.
% Minimizing the volume of the confidence ellipsoid of the estimated parameters is equivalent to the 
% Denote \(\bm\phi^*_{\text{D}} \coloneqq \argmin_{\bm\phi} \left( \det(\bm I (\bm\mu; \bm\phi)) \right)^{-1}\) as the D-optimal allocation.

% A mixture allocation \(\bm\phi\), i.e., allocating resources to multiple nodes, is often better than allocating all resources to a single node.
% For example, the A-optimal objective \(\min \tr(\bm{I}^{-1}(\bm\mu;\bm\phi))\) is a complex non-linear function and its optimum is often not standard basis vectors \(\bm e_n\).
% But there are indeed some special cases that fixing on one source node is the optimal solution. For example, if there is one probe \(m^*\) such that  \(\bm I_{m^*}\succ \bm I_{m}\) for all other \(m\neq m^*\), where \(\bm A \succ \bm B\) means that \(\bm A - \bm B\) is positive definite, then \(\argmin_{\bm\phi} \tr(\bm I^{-1}(\bm\mu;\bm\phi)) = \bm e_{m^*}\), i.e., the optimal allocation is to keep choosing the probing experimental \(m\).

In this paper, we let \(F(\bm\mu;\bm\phi)\) denote a general optimal experimental design criterion and denote \(\bm\phi^*(\bm\mu) \coloneqq \argmin_{\bm\phi} F(\bm\mu;\bm\phi)\) as the optimal allocation function given parameter \(\bm\mu\) and OED criterion \(F\).\footnote{
    We slightly abuse the \(\bm\phi^*\) notation without the input \((\bm\mu)\) to denote the optimal allocation based on the actual parameters.
    % itself\ting{based on the true parameters?}.
}
For example, in the A-optimal scenario, we have \(F(\bm\mu; \bm\phi) = \tr(\bm I^{-1}(\bm\mu;\bm\phi))\), and in the D-optimal scenario, we have \(F(\bm\mu; \bm\phi) = \left( \det(\bm I (\bm\mu; \bm\phi)) \right)^{-1}\).

\floatname{algorithm}{Procedure}
\begin{algorithm}[tp]
    \caption{Online Experimental Design for Network Tomography}
    \label{proc:online-learning}
    \begin{algorithmic}[1]
        \Input \(\mathcal{M}\) (set of probes), \(\mathcal{N}\) (set of nodes), \(\mathcal{L}\) (set of links), \(T\) (number of decision rounds), \(F\) (optimal experimental design criterion)
        \For{each time slot \(t=1,2,\dots, T\)}
        \State Select a probe \(m_t\) from set \(\mathcal{M}\) to perform the network tomography once
        \State Observe \(\bm Y_{m_t,t} = (Y_{n,t})_{n\in\mathcal{N}_{m_t}}\) from all destination nodes \(n\in\mathcal{N}_{m_t, t}\) of the probe \(m_t\)
        \State Update the parameters and estimates \(\hat{\bm\mu}_t\)
        \EndFor
        \Output \(\hat{\bm\phi}^*\) (estimated optimal allocation)
    \end{algorithmic}
\end{algorithm}
\floatname{algorithm}{Algorithm}

\subsection{Online Experimental Design Problem and Objective Formulation}\label{subsec:online-learning}
% \mo{make sure we are consistent with using either online learning or online experiment(al) design. }
As tomography aims to reveal the values of the unknown network parameters \(\bm\mu\), we formulate the task as an online experimental design problem with a total of \(T\in\mathbb{N}^+\) decision rounds.
In each decision round \(t\le T\), the learner selects one tomography probe \(m_t\) and then observes the received signals \(\bm Y_{m_t,t}\) in the destination nodes \(n\in\mathcal{N}_{m_t}\) generated by this probe.
% For example, in the loss network, the \(Y_{n,t}=1\) means that the signal is received at the destination node \(n\), and \(Y_{n,t}=0\) means that the signal is lost.
The learner then updates the estimates of the network parameters based on the received signals until the end of the \(T\) decision rounds.
% \mg{Shouldn't we also say that the learner modifies $\phi$ at each iteration? \emph{A: Here is the decision procedure, so we do not need to mention the algorithm/learner's inner state, like \(\phi\)}} 
Let \(S_{m,t}\) denote the number of times that a specific algorithm performs probe \(m\) up to time slot \(t\),
and let \(\bm\phi_t = (S_{m,t}/t)_{m\in\mathcal{M}}\) denote the actual allocation ratio of an algorithm at time step \(t\).
The online experimental design problem is summarized in Procedure~\ref{proc:online-learning}.

As the goal of optimal experimental design (OED) is to minimize function \(F(\bm\mu;\bm\phi)\) in terms of the allocation \(\bm\phi\), we take the difference between \(F(\bm\mu; \bm\phi_T)\) of the concerned algorithm and the minimum \(F(\bm\mu;\bm\phi^*)\) with optimal allocation \(\bm\phi^*\) as our objective, that is, \[
    R_T \coloneqq F(\bm\mu; \bm\phi_T) - F(\bm\mu; \bm\phi^*).
\]
Following the online learning literature, we refer to this {metric} as \emph{regret} since it describes the gap between the performed algorithm and the hindsight optimal.\footnote{
    Some literature may refer to this regret definition as \emph{simple regret}, the performance gap at the end of the online learning process,
    to differentiate from the cumulative regret.
    % It differs from cumulative regret (the summation of all decision rounds), which is unsuitable in our setting, since \(\bm\phi_T\) already takes all probes up to the end time \(T\) into account. 
}
It is important to note that it differs from cumulative regret, which is the sum over all \(T\) rounds.
The cumulative regret is unsuitable in our setting since \(\bm\phi_T\) already considers all probes up to the end time \(T\).
Our regret measures an algorithm's convergence rate: The smaller the regret, the better the algorithm performs.

% For instance, in the A-Optimal design with \emph{efficient} estimators (i.e., one achieving the Cramér-Rao lower bound~\citep{fisher1922mathematical}),
% a small regret implies a small MSE of the estimated parameters.
% Especially in linear regression, the difference divided by the number of rounds \(T\), i.e., \({(\tr \bm I^{-1}(\bm\mu; \bm\phi_T) - \tr \bm I^{-1} (\bm\mu; \bm\phi^*))}/{T}\), equals the MSE of the parameter estimates~\citep{fontaine2021online}.

\section{\opal: An Online Experimental Design Algorithm for Network Tomography}\label{sec:algorithm}
% \ting{Algorithm for Online Experimental  Design?}

An online experimental design
algorithm aims is to estimate link parameters \(\bm\mu\) and approach the optimal allocation \(\bm\phi^*\) as close as possible.
% Given an OED criterion \(F\) and a network tomography model with parameters \(\bm\mu\), as discussed in Section~\ref{subsec:objective}, there exists an optimal allocation \(\bm\phi^* = \argmin_{\bm\phi} F(\bm\mu;\bm\phi)\) that most accurately
% % \ting{``accurately''?} 
% estimates the link parameters \(\bm\mu\).
However, without knowledge of link parameters
% \ting{replace ``network parameters'' by ``link parameters''} 
\(\bm\mu\) a priori, it is challenging for an online algorithm to achieve the exactly optimal allocation \(\bm\phi^*\) during a finite number of rounds, \(T\).
This difficulty is due to that the final allocation ratio \(\bm\phi_T\) is the accumulation of its whole probe sequence over all \(T\) decisions, and many of the early probe decisions, accounted for in the final ratio \(\bm\phi_T\), are made without accurate estimates of link parameters \(\bm\mu\). Therefore, the actual allocation $\bm\phi_T$ will inevitably deviate from $\bm{\phi}^*$.

We propose an online tomography algorithm that dynamically chases the {estimated} optimal probe allocation, called \underline{O}nline \underline{P}robe \underline{Al}location (\opal). The key idea of \opal is inspired by the ``chasing the (theoretical) optimal bound'' idea used in multi-armed bandits literature as exemplified by the optimal sampling algorithm that minimizes the regret of any structured bandits in~\citet{combes2017minimal} and the Track-and-Stop algorithm achieving the optimal sample complexity for best arm identification in~\citet{garivier2016optimal}.
% \todo{find a place to give some algorithmic novelty discussion of \opal}

While we use the high-level idea of ``chasing'' from prior literature, the design of \opal for the network tomography problem poses \emph{several unique challenges} that make it impossible to directly apply existing algorithmic designs to our problem, thus requiring a substantially different analysis than previous bandit algorithms~\citep{combes2017minimal,garivier2016optimal}. The main difference is that the feedback mechanism in network tomography is more complex: one scalar observation (e.g., obtained by a unicast probe) can be non-linearly related to multiple link parameters along a path, and a vector observation (e.g., obtained by a multicast probe) may depend on even more link parameters through a set of non-linear relations. Indeed, this non-linearity is far more involved than the simple feedback mechanism present in bandit learning problems, where one scalar feedback usually corresponds precisely to one unknown parameter or a linear combination of unknown parameters, and independence across observations is often assumed. The non-linearity and correlation in the feedback of network tomography make the problem more challenging, requiring a different algorithmic design and analysis.
% \todo{Think whether there are other challenges?}

We first present the basic version of \opal in Section~\ref{subsec:opach-algorithm} and then introduce a lazy update strategy in Section~\ref{subsec:lazy-update} to reduce the computational cost of the algorithm.

% \ting{Can Alg.~\ref{alg:chase} be presented in a way that is consistent with Procedure~\ref{proc:online-learning}? It is supposed to be a special case.} Thanks for the suggestions. Revised!

\subsection{\opal Algorithm Design}\label{subsec:opach-algorithm}

\begin{algorithm}[tp]
    \caption{Optimal Probe Allocation (\opal)}
    \label{alg:chase}
    \begin{algorithmic}[1]
        \Input \(\mathcal{M}\) (set of probes), \(\mathcal{N}\) (set of nodes), \(\mathcal{L}\) (set of links), \(T\) (number of decision rounds), \(F\) (optimal experimental design criterion), \(\texttt{MLE}\) (estimator),
        \(\bm S_0 \coloneqq (S_{m,0})_{m\in\mathcal{M}}\) (initial sample sizes)
        \Initial \(\mathcal{H}_0 \gets \emptyset\) (history), \(\hat{\bm\mu}_0 \gets \bm 0\) (parameter estimates), \(T_0 \gets \sum_{m\in\mathcal{M}} S_{m,0}\) (initial length)

        \For {each time step \(t = 1, 2, \dots, T\)} \label{line:for-loop}

        \If {\(\exists m\in\mathcal{M}: S_{m,t} < S_{m,0}\)}\label{line:initial-phase} \label{line:initial-probe-condition}
        \Comment{Initial phase}
        \State \(m_t\gets \text{randomly pick from }\{m{\in}\mathcal{M}: S_{m,t} {<} S_{m,0}\}\) \label{line:initial-probe}

        \Else
        \Comment{Chasing phase} \label{line:chase-phase}

        \State \(\hat{\bm\mu}_{t} \gets \texttt{MLE}(\mathcal{H}_{t-1})\) \Comment{Update the estimates \(\hat{\bm\mu}_t\)} \label{line:greedy-MLE}\label{line:initial-mle-estimate}

        \State \(\hat{\bm \phi}_t^* \gets \argmin_{\bm\phi} F(\hat{\bm\mu}_{t}; \bm\phi)\) \label{line:min-F}
        % \ting{$\hat{\bm\mu}_{t}$? \xw{Yes}}
        \RightComment{Estimated optimal allocation}
        \label{line:greedy-a-optimal}

        \State \(m_t \gets \argmax_{m\in\mathcal{M}} \hat\phi_{m,t}^* - (S_{m,t}/t)\) \RightComment{Chase the estimated allocation} \label{line:greedy-allocation}\label{line:chase}

        \EndIf

        % \State \(\bm S_{T_0} \gets \bm S_0,\, \mathcal{H}_{T_0}\gets \mathcal{H}_0\) \Comment{Update the counters and history}  \label{line:initial-phase-end}

        % \State Collect all observations \(\bm Y_{m,i}\) for \(i=1,\dots,S_{m,0}\)
        % \State \(\mathcal{H}_0\gets \mathcal{H}_0\cup \left( \bigcup_{i=1}^{S_{m,0}} (m, \bm Y_{m,i}) \right)\)

        \State Perform probe \(m_t\) and observe signals \(\bm Y_{m_t, t} = (Y_{n,t})_{n\in \mathcal{N}_{m_t}}\) for its destination nodes \label{line:greedy-observe}
        \State \(\mathcal{H}_{t} \gets \mathcal{H}_{t-1}\cup\left\{(m_t, \bm Y_{m_t, t})\right\}\), \(S_{m_t, t} \gets S_{m_t, t-1} + 1\) and \(S_{m,t}\gets S_{m,t-1}\) for all \(m\neq m_t\)\label{line:greedy-update}

        \EndFor \label{line:greedy-for-loop-end}

        \Output \(\hat{\bm\mu}_{T+1}\gets \texttt{MLE}(\mathcal{H}_{T})\)
        % \ting{$\hat{\bm\mu}_{T+1}$? \xw{Yes, thanks}} 
        (final estimates) and \(\hat{\bm\phi}_{T+1} \gets \argmax_{\bm\phi} F(\hat{\bm\mu}_{T+1}; \bm\phi)\) (final allocation)
    \end{algorithmic}
\end{algorithm}

\opal (Algorithm~\ref{alg:chase}) consists of two phases: (a). an initial sampling phase (Line~\ref{line:initial-probe}) and (b). a chasing phase (Lines~\ref{line:greedy-MLE}-\ref{line:chase}).
% \mo{add comment to alg and explicitly mention ``Chasing Phase''. \xw{Done}}.
The initial phase takes initial sample sizes \(\mathcal{S}_0 = (S_{m,0})_{m\in\mathcal{M}}\) as input and performs each probe \(m\in\mathcal{M}\) accordingly (Lines~\ref{line:initial-probe-condition}--\ref{line:initial-probe}).
Collecting these \(T_0\coloneqq \sum_{m\in\mathcal{M}} S_{m,0}\) initial samples ensures that the first estimates \(\hat{\bm \mu}_{T_0}\) for link parameters, Line~\ref{line:initial-mle-estimate} are not too far away from the true parameters \(\bm \mu\).
The input initial samples \(\bm S_0\) can either be
set universally as fixed values for guaranteed good performance (discussed in Section~\ref{subsec:rate-analysis})
or determined by the specific network tomography task and prior knowledge of the network case by case (details discussed in two case studies in Sections~\ref{subsec:classical-unicast-case} and~\ref{subsec:quantum-multicast-case}).
% \todo{the notation for MLE is inconsistent, e.g., \(\hat{\bm\mu}_t\), \(\texttt{MLE}(\mathcal{H}_t)\), etc.}

% \tdmark~more high-level motivation for this phase, and high-level motivation for study the greedy algorithm/phases.
% \begin{itemize}
%     \item[\check] give the high-level idea of the algorithm.
%     \item[\check] Motivate the algorithm design, why study the greedy algorithm?
% \end{itemize}

The chasing phase proceeds step by step for each of the remaining time steps \(t = T_0 + 1, T_0 + 2, \dots, T\).
Denote by \(S_{m,t}\) the number of times that probe \(m\) is performed up to and including time \(t\), and use vector \(\bm{S}_t \coloneqq (S_{m,t})_{m\in\mathcal{M}}\) to represent all the sample sizes.
For each time step \(t>T_0\), the algorithm first updates the MLE estimates \(\hat{\bm\mu}_t\) of the network parameters using the latest history \(\mathcal{H}_{t-1} = \{(m_s, \bm Y_{m_s, s})\}_{s=1}^{t-1}\) (Line~\ref{line:greedy-MLE}).
With the updated estimates \(\hat{\bm\mu}_t\), the algorithm
generates an \emph{estimated optimal allocation} \(\hat{\bm\phi}_t^*\) based on the latest link parameter estimates \(\hat{\bm\mu}_{t}\) (Line~\ref{line:greedy-a-optimal}).
Then, the algorithm subtracts the actual allocation \(\bm S_t/t\) from the estimated optimal allocation \(\hat{\bm\phi}_t^*\) element-wise (both are \(M\)-entry vectors), where the (possibly negative) entries of the output vector represent the inadequacies of the allocated fractions to each probe.
% the fraction of inadequate samples allocated to each probe. 
Last, the algorithm performs the probe \(m_t\) with the worst (highest) allocation inadequacy \(\hat{\phi}_{m,t}^* - (S_{m,t}/t)\) once (Line~\ref{line:greedy-allocation})---\emph{chasing the estimated optimal allocation} \(\hat{\bm\phi}_t^*\)---to collect a new observation, and updates the sample sizes \(\bm S_t\) (Lines~\ref{line:greedy-observe}--\ref{line:greedy-update}).
% The estimated optimal allocation chasing step is the core of \opal, dynamically adjusting the probe allocation based on the estimated optimal allocation \(\hat{\bm\phi}_t^*\) at each time step \(t\).
% This is the first time that such a dynamic chasing strategy has been proposed and analyzed for network tomography tasks.
After both phases, the algorithm outputs the final MLE estimates \(\hat{\bm\mu}_{T+1}\) and final estimated optimal allocation \(\hat{\bm\phi}_{T+1}\).

\subsection{\opal with lazy updates}\label{subsec:lazy-update}

\begin{algorithm}[tp]
    \caption{Lazy Update for \opal (replace \opal's Lines~\ref{line:greedy-MLE} and~\ref{line:greedy-a-optimal} as follows, called \opal-lazy)}
    \label{alg:lazy-chasing}
    \begin{algorithmic}[1]
        \Input inputs of \opal and integer \(B\) (lazy update batch size)

        % \For {each time step \(t = T_0 + 1, T_0 + 2, \dots, T\)} \label{line:lazy-for-loop}

        \If{\(t \equiv 1 \pmod B\)} \label{line:lazy-update-start}
        \Comment{Lazy update}

        \State \(\hat{\bm\mu}_{t} \gets \texttt{MLE}(\mathcal{H}_{t-1})\) \Comment{Update the mean estimates } \label{line:lazy-MLE}

        \State \(\hat{\bm \phi}_t^* \gets \argmin_{\bm\phi} F(\hat{\bm\mu}_{t-1}; \bm\phi)\)\RightComment{Estimated optimal allocation}
        \label{line:lazy-optimal}

        \Else
        \State \(\hat{\bm \phi}_t^*  \gets \hat{\bm \phi}_{t-1}^* \) \Comment{No update}

        \EndIf

        % \State Perform probe \(m_t \gets \argmax_{m\in\mathcal{M}} \hat\phi_{m,t}^* - (S_{m,t}/t)\) \Comment{Break tie arbitrarily} \label{line:lazy-allocation}\label{line:lazy-chase}

        % \State Observe signals \(\bm Y_{m_t, t} = (Y_{n,t})_{n\in \mathcal{N}_{m_t}}\) for all destination nodes \(n\in \mathcal{N}_{m_t}\) of the probe \(m_t\) \label{line:lazy-observe}
        % \State \(\mathcal{H}_{t} \gets \mathcal{H}_{t-1}\cup\left\{(m_t, \bm Y_{m_t, t})\right\}\), \(S_{m_t, t} \gets S_{m_t, t-1} + 1\) and \(S_{m,t}\gets S_{m,t-1}\) for all \(m\neq m_t\)\label{line:lazy-update}

        % \EndFor \label{line:lazy-for-loop-end}

        % \Output \(\hat{\bm\mu}_T\gets \texttt{MLE}(\mathcal{H}_{T})\) (final estimates)
    \end{algorithmic}
\end{algorithm}

Lines~\ref{line:greedy-MLE} and~\ref{line:greedy-a-optimal} in \opal involve the MLE estimation and the estimated optimal allocation calculation, both of which can incur nontrivial computation costs. While one may derive analytical expressions for the MLE and optimal allocation for some specific network tomography models (as shown in our two case studies in Sections~\ref{subsec:classical-unicast-case} and~\ref{subsec:quantum-multicast-case}), the general case may require computationally expensive numerical optimizations.
To reduce the computation cost, one may
% \ting{or ``may''? \xw{Thanks}}
replace Lines~\ref{line:greedy-MLE} and~\ref{line:greedy-a-optimal} in \opal with the lazy update in Algorithm~\ref{alg:lazy-chasing} (denoted as \opal-lazy).
The \opal-lazy algorithm updates the MLE estimates and the estimated optimal allocation every \(B\) time steps, and during the other steps keeps the estimated optimal allocation unchanged. Varying the input batch size \(B\) trades off computation cost against performance of \opal.

% \mo{more explanation or refer to other sections is needed here on how the performance changes as a function of batch size.} \xw{don't have enough time to do that. May run some experiments.}
% \mg{I guess this relates to the question I had before.}

\section{Analysis}
\label{sec:analysis}

% \begin{itemize}
%     \item[\check] Go through the detail of the writing of this section once more.
% \end{itemize}

% We analyze the performance of \opal.
We start with two critical conditions for the network tomography tasks (Section~\ref{subsec:network-tomography-conditions}) and then characterizes the regret (Section~\ref{subsec:rate-analysis}) based on these conditions.

\subsection{Conditions on Network Tomography} \label{subsec:network-tomography-conditions}

We present two general conditions for the network tomography tasks, which are crucial for the regret analysis of \opal.
Both conditions are natural and typically fulfilled in practice.
Later in Sections~\ref{subsec:classical-unicast-case} and~\ref{subsec:quantum-multicast-case}, we verify both conditions in the classical and quantum network tomography case studies.

\begin{condition}[Lipschitz Continuity]\label{cond:lipschitz}
    The optimal experimental design criterion \(F(\bm\mu;\bm\phi)\) is Lipschitz continuous regarding the allocation \(\bm\phi\).
    That is, for constant \(\alpha > 0\) and any two allocations \(\bm\phi, \bm\phi' \in \mathcal{F}\) where \(\mathcal{F} \subseteq \Delta^{M-1}\) is the set of all feasible allocations that support identifiability,\footnote{
        The set \(\mathcal{F}\) contains all interior points and perhaps a part of the boundary points of the simplex set \(\Delta^{M-1}\).
        For example, in the classical loss star network tomography via unicasts in Section~\ref{subsec:classical-unicast-case}, all unicast probes are necessary for identifying the network parameters. Hence, any allocation on the boundary (i.e., allocations contain zero entries) does not correspond to a feasible allocation, and \(\mathcal{F}=\Delta^{M-1}\setminus \partial\Delta^{M-1}\).
    }
    we have \[
        F(\bm\mu;\bm\phi) - F (\bm\mu;\bm\phi') \le \alpha\norm{\bm\phi - \bm\phi'}_\infty.
    \]

    % \ting{a bit strange to introduce a completely new notation $G$; perhaps $\bm{\phi}_{\bm{\mu}}$?} \(G(\bm\mu) \coloneqq \argmin_{\bm\phi} F(\bm\mu;\bm\phi)\) () 
    Further, the optimal allocation function \(\bm\phi^*(\bm\mu)\) is Lipschitz continuous with respect to \(\bm\mu\). That is, for some constant \(\beta > 0\) and any two sets of feasible network parameters \(\bm\mu, \bm\mu'\), we have \[
        \norm{\bm\phi^*(\bm\mu) - \bm\phi^*(\bm\mu')}_\infty \le \beta\norm{\bm\mu - \bm\mu'}_\infty.
    \]
\end{condition}

Lipschitz continuity is a common condition in practical scenarios, as it implies that the network tomography problem is well-behaved, and the optimal allocation \(\bm\phi^*\) is stable with respect to the network parameters \(\bm\mu\).

% For example, as long as function \(F(\bm\mu;\bm\phi)\) is a continuous function and differentiable in terms of \(\bm \phi\) when \(\bm\phi \in\Delta^{M-1}\), it is Lipschitz continuous in any compact subset of \(\Delta^{M-1}\)~\citep[Theorem~9.5.1]{o2006metric}.
% For some corner cases where the Lipschitz continuity does not hold for all \(\bm\phi\) in the simplex set \(\Delta^{M-1}\), one can construct a compact subset of \(\Delta^{M-1}\) where the Lipschitz continuity holds, detailed in Appendix~\ref{app:lipschitz-compact-set}.

\begin{condition}[Estimator Concentration / Confidence Interval]\label{cond:finite-confidence-interval}
    At any decision round \(t\), the confidence interval (with confidence \(1-\delta\) for parameter \(\delta \in (0,1)\))  for any parameter \(\mu_\ell\) of a link \(\ell\in\mathcal{L}\) is an interval centered at its MLE estimate \(\hat{\mu}_{\ell,t}\) with radius \(\sum_{m\in\mathcal{M}} c_{\ell,m} \left( \log \delta^{-1} / S_{m,t} \right)^{\gamma_{\ell,m}}\), where \(c_{\ell,m}\ge 0, \gamma_{\ell,m}>0\) are parameters depending on the network and tomography probes, and \(S_{m,t}\) is the number of times that probe \(m\) is performed up to round \(t\).
\end{condition}

Condition~\ref{cond:finite-confidence-interval} describes the concentration rate of the MLE estimator (or any other estimator) in terms of the sample times \(S_{m,t}\) of all probes \(m\).
% This is a finite-time (non-asymptotic) statement.
With the central limit theorem, one can derive an asymptotic version of the confidence interval with radius \(c_0\sqrt{1/{S_{m,t}}}\)~\citep[Theorem 4]{caceres1999multicast},
for some positive constant \(c_0\).
% where \(z_{\delta/2}\) is the \(\delta/2\) quantile of the standard normal distribution
% and \({I}_{\ell,\ell}^{-1}({\bm\mu};\bm\phi_{\text{asym}})\) is the \((\ell,\ell)\)-entry of the inverse of the Fisher information matrix with the asymptotic allocation \(\bm\phi_{\text{asym}}\).
This implies that the concentration rate \(\gamma_{\ell,m}\) of the probing experimental \(m\) for the parameter of link \(\ell\) approaches \(1/2\) for large sample sizes.

\begin{figure*}[tb]
    \centering
    \begin{subfigure}{.3\textwidth}
        \centering
        \begin{tikzpicture}
            \draw[ultra thick] (0,5) -- (4,5); % Draw a horizontal line from (0,0) to (5,0)
            \draw[dashed, white] (0,5.5) -- (4,5.5); % Draw a horizontal line from (0,0) to (5,0)
            \draw[dashed, white] (0,4.5) -- (4,4.5); % Draw a horizontal line from (0,0) to (5,0)
            \node[below] at (2, 4) {\(\hat\phi_{m,t}^*\) is constant over time \(t\)}; % Add text below the line at the center
            % \node[below] at (2, 4) {\(\abs*{\frac{S_{m,t}}{t}  - \phi_{m,t}^*} \le \frac 1 t\)};
        \end{tikzpicture}
        \caption{\(\abs*{\frac{S_{m,t}}{t}  - \hat\phi_{m,t}^*} \le \frac 1 t\)}
        \label{fig:horizontal_line}
    \end{subfigure}%
    \hfill
    \begin{subfigure}{.3\textwidth}
        \centering
        \begin{tikzpicture}
            \draw[ultra thick]
            (0,5) --
            (0.25,4.65) --
            (0.5,5.2) --
            (0.75,4.8) --
            (1,4.9) --
            (1.25,5.1) --
            (1.5,5.4) --
            (1.75,4.75) --
            (2,4.7) --
            (2.25,5.1) --
            (2.5,4.63) --
            (2.75,5.3) --
            (3,4.9) --
            (3.25,5.2) --
            (3.5,5.3) --
            (3.75,4.85) --
            (4,5.3);
            % \draw[ultra thick, decorate, decoration={zigzag, segment length=8, amplitude=5}] (0,5) -- (4,5); % Draw a zigzag line
            \draw[dashed] (0,5.5) -- (4,5.5); % Draw a horizontal line from (0,0) to (5,0)
            \draw[dashed] (0,4.5) -- (4,4.5); % Draw a horizontal line from (0,0) to (5,0)
            \node[below] at (2, 4.3) {\(\hat\phi_{m,t}^*\in \left( \phi_{m}^* - \epsilon, \phi_{m}^* + \epsilon \right)\)}; % Add text below the line at the center
            % \node[below] at (2, 3.5) {\(\abs*{\frac{S_{m,t}}{t}  - \phi_{m,t}^*} \le 2\epsilon + \frac 1 t\)};
        \end{tikzpicture}
        \caption{\(\abs*{\frac{S_{m,t}}{t}  - \hat\phi_{m,t}^*} \le 2\epsilon + \frac 1 t\)}
        \label{fig:zigzag_line}
    \end{subfigure}%
    \hfill
    \begin{subfigure}{.3\textwidth}
        \centering
        \begin{tikzpicture}
            % \foreach \x in {0.3, 0.4, ..., 4} {
            %         % \pgfmathsetmacro\amp{16/(1+\x)} % Define amplitude as a function of x
            %         \draw[ultra thick, decorate, decoration={zigzag, segment length=2, amplitude={2 * (5 + 1/(sqrt(\x)))}}] (\x,5) -- (\x+0.1,5);
            %     }
            % \draw[ultra thick, domain=0.5:4, samples=500] plot (\x, {5 + 1/sqrt(2 * \x) * 2 * (0.25 - 2*mod(\x,0.25))});

            \draw[ultra thick]
            (0.5,5.8) --
            (0.75,4.3) --
            (1,5.5) --
            (1.25,5.1) --
            (1.5,5.4) --
            (1.75,4.75) --
            (2,4.7) --
            (2.25,5.1) --
            (2.5,4.7) --
            (2.75,5.3) --
            (3,4.9) --
            (3.25,4.7) --
            (3.5,5.2) --
            (3.75,4.85) --
            (4,5);

            \draw[dashed, domain=0.5:4, samples=100] plot (\x, {5 + 1/(sqrt(2 * \x))});
            \draw[dashed, domain=0.5:4, samples=100] plot (\x, {5 - 1/(sqrt(2 * \x))});

            \draw[dashed, blue] plot (2,4) -- (2,6);
            \node[below, blue] at (2, 4) {\(\xi t\)};
            \draw[dashed, blue] plot (4,4) -- (4,6);
            \node[below, blue] at (4, 4) {\(t\)};
            \draw[dotted, blue] plot (2,5.5) -- (4,5.5);
            \draw[dotted, blue] plot (2,4.5) -- (4,4.5);
            \node[below] at (3, 4.5) {\footnotesize{Case (b)}};
            % \node[below] at (2, 4) {\(\phi_{m,t}^*\in \left( \hat\phi_{m,t}^* - \epsilon, \hat\phi_{m,t}^* + \epsilon \right)\)}; % Add text below the line at the center
            % \node[below] at (2, 3) {\(\abs*{\frac{S_{m,t}}{t}  - \phi_{m,t}^*} \le 2\epsilon + \frac 1 t\)};
        \end{tikzpicture}
        \caption{\(\abs*{\frac{S_{m,t}}{t}  - \hat\phi_{m,t}^*} \lessapprox {\frac{2C}{(\xi t)^{\gamma_{\min}}}} + \frac 1 t\)}
        \label{fig:shrinking_zigzag_line}
    \end{subfigure}%
    \caption{A high-level illustration of the novel bounding technique used in Proof of Theorem~\ref{thm:convergence-rate} (esp.~\eqref{eq:phi_star_diff}).
        The thick solid line is the chased allocation \(\hat{\phi}_{m,t}^*\) varying over time, and the dashed black line is the confidence interval of the ground truth allocation \(\phi_{m}^*\).
        The derivation of~\eqref{eq:phi_star_diff} comes from two simplifications of the chasing step of Line~\ref{line:chase} of Algorithm~\ref{alg:chase} in Figs.~\ref{fig:horizontal_line} and~\ref{fig:zigzag_line}.
        Fig.~\ref{fig:horizontal_line}: if the chased estimated optimal allocation \(\hat{\bm\phi}_t^*\) is fixed over time \(t\), then the actual allocation \(\bm\phi_t\) is close to the chased allocation up to a distance of \(1/t\);
        Fig.~\ref{fig:zigzag_line}: if the chased allocation \(\hat{\bm\phi}_t^*\) is changing over time \(t\) but its variation is bounded by an interval with width \(2\epsilon\), then the empirical allocation is close to the final chased allocation up to \(2\epsilon + 1/t\).
        These two simplifications lead to~\eqref{eq:phi_star_diff}, illustrated in Fig.~\ref{fig:shrinking_zigzag_line}.
    }
    \label{fig:bound-illustration}
\end{figure*}

\subsection{Regret (Convergence Rate) of \opal}\label{subsec:rate-analysis}

In this section, we present the main theorem that characterizes the regret of \opal.
We recall the probe allocation ratio of the algorithm at time \(t\) as \(\bm\phi_t \coloneqq ({S_{m,t}}/{t})_{m\in\mathcal{M}}\).

\begin{theorem}\label{thm:convergence-rate}
    Given Conditions~\ref{cond:lipschitz} and~\ref{cond:finite-confidence-interval} with \(\delta = 1/(LT^2)\),
    for time horizon \(T>0\), initial probe times \({S}_{m,0} = \xi T\), \(m\in\mathcal{M}\), for any \(\xi \in (0,1)\), with probability of at least \(1- 1/T\), \emph{\opal} satisfies,
    % \ting{with some probability? note $\bm\phi_T$ and hence $F(\bm\mu; \bm\phi_T)$ are random variables}
    % \tdmark~check should I put high probability \(1-\frac{1}{T^2}\) here, or reserve the \(\delta\) parameter in the statement.
    % \tdmark~missing a high probability statement here.
    \begin{equation}\label{eq:convergence-rate}
        \begin{split}
            R_T
             & = F(\bm\mu; \bm\phi_T) - F(\bm\mu; \bm\phi^*) \le C \left( \frac{\log (LT)}{\xi T} \right)^{\gamma_{\min}}
            \\
             & \quad+ \alpha \xi \1*{\exists m\in\mathcal{M}: \phi_m^* < \xi + C\left( \frac{\log (LT)}{\xi T} \right)^{\gamma_{\min}}},
        \end{split}
    \end{equation}
    % \todo{the constant \(C\) notation is abused muliple places, differ them with subscripts.}
    where \(C = c_0\cdot\alpha\beta c_{\max}\) is for some constant \(c_0>0\) that depends on the probing experiments and network, independent of time horizon \(T\).
    \(\alpha\) and \(\beta\) come from Condition~\ref{cond:lipschitz},
    parameters \(c_{\max} \coloneqq \max_{\ell\in\mathcal{L}} \sum_{m\in\mathcal{M}} c_{\ell,m}\) and \(\gamma_{\min} \coloneqq \min_{(\ell, m)\in\mathcal{L}\times\mathcal{M}:\gamma_{\ell,m}>0} \gamma_{\ell,m}\) come from Condition~\ref{cond:finite-confidence-interval},
    and \(\1{\cdot}\) is the indicator function.
\end{theorem}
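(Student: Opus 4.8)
\emph{Proof plan.} The plan is to split the regret into an allocation-tracking term and a forced-exploration term and to bound both on a single high-probability event. First I would invoke the first half of Condition~\ref{cond:lipschitz} with $\bm\phi=\bm\phi_T$ and $\bm\phi'=\bm\phi^*$, which immediately reduces the regret to the sup-norm allocation error,
\[
  R_T = F(\bm\mu;\bm\phi_T) - F(\bm\mu;\bm\phi^*) \le \alpha \norm{\bm\phi_T - \bm\phi^*}_\infty = \alpha\max_{m}\abs{\tfrac{S_{m,T}}{T}-\phi^*_m}.
\]
Next I would build the good event: applying Condition~\ref{cond:finite-confidence-interval} with $\delta=1/(LT^2)$ and a union bound over the $L$ links and $T$ rounds caps the failure probability at $L\cdot T\cdot\delta = 1/T$, so with probability at least $1-1/T$ the MLE satisfies $\abs{\hat\mu_{\ell,t}-\mu_\ell}\le \sum_m c_{\ell,m}(\log(LT^2)/S_{m,t})^{\gamma_{\ell,m}}$ for all $\ell$ and $t$ simultaneously. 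Since the initial phase forces $S_{m,t}\ge S_{m,0}=\xi T$ for every probe, each base is below one for large $T$, so I may lower every exponent to $\gamma_{\min}$ and bound the sum by $c_{\max}(\log(LT^2)/(\xi T))^{\gamma_{\min}}$. Feeding this through the second half of Condition~\ref{cond:lipschitz} gives
\[
  \norm{\hat{\bm\phi}^*_t - \bm\phi^*}_\infty \le \beta\norm{\hat{\bm\mu}_t - \bm\mu}_\infty \le \beta c_{\max}\left(\frac{\log(LT^2)}{\xi T}\right)^{\gamma_{\min}} =: \varepsilon,
\]
so on the good event \emph{every} chased target $\hat{\bm\phi}^*_t$ lies in a band of half-width $\varepsilon$ around $\bm\phi^*$.

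The core of the argument is a moving-target tracking lemma. The chasing rule of Line~\ref{line:chase} plays the probe of largest deficit $\hat\phi^*_{m,t}-S_{m,t}/t$, and I would first note that this maximal deficit is at least the average deficit $\tfrac1M(1-\tfrac{t-1}{t})=\tfrac{1}{Mt}>0$. Hence at any round $t$ in which $m$ is played, $S_{m,t}< t\,\hat\phi^*_{m,t}+1\le t(\phi^*_m+\varepsilon)+1$. Evaluating this at the \emph{last} round $\le T$ at which $m$ is played yields the one-sided bound $S_{m,T}/T\le \phi^*_m+\varepsilon+1/T$, and the matching lower bound follows from simplex conservation $\sum_m S_{m,T}=T$ and $\sum_m\phi^*_m=1$ by summing the upper bounds over the other probes, at the cost of a harmless factor $M-1$. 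This is precisely the bounding technique drawn in Figure~\ref{fig:bound-illustration}: a fixed target is tracked to within $1/t$ (Fig.~\ref{fig:horizontal_line}), a target confined to a width-$2\varepsilon$ band to within $2\varepsilon+1/t$ (Fig.~\ref{fig:zigzag_line}), and the shrinking confidence band to the stated rate (Fig.~\ref{fig:shrinking_zigzag_line}).

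It remains to isolate the forced-exploration term and assemble the bound. A probe with $\phi^*_m>\xi+\varepsilon$ has strictly positive deficit once the initial phase ends, so it is chased and the tracking lemma controls its error by $O(\varepsilon+1/T)$. A probe with $\phi^*_m\le\xi+\varepsilon$ may instead remain pinned at its forced level $S_{m,T}/T=\xi$, producing an unavoidable over-allocation of at most $\xi$; this is the sole origin of the additive indicator term, whose threshold $\phi^*_m<\xi+C(\log(LT)/(\xi T))^{\gamma_{\min}}$ absorbs the estimation slack $\varepsilon$. Combining the two cases I obtain $\norm{\bm\phi_T-\bm\phi^*}_\infty\le c_0'(\varepsilon+1/T)+\xi\,\1{\exists m: \phi^*_m<\xi+C(\log(LT)/(\xi T))^{\gamma_{\min}}}$, and multiplying by $\alpha$ yields~\eqref{eq:convergence-rate} after folding $M-1$, the $\log(LT^2)$-versus-$\log(LT)$ discrepancy, and the $1/T$ term (dominated since $\gamma_{\min}<1/2$) into the constant $c_0$, with $C=c_0\alpha\beta c_{\max}$.

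The hardest step is the tracking lemma: the chased target $\hat{\bm\phi}^*_t$ is itself random and time-varying, while the actual allocation $\bm\phi_T$ is a cumulative average that lags behind it, so I cannot simply compare $\bm\phi_T$ to a single fixed optimum. The deficit-averaging inequality together with the ``last-play'' evaluation is exactly what converts a moving, shrinking confidence band into a clean two-sided bound on $\bm\phi_T$; everything else is concentration bookkeeping and absorption of dimension- and logarithm-dependent constants.
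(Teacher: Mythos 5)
Your proposal follows essentially the same route as the paper's own proof: reduce the regret to $\alpha\norm{\bm\phi_T-\bm\phi^*}_\infty$ via the first half of Condition~\ref{cond:lipschitz}; build one high-probability event (union bound over $L$ links and $T$ rounds with $\delta = 1/(LT^2)$, failure probability $LT\delta = 1/T$) on which every chased target $\hat{\bm\phi}^*_t$ lies within $\varepsilon = \beta c_{\max}\left(\log(LT^2)/(\xi T)\right)^{\gamma_{\min}}$ of $\bm\phi^*$; then control the tracking error with a two-case split on whether some $\phi^*_m$ falls below roughly $\xi$. Your tracking mechanics are actually more explicit than the paper's: the observation that the played probe has deficit at least $1/(Mt)>0$, combined with evaluation at the last play time, gives a clean one-sided bound $S_{m,T}/T\le\phi^*_m+\varepsilon+1/T$, whereas the paper's corresponding step (inequality (k) in its Appendix) is justified only by the picture in Figure~\ref{fig:bound-illustration}.

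However, the assembly of your Case 2 has a genuine gap---and it sits in exactly the spot where the paper's own proof is loose. Your lower bound on a chased probe's allocation comes from simplex conservation, i.e.\ summing upper bounds over the other probes. But a pinned probe $m'$ contributes its \emph{actual} level $\xi$, not $\phi^*_{m'}+\varepsilon+1/T$, so conservation yields, for a chased probe $m$,
\begin{equation}
    \phi^*_m - \frac{S_{m,T}}{T} \;\le\; (M-1)\left(\varepsilon+\tfrac1T\right) + \sum_{m'\ \text{pinned}} \left(\xi - \phi^*_{m'}\right),
\end{equation}
and the last sum can be of order $(M-1)\xi$ when many probes are pinned. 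It is therefore not true that forced exploration produces "an unavoidable over-allocation of at most $\xi$" as the sole contribution. Concretely, take $\bm\phi^* = (1-(M-1)\epsilon_0,\,\epsilon_0,\dots,\epsilon_0)$ with $\epsilon_0 \ll \xi$: probes $2,\dots,M$ keep negative deficits throughout the chasing phase and stay pinned at $\xi$, probe $1$ is played every round and ends at allocation $1-(M-1)\xi$, so $\norm{\bm\phi_T-\bm\phi^*}_\infty \approx (M-1)\xi$. This factor cannot be folded into $c_0$, because $c_0$ multiplies only the $\left(\log(LT)/(\xi T)\right)^{\gamma_{\min}}$ term, while the indicator term in~\eqref{eq:convergence-rate} carries the bare coefficient $\alpha$. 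To be fair, the paper's proof commits the identical sin: its Case (2) asserts that probes with $\phi^*_m>\xi$ obey the Case (1) bound, which fails in the same example. So your reconstruction is faithful to the paper, gap included; a fully rigorous argument would need the indicator term to scale with the number of pinned probes (or with $M$), or a different argument for the under-allocation of the chased probes.
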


\textbf{Implications of Theorem~\ref{thm:convergence-rate}.} Theorem~\ref{thm:convergence-rate} characterizes the regret of \opal. To better understand the regret, we discuss several special cases.

\emph{First}, for many network tomography tasks, even without knowing the actual network channel parameters, one can select a compact set of probing experiments \(\mathcal{M}\) such that each probe in the set is needed at least a minimum fraction of the time---that is, \(\phi^*_m > \phi_{\text{threshold}}\) for all probes \(m\in\mathcal M\)---to identify the network parameters.
%certain threshold ratio---that is, \(\phi^*_m > \phi_{\text{threshold}}\) for all probes \(m\in\mathcal M\)---to identify the network parameters.
One can set \(\xi\) in Theorem~\ref{thm:convergence-rate} to this threshold, and then the second term in the RHS of~\eqref{eq:convergence-rate} vanishes, leading to a \(\tilde{O}\left( T^{-\gamma_{\min}} \right)\) regret.
For example, in the case study in Section~\ref{subsec:classical-unicast-case}, spreading the first \(O(\log T)\) time steps uniformly across each probe yields a rough estimate of the network parameters.
With these estimates, one can find such a threshold and set \(\xi\) as the threshold to cancel the second term.

\emph{Second}, one can always set \(\xi = T^{-\frac{\gamma_{\min}}{1 + \gamma_{\min}}}\) in Theorem~\ref{thm:convergence-rate}, guaranteeing a decent \(\tilde O(T^{-\frac{\gamma_{\min}}{1 + \gamma_{\min}}})\) regret, independent of whether the second term of the regret in~\eqref{eq:convergence-rate} can be canceled or not.
% This guarantees a decent regret for \opal.
For example, the second term cannot be canceled by any constant \(\xi > 0\) in a scenario where the optimal allocation \(\bm\phi^*\) contains some zero entries, i.e., \(\phi_m^* = 0\) for some probe $m$ (e.g., the case study in Section~\ref{subsec:quantum-multicast-case}).
That is,
% in hindsight, the inferior probe \(m\) with \(\phi_m^*=0\) should not be performed. But without the knowledge of network parameters \(\bm\mu\), it is difficult for 
when there are unavoidable inferior probes, letting \(\xi = T^{-\frac{\gamma_{\min}}{1 + \gamma_{\min}}}\) is a plausible choice, and \opal yields a \(\tilde O(T^{-\frac{\gamma_{\min}}{1 + \gamma_{\min}}})\) regret, slightly worse than the \(-\gamma_{\min}\) rate.

\emph{Third}, from an asymptotic perspective (for large sample sizes) and by the central limit theorem, the concentration rate of Condition~\ref{cond:finite-confidence-interval} is \(\gamma_{\ell,m} = 1/2\) for all probes \(m\) and links \(\ell\).
Assuming the second term in~\eqref{eq:convergence-rate} is canceled, the regret of \opal becomes \(\tilde{O}(T^{-\frac 1 2})\).
This rate is near-optimal in the sense that, even for the online linear regression (simpler than our network tomography task), minimizing the regret for A-optimal design criterion (with the number of probes greater than that of links, \(M>L\)) is known to suffer at least a \(\Omega(T^{-1/2})\) regret~\citep[Theorem 4]{fontaine2021online}.
% \ting{be careful about wording, as linear regression is not a special case of the studied problem}

\subsubsection{Proof Overview of Theorem~\ref{thm:convergence-rate}}\label{subsubsec:proof-convergence-rate}
% We now proceed to formally prove the result in Theorem~\ref{thm:convergence-rate}.
% \opal is designed to use the actual allocation \({{\bm\phi}_t = ({S_{m,t}}/{t})_{m\in\mathcal{M}}}\) to ``chase'' the optimal allocation \(\bm\phi^*\) (Line~\ref{line:chase}).
% \textbf{Overview.}
\opal utilizes the actual allocation \({{\bm\phi}_t = ({S_{m,t}}/{t})_{m\in\mathcal{M}}}\) to \emph{``chase''} the optimal allocation \(\bm\phi^*\) (Line~\ref{line:chase}).
To show convergence of the algorithm, we need to analyze the effectiveness of such a chasing strategy.
That is, \emph{how good is the actual allocation \({\bm\phi}_t\) in approximating the optimal allocation \(\bm\phi^*\) as the algorithm proceeds?}
Such an analysis is challenging due to the sequential decision-making nature of the process: On one hand, as the actual network parameters \(\bm\mu\) are unknown, the algorithm does not know the optimal allocation \(\bm\phi^*\) and, hence, needs to {\em estimate} the  optimal allocation \(\hat{\bm\phi}_t^*\) based on the parameter estimates \(\hat{\bm\mu}_t\).
This introduces an \emph{accuracy gap} in the chasing process.
The actual allocation \({\bm\phi}_t\) depends on all past probe decisions, which often lag behind the \emph{time-varying} estimated optimal allocation \(\hat{\bm\phi}_t^*\). This introduces a \emph{lag gap}. The proof of Theorem~\ref{thm:convergence-rate} bounds these two gaps, and
the high-level derivation technique is illustrated in Fig.~\ref{fig:bound-illustration}.
We refer to Appendix~\ref{sec:proof-convergence-rate} for the complete proof of Theorem~\ref{thm:convergence-rate}.
%  and the detailed steps of the analysis.
% , where one of the key proof steps is illustrated in Fig.~\ref{fig:bound-illustration}.

\section{Case Study for Classical Network Tomography}
\label{sec:case-study-classical}

This section presents a case study on classical network tomography. In Section~\ref{subsec:classical-unicast-case}, we investigate
loss tomography in a star network
% the loss star network \ting{``the loss tomography in a star network''} 
and verify the Lipschitz continuity of the A-optimal design and the concentration rate of the MLE estimator for the link parameters. We then provide empirical results in Section~\ref{subsec:empirical-classical}  for both star and general networks with comparisons against known baselines.

\subsection{Analytical Case 1: Classical Star Network Unicast Setting}\label{subsec:classical-unicast-case}

In this analytical case study, we examine loss tomography in a star network with \(L\) links, and we focus solely on unicast probes due to their analytical tractability.
An example with \(3\) links is shown in Figure~\ref{fig:star_network}. For the \(L\)-link star network, we select a set of \(M = L\) probes that ensure identifiability of the link parameters \(\bm\mu\).
These \(M\) probes are represented by an \(M \times L\) matrix \(\bm Q \in \{0,1\}^{M \times L}\), known as the \emph{measurement matrix}, where \(Q_{m,\ell} = 1\) if the \(m\)-the unicast probe passes through the \(\ell\)-th link, and \(Q_{m,\ell} = 0\) otherwise. In other words, each row of \(\bm Q\) corresponds to a probing experiment, with the two non-zero entries in each row indicating the source and destination of the unicast probe.
% To ensure identifiability of the link parameters, we assume the measurement matrix \(\bm Q\) is non-singular~\citep{he2015fisher} \dt{(Isn't that true in this case?)}.
Denote by \(\nu_m \coloneqq \mu_{\ell_{m,1}} \mu_{\ell_{m,2}}\)the success probability of the \(m\)-th probe, where \(\ell_{m,1}\) and \(\ell_{m,2}\) represent the source and destination nodes/links of the \(m\)-th probe, respectively. With measurement matrix \(\bm Q\), we have \({\log \bm \nu = \bm Q \log \bm \mu}\).\footnote{The \(\log\) denotes the element-wise natural logarithm.}
% , meaning that the \(M\) probes are linearly independent.
% Especially, we note that while both conditions hold for general \(L\)-link star network, the confidence interval concentration rates \(\gamma_{m,\ell}\) in Condition~\ref{cond:finite-confidence-interval} are only derived for \(3\)-link star network due to the complicated expression of the MLE estimator for general star network (in~\eqref{eq:classical-MLE-estimator}). \ting{seems this is outdated now?}

\textbf{Lipschitz Continuity of the A-optimal Design.}
We select the A-optimal experimental design as the objective to verify the Lipschitz continuity in Condition~\ref{cond:lipschitz}. Based on the derivation in~\citet[Theorem 6]{he2015fisher}, we compute the trace of the inverse of the Fisher information matrix \(\bm I(\bm\mu; \bm\phi)\) for the \(L\)-link star network as follows,
\begin{equation}\label{eq:classical-A-optimal-criterion}
    F(\bm\mu; \bm\phi) = \tr \bm I^{-1} (\bm\mu; \bm\phi) =
    \sum_{m=1}^M \frac{1}{\phi_m}  A_m(\bm\mu),
\end{equation}
where
\(
A_m (\bm \mu) \coloneqq \frac{1 - \nu_m}{\nu_m} \sum_{\ell = 1}^L \mu_\ell^2 \kappa_{\ell, m}
\) is a function of the link parameters \(\bm\mu\),
% \todo{\tdmark~check whether it is \((\ell, m)\) or \((m,\ell)\)?}
and \(\kappa_{\ell, m} \coloneqq (\bm Q^{-1})_{\ell, m}\) denotes the \((\ell, m)\)-entry of the inverse of the measurement matrix \(\bm Q\).
% \ting{you assumed each link has the same index as the end node it connects with}

By the method of Lagrange multipliers, the A-optimal solution for minimizing \(F(\bm\mu; \bm\phi)\) is
\begin{equation}\label{eq:classical-A-optimal-allocation}
    \phi_m^* = \frac{\sqrt{A_m(\bm\mu)}}{\sum_{m' \in \mathcal{M}} \sqrt{A_{m'}(\bm\mu)}}, \quad \forall m \in \mathcal{M}.
\end{equation}
As the analytical solutions in~\eqref{eq:classical-A-optimal-criterion} and~\eqref{eq:classical-A-optimal-allocation} only contain basic calculations and are thus differentiable (recall \(\mu_\ell\in (0,1)\)), we verify the Lipschitz continuity in Condition~\ref{cond:lipschitz}.
% \ting{don't you need $\phi_m$ and $\mu_l$ to be bounded away from 0?}

\textbf{MLE Estimator for Link Parameters.}
% The MLE estimator for the link parameters \(\bm\mu\). 
The MLE estimator for the success probability of each probe \(\nu_m\) is \(\hat{\nu}_m = \frac{B_m}{S_m}\), where \(S_m\) is the number of times the probe \(m\) is performed, and \(B_m\)
% \ting{overloading notation $A_m$} 
is the number of successful receptions among the \(S_m\) probes.
Link parameters \(\bm\mu\) can be expressed as a function of the success probabilities \(\bm \nu\) in the form of
\({
        \log \bm\mu = (\bm Q^T \bm Q)^{-1} \bm Q^T \log \bm\nu.
    }\)
Using the invariance property of MLE under one-to-one transformations, the MLE for the link parameters \(\bm\mu\) is
\begin{equation}
    \label{eq:classical-MLE-estimator}
    \hat{\bm\mu} = \exp\left( (\bm Q^T \bm Q)^{-1} \bm Q^T \log \hat{\bm \nu}  \right)
    \overset{(a)}= \exp\left( \bm Q^{-1} \log \hat{\bm\nu} \right),
\end{equation}
where equality (a) is a consequence of matrix \(\bm Q\) being square and invertible for the star network.
% Denote \(\kappa_{\ell, m} \in \mathbb R\) as the \(m^{\text{th}}\)-row, \(\ell^{\text{th}}\)-column entry of matrix \(\bm Q^{-1}\).
% \ting{duplicate notation, already defined $b_{l,m}$} replaced by \(\kappa\) all over. 
This matrix form in~\eqref{eq:classical-MLE-estimator} can be further expressed in an element-wise
% \ting{this is non-linear} 
form as
\(\hat\mu_\ell = \prod_{m\in\mathcal{M}} \hat\nu_m^{\kappa_{\ell, m}}\) for every link \(\ell \in \mathcal{L}\).
% \ting{you need the $(l,m)$-th entry of $\bm{Q}^{-1}$}

\textbf{Concentration Rate of the MLE Estimator.}
Next, we examine the concentration rate in Condition~\ref{cond:finite-confidence-interval} for MLE in~\eqref{eq:classical-MLE-estimator} in the following theorem. Its proof is provided in Appendix~\ref{app:proof-case-study-classical}.

\begin{theorem}\label{thm:MLE-estimator-fix-radius}
    For unicast star network, we have, with a probability of at least \(1-\delta\) (for any $\delta\in (0,M^{-1})$),
    \begin{align}\label{eq:concentration-interval-classical}
        \mu_\ell \in (\hat\mu_\ell - \text{\normalfont rad}_{m,\ell}, \hat\mu_\ell + \text{\normalfont rad}_{m,\ell})
    \end{align} where
    \(\text{\normalfont rad}_{m,\ell} = \sum_{m\in\mathcal{M}:  \kappa_{\ell, m}\neq 0} c_{m,\ell} \sqrt{{\log (\delta^{-1}/M)}/ {S_m}}\),
    and \(c_{m,\ell}\)'s are positive constants depending on the network topology and link parameters.
    % \begin{equation}\label{eq:concentration-interval-classical}
    %     \begin{split}
    %         \mu_\ell \in
    %          & \left(
    %         \hat\mu_\ell
    %         - \sum_{m\in\mathcal{M}:  \kappa_{\ell, m}\neq 0} c_{m,\ell} \sqrt{\frac{\log (\delta^{-1}/M)}{S_m}},\right.
    %         \\
    %          & \quad\quad \left. \hat\mu_\ell
    %         +  \sum_{m\in\mathcal{M}:  \kappa_{\ell, m}\neq 0} c_{m,\ell} \sqrt{\frac{\log (\delta^{-1}/M)}{S_m}}
    %         \right),
    %     \end{split}
    % \end{equation}
    % where \(c_{m,\ell}\)'s
    % are positive constants depending only on the network topology and link parameters.
\end{theorem}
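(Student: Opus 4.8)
The plan is to obtain the confidence interval for $\mu_\ell$ by first establishing a standard concentration bound on each empirical probe-success probability $\hat\nu_m = B_m/S_m$ and then transporting these deviations through the smooth, log-linear estimator map $\hat\mu_\ell = \prod_{m\in\mathcal M}\hat\nu_m^{\kappa_{\ell,m}}$ given in~\eqref{eq:classical-MLE-estimator}.

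First I would observe that $B_m$ is a sum of $S_m$ i.i.d.\ Bernoulli$(\nu_m)$ trials, so $\hat\nu_m$ is an empirical mean and Hoeffding's inequality gives, for each probe $m$, a deviation bound of the form $|\hat\nu_m-\nu_m|\le c\sqrt{\log(\cdot)/S_m}$ holding with the prescribed per-probe confidence. A union bound over the (at most $M$) probes then makes all of these hold simultaneously with probability at least $1-\delta$; this is the high-probability event on which the remainder of the argument conditions, and it is the source both of the $\sqrt{1/S_m}$ scaling (hence the exponent $\gamma_{\ell,m}=1/2$ of Condition~\ref{cond:finite-confidence-interval}) and of the logarithmic confidence factor in $\mathrm{rad}_{m,\ell}$, with the restriction $\delta<M^{-1}$ keeping that factor positive.

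Next I would propagate these errors to $\mu_\ell$. Writing $g_\ell(\bm\nu):=\prod_m\nu_m^{\kappa_{\ell,m}}$ so that $\mu_\ell=g_\ell(\bm\nu)$ and $\hat\mu_\ell=g_\ell(\hat{\bm\nu})$, a multivariate mean-value expansion yields $\hat\mu_\ell-\mu_\ell=\sum_m \partial_{\nu_m}g_\ell(\tilde{\bm\nu})(\hat\nu_m-\nu_m)$ for some $\tilde{\bm\nu}$ on the segment joining $\bm\nu$ and $\hat{\bm\nu}$, where $\partial_{\nu_m}g_\ell=(\kappa_{\ell,m}/\nu_m)\,g_\ell(\bm\nu)$. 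This partial derivative vanishes precisely when $\kappa_{\ell,m}=0$, which explains why $\mathrm{rad}_{m,\ell}$ sums only over probes with $\kappa_{\ell,m}\neq 0$. Bounding each surviving derivative by a constant $c_{m,\ell}$ and inserting the per-probe Hoeffding bounds gives $|\hat\mu_\ell-\mu_\ell|\le\sum_{m:\kappa_{\ell,m}\neq 0}c_{m,\ell}\sqrt{\log(\cdot)/S_m}$, which is exactly the claimed interval.

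The step I expect to be the main obstacle is bounding the derivative factor $\partial_{\nu_m}g_\ell(\tilde{\bm\nu})=(\kappa_{\ell,m}/\tilde\nu_m)\prod_{m'}\tilde\nu_{m'}^{\kappa_{\ell,m'}}$ uniformly, since it diverges if any $\tilde\nu_m\to 0$. Because $\tilde\nu_m$ lies between the fixed true value $\nu_m$ (which is bounded away from $0$ since the $\mu_\ell\in(0,1)$ are assumed away from the endpoints) and $\hat\nu_m$, I would argue that on the concentration event the deviation $|\hat\nu_m-\nu_m|$ is small enough that $\tilde\nu_m$ stays bounded below by a positive constant, so that $c_{m,\ell}$ can be taken finite and depending only on the topology and the true link parameters. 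Making this lower bound rigorous---rather than the routine Hoeffding and Taylor steps---is the crux; once it is secured, the constants $c_{m,\ell}$ are read off directly from the bounded partial derivatives.
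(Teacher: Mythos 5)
Your proposal is correct, but it follows a genuinely different route from the paper's proof. The paper argues in two steps: first a generic concentration lemma (Lemma~\ref{lma:MLE-estimator-fix-radius}) proved by splitting the probes into positive- and negative-exponent sets, writing \(\hat\mu_\ell\) as a ratio of products, factoring out \(\mu_\ell\), and expanding the resulting multiplicative error factor---an expansion that exploits \(2\kappa_{\ell,m}\in\{1,2\}\) and in general only yields the weaker exponent \(\min\{\tfrac12,|\kappa_{\ell,m}|\}\); second, a structural lemma (Lemma~\ref{lma:inverse-matrix-fix-entry}) showing via block diagonalization, \(\det(\bm Q_h)=\pm 2\), and cofactor arithmetic that every entry of \(\bm Q^{-1}\) lies in \(\{0,\pm\tfrac12,\pm1\}\), which is what upgrades that generic exponent to the \(\tfrac12\) claimed in Theorem~\ref{thm:MLE-estimator-fix-radius}. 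Your mean-value/delta-method argument replaces both steps at once: because the expansion \(\hat\mu_\ell-\mu_\ell=\sum_m \partial_{\nu_m}g_\ell(\tilde{\bm\nu})(\hat\nu_m-\nu_m)\) is linear in the Hoeffding deviations, you get the \(\sqrt{1/S_m}\) rate for arbitrary nonzero real exponents, so no combinatorial information about \(\bm Q^{-1}\) is needed beyond which entries vanish. What the paper's route buys is the structural characterization of the star-network measurement matrix (of independent interest, and necessary given its looser generic lemma); what your route buys is brevity and generality---it applies verbatim to any invertible measurement matrix, not just the star topology. The caveat you flag as the crux (bounding \(\partial_{\nu_m}g_\ell(\tilde{\bm\nu})\) requires \(\tilde\nu_m\) bounded below, hence \(S_m\gtrsim \log(M/\delta)/\nu_m^2\)) is shared by the paper: its own inequality (c), item (c2), lower-bounds a denominator only ``when \(S_m\ge O(\log T)\),'' and neither hypothesis appears in the theorem statement, so you are at parity with the paper on rigor there.
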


This verifies Condition~\ref{cond:finite-confidence-interval} regarding the concentration rate of the MLE estimator for the link parameters in the unicast star network. Specifically, the exponent parameters are \(\gamma_{m,\ell} \in \{0,\frac{1}{2}\}\), resulting in a minimum nonzero exponent \(\gamma_{\min} = \frac{1}{2}\).

\textbf{Regret of \opal on the Classical Star Network.}
% \label{subsubsec:convergence-rate}
With the confidence interval in~\eqref{eq:concentration-interval-classical} and \(\delta = 1/(LT^2)\),
%  \ting{is it $\delta = 1/(LT^2)$?}, 
we know that \(O(\log T)\) samples for each unicast probe suffice to estimate the link parameters \(\bm\mu\) with relatively good accuracy.
Furthermore, utilizing the Lipschitz continuity of the A-optimal design allocation in~\eqref{eq:classical-A-optimal-allocation}, one can derive a lower bound \(\xi\)
% \dt{(what is an "approximate" lower bound?)} 
for the optimal allocation ratio \(\phi_m^*\) for all probes \(m \in \mathcal{M}\). By using this constant lower bound \(\xi\) as input to Theorem~\ref{thm:convergence-rate}, the second term of the general regret bound in~\eqref{eq:convergence-rate} is canceled. Consequently, the regret of \opal for the A-optimal design under the loss star network is
\[
    R_T = \tr \bm I^{-1}(\bm\mu; {\bm\phi}_T) - \tr\bm I^{-1}(\bm\mu; \bm\phi^*) = O\left( \sqrt{\frac{\log T}{ T}} \right).
\]
% \dt{(Isn't $\le O(\cdot)$ the same as $=O(\cdot)$) Yes, let's use \(=\)}
% \dt{(Shouldn't this come at the beginning of the section?)}. 

% \todo{\ting{perhaps adding a subusbsection (unnumbered) on \emph{Complexity of \opal for Classical Network}; this is applicable to general classical networks}}

\textbf{Complexity of \opal for Classical Network.}
In each time slot \(t\) of \opal for classical network tomography, the major computational cost is from calculating~\eqref{eq:classical-A-optimal-allocation} (Line~\ref{line:greedy-a-optimal} in \opal), whose time complexity is \(O(ML+L^3) = O(L^3)\).
As~\eqref{eq:classical-A-optimal-allocation} is derived from general classical networks, this time complexity is also valid for general classical networks.

% The feedback space for multicast probes is significantly larger than that for unicast, making it challenging to directly verify the Lipschitz continuity and the concentration condition of the MLE estimator. Other indirect methods, such as numerically plotting their curves, may verify both conditions; we leave this for future work.
% \todo{may omit this paragraph}
% In the next subsection, we present the numerical performance of \opal in the classical unicast loss tomograph in star and more general networks.

\subsection{Simulations for Classical Network Tomography}\label{subsec:empirical-classical}

We evaluate the empirical performance of the proposed online probe allocation (\opal) algorithm in terms of A-optimal design for loss tomography in general classical networks under unicast (beyond the star network).
We consider three baselines: an iterative algorithm proposed in~\citep[Algorithm 2]{he2015fisher}, the A-Optimal allocation, and a uniform allocation.
The iterative algorithm is a state-of-the-art experimental design solution for network tomography without prior knowledge of the link parameters.
It performs probes in batch (each consisting of \(100\) time steps) according to its iteratively updated policy.
The A-optimal allocation uses \(\bm\phi^*\), minimizes the A-optimal experimental design, and represents an ideal solution with full knowledge of the link parameters,
and the uniform allocation represents a simple baseline.

\begin{figure*}[tp]
    \centering
    \begin{subfigure}{.3\textwidth}
        \centering
        \includegraphics[width=.9\linewidth]{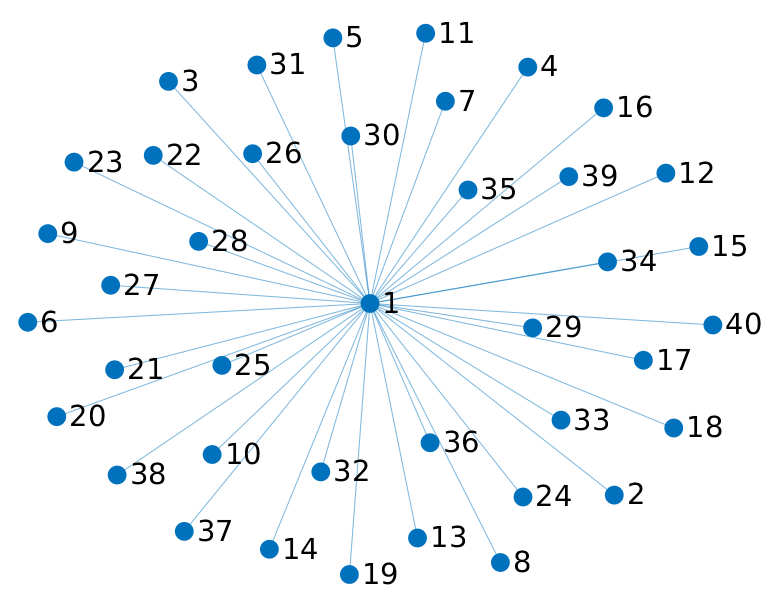}
        \caption{\(40\)-node star network}\label{subfig:star-illustration}
    \end{subfigure}%
    \begin{subfigure}{.3\textwidth}
        \centering
        \includegraphics[width=.9\linewidth]{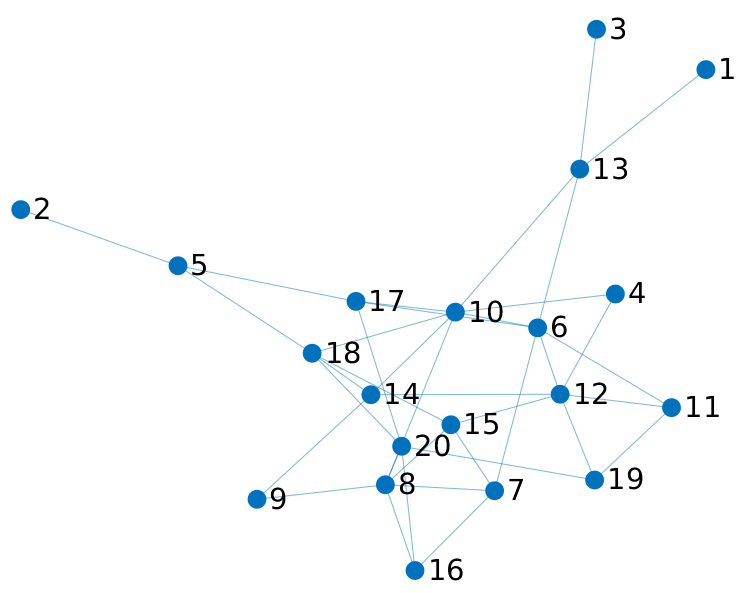}
        \caption{ER graph with \(20\) nodes}
        \label{subfig:ER-illustration}
    \end{subfigure}%
    \begin{subfigure}{.3\textwidth}
        \centering
        \includegraphics[width=.9\linewidth]{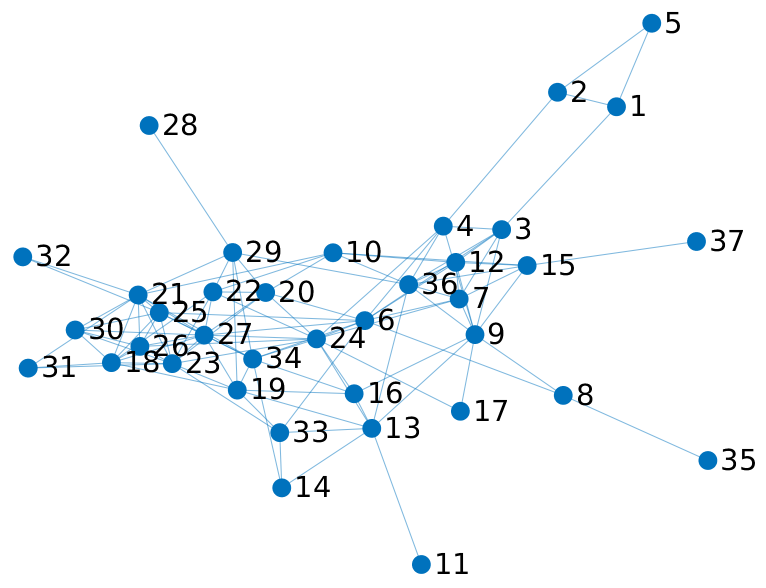}
        \caption{Roofnet with \(37\) nodes}\label{subfig:roofnet-illustration}
    \end{subfigure}%
    \caption{Topology illustration for the three types of networks used in our experiments: (a) A \(40\)-node star network, where one central node is connected to all other nodes. (b) A random Erdős-Rényi (ER) network with \(20\) nodes and approximately \(35\) links, and (c) a Roofnet topology with \(37\) nodes and \(114\) links, based on real-world data~\citep{aguayo2004link}.
    }
    \label{fig:topology-illustration}
\end{figure*}

\begin{figure*}[tp]
    \centering
    \begin{subfigure}{.25\textwidth}
        \centering
        \includegraphics[width=\linewidth]{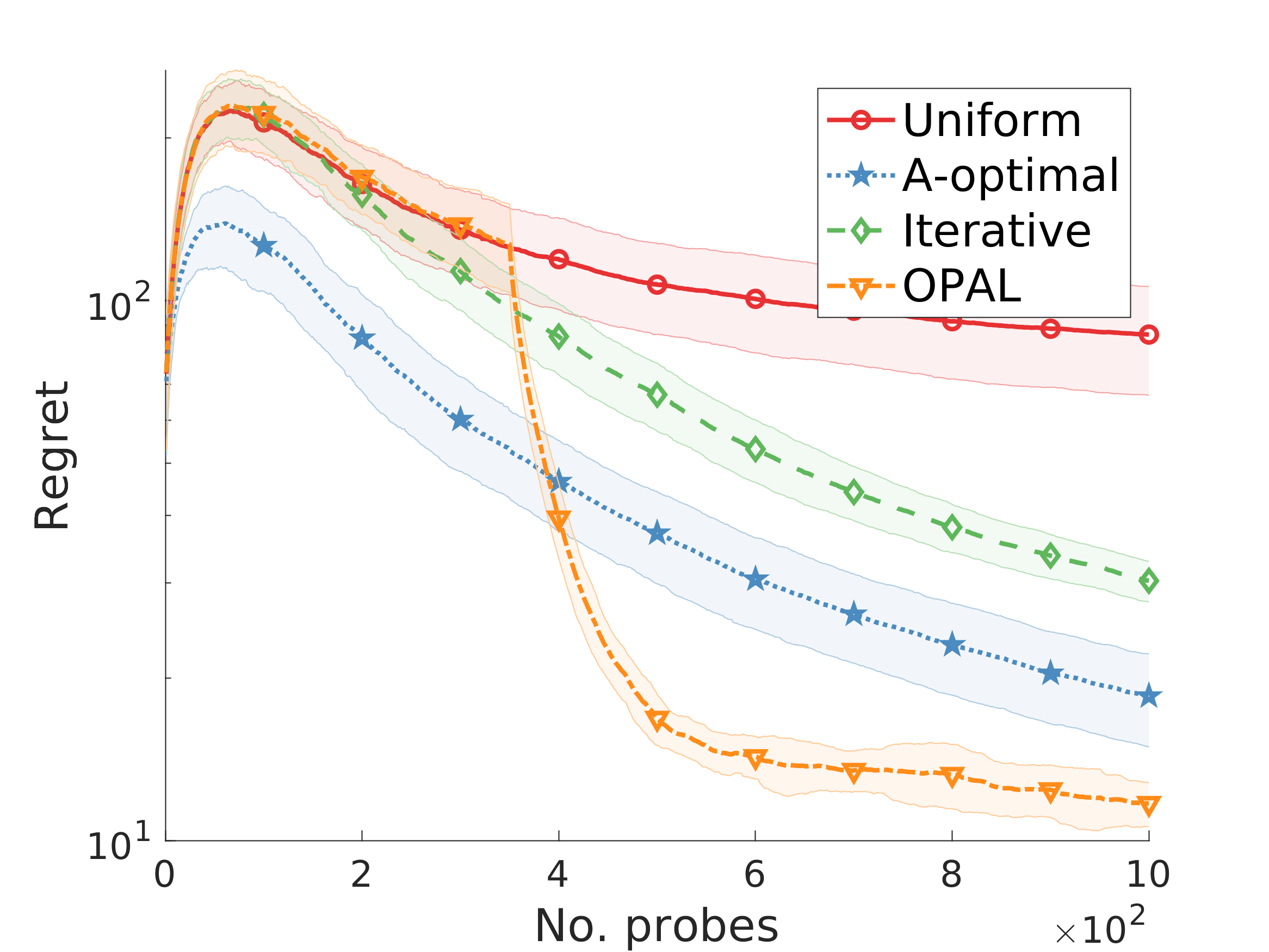}
        \caption{Regret: $R_t{=}F\!(\bm\mu,\!\bm\phi_t\!) {-} F\!(\bm\mu,\!\bm\phi^*\!)$}
    \end{subfigure}%
    \begin{subfigure}{.25\textwidth}
        \centering
        \includegraphics[width=\linewidth]{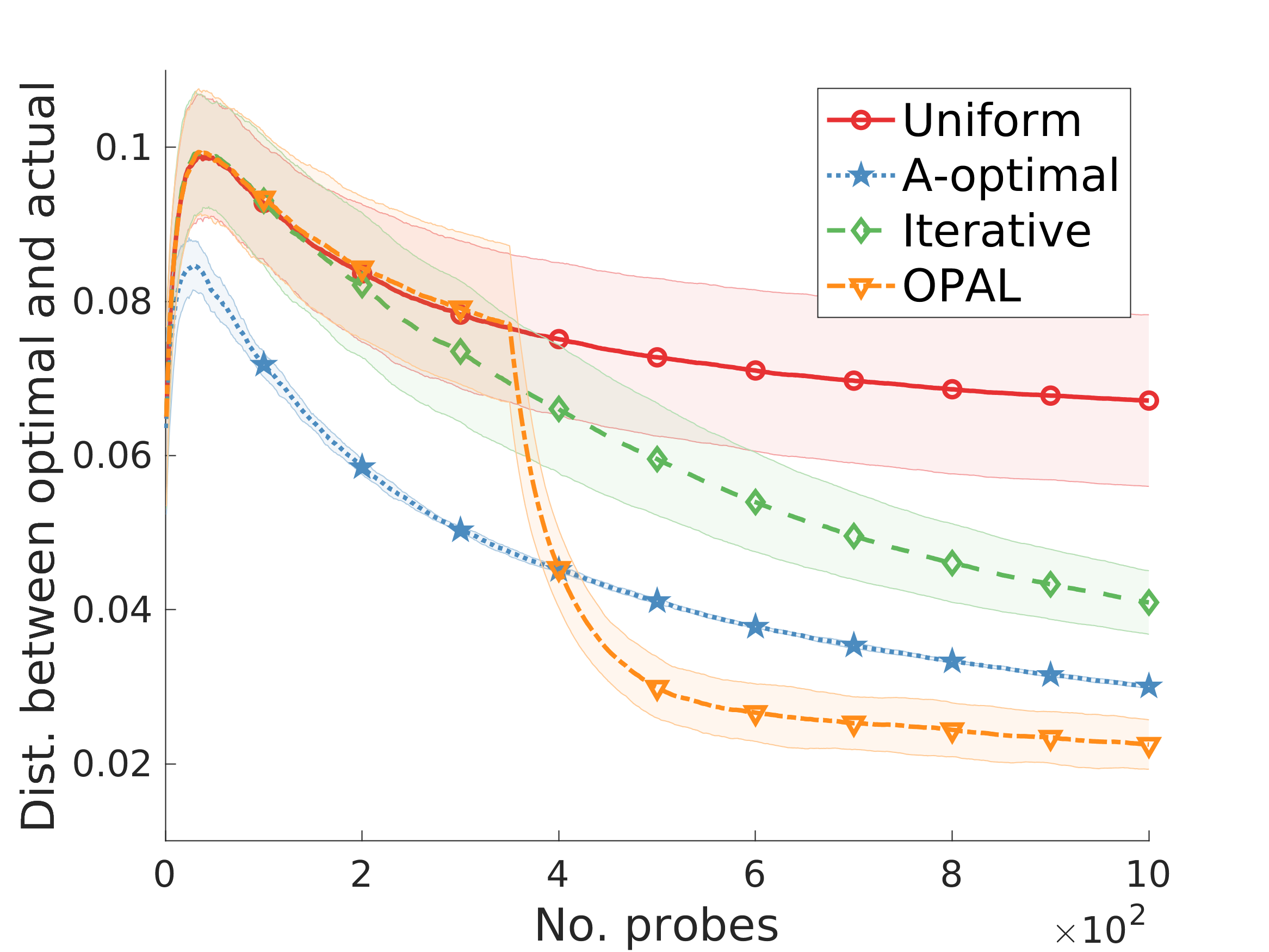}
        \caption{Dist. of actual: \(\norm*{\bm{\phi}^* - {\bm\phi}_t}_2\)}
    \end{subfigure}%
    \begin{subfigure}{.25\textwidth}
        \centering
        \includegraphics[width=\linewidth]{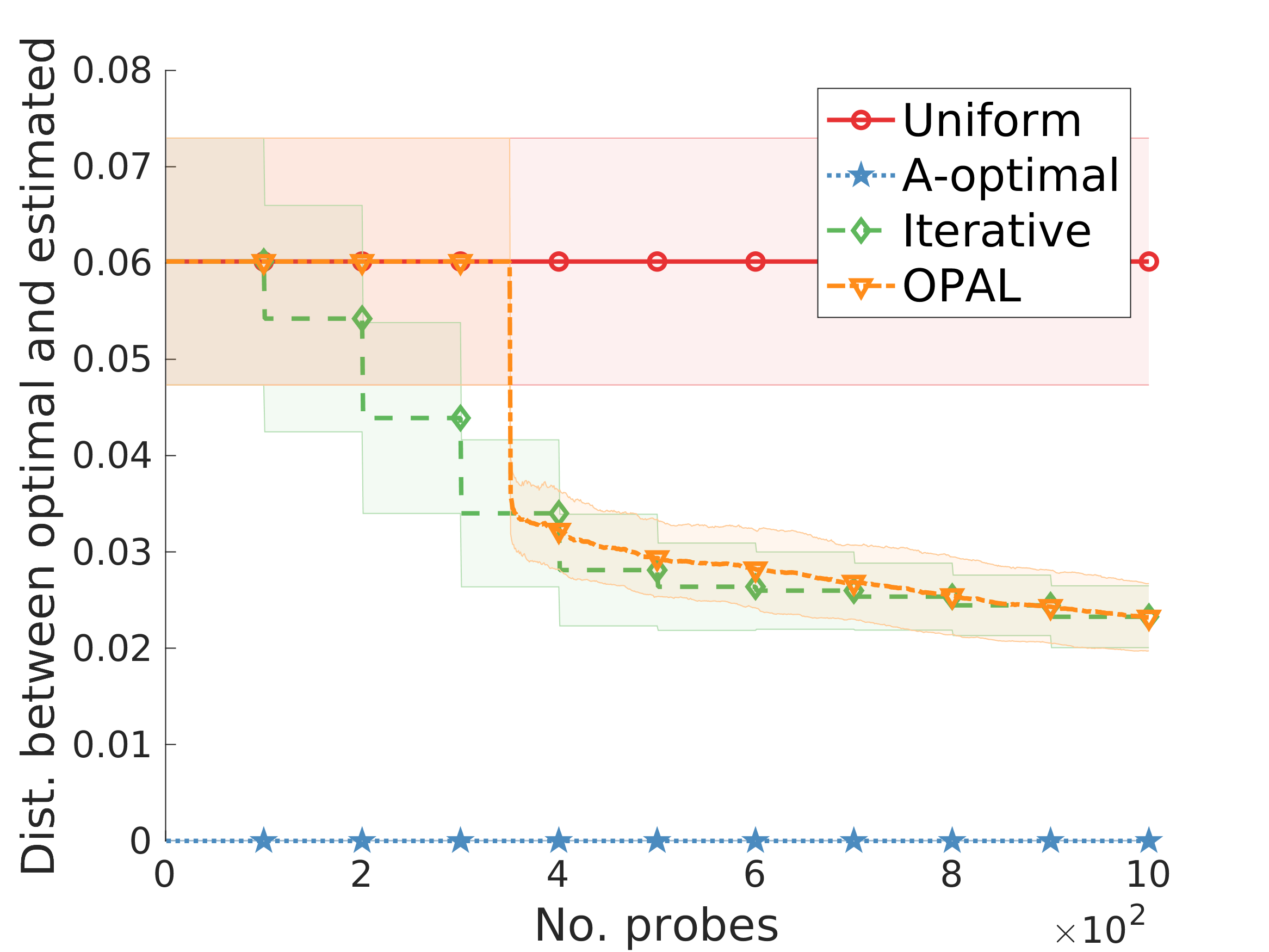}
        \caption{Dist. of estimated: \(\norm*{\bm{\phi}^* - \hat{\bm\phi}_t}_2\)}
    \end{subfigure}%
    \begin{subfigure}{.25\textwidth}
        \centering
        \includegraphics[width=\linewidth]{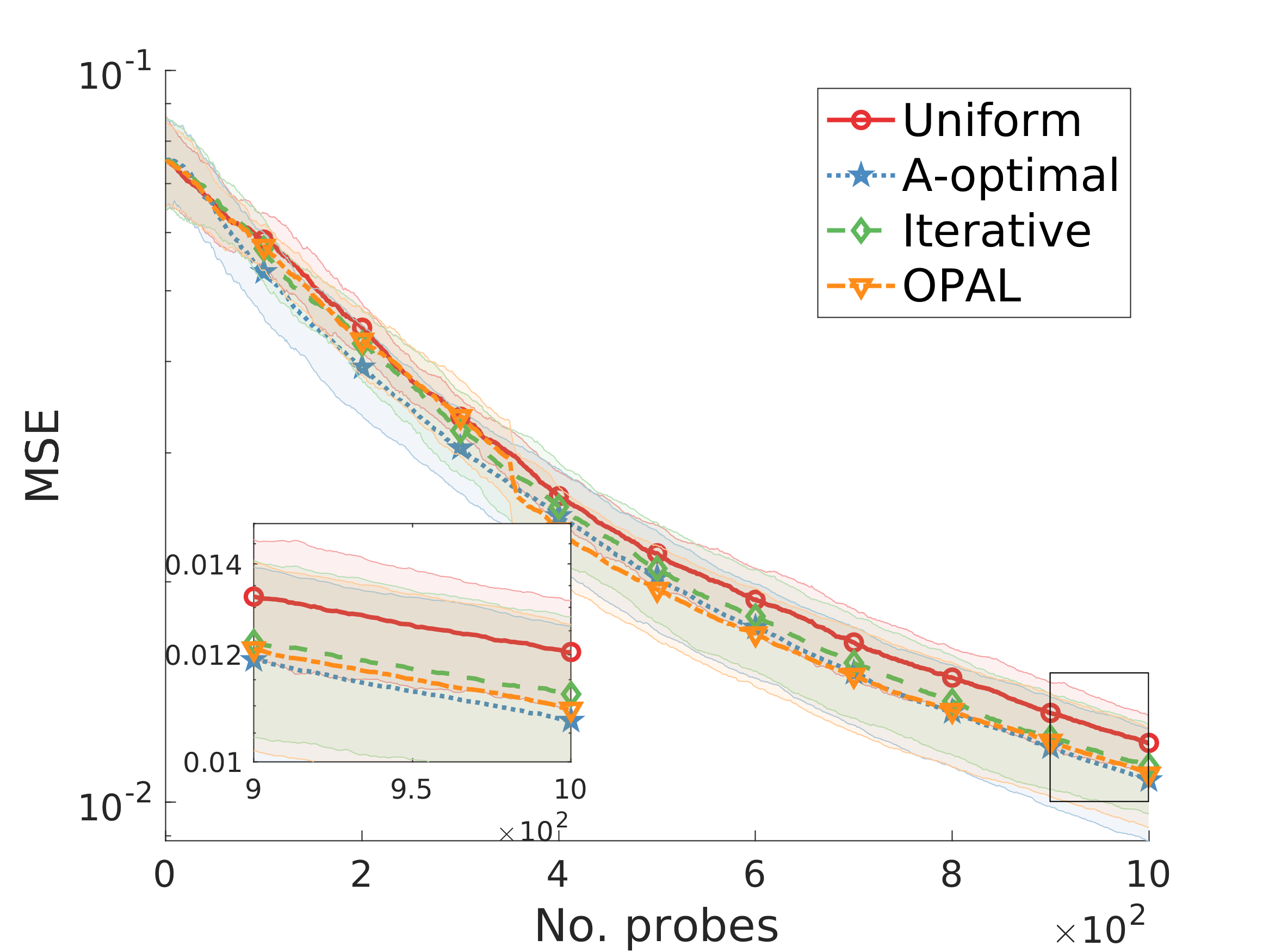}
        \caption{MSE: \(\mathbb{E}[\norm{\hat{\bm\mu}_t - \bm\mu}_2^2]\)}\label{subfig:star-mse}
    \end{subfigure}%
    \caption{Comparison of algorithms for $40$-node star networks (illustrated in Figure~\ref{subfig:star-illustration})}
    \label{fig:classical-star-3}
\end{figure*}

\begin{figure*}[tp]
    \centering
    \begin{subfigure}{.25\textwidth}
        \centering
        \includegraphics[width=\linewidth]{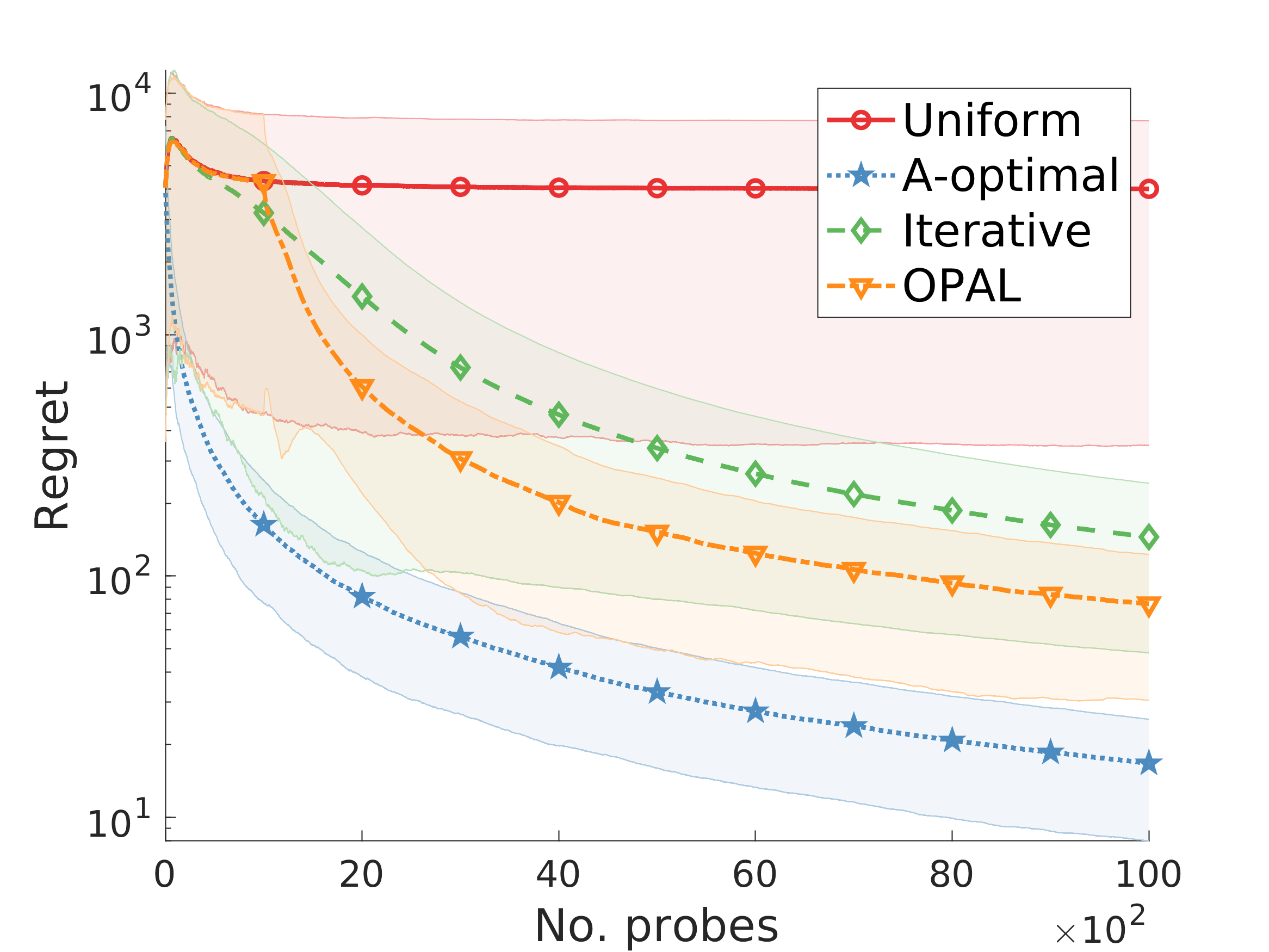}
        \caption{Regret: $R_t{=}F\!(\bm\mu,\!\bm\phi_t\!) {-} F\!(\bm\mu,\!\bm\phi^*\!)$}
    \end{subfigure}%
    \begin{subfigure}{.25\textwidth}
        \centering
        \includegraphics[width=\linewidth]{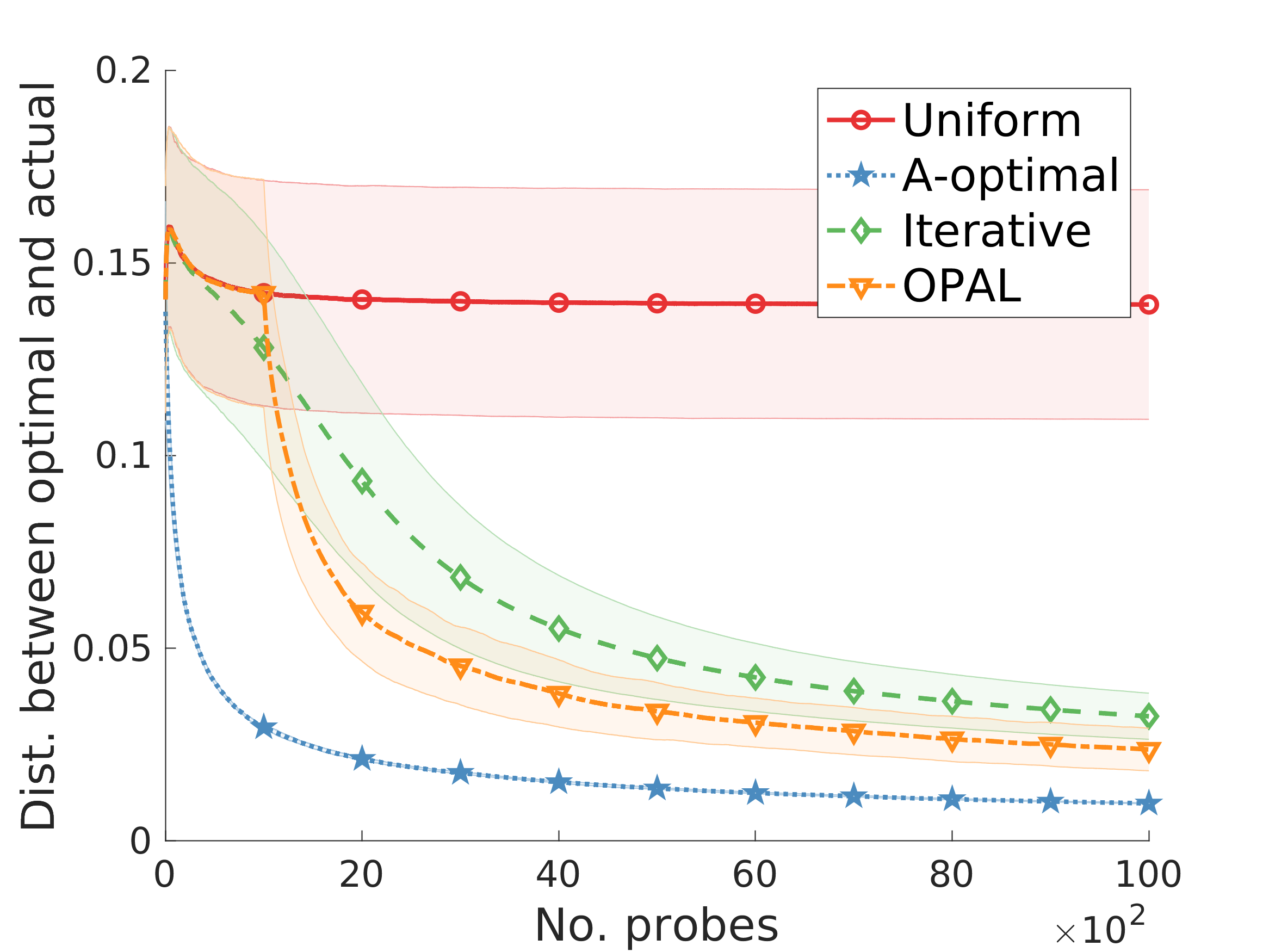}
        \caption{Dist. of actual: \(\norm*{\bm{\phi}^* - {\bm\phi}_t}_2\)}
    \end{subfigure}%
    \begin{subfigure}{.25\textwidth}
        \centering
        \includegraphics[width=\linewidth]{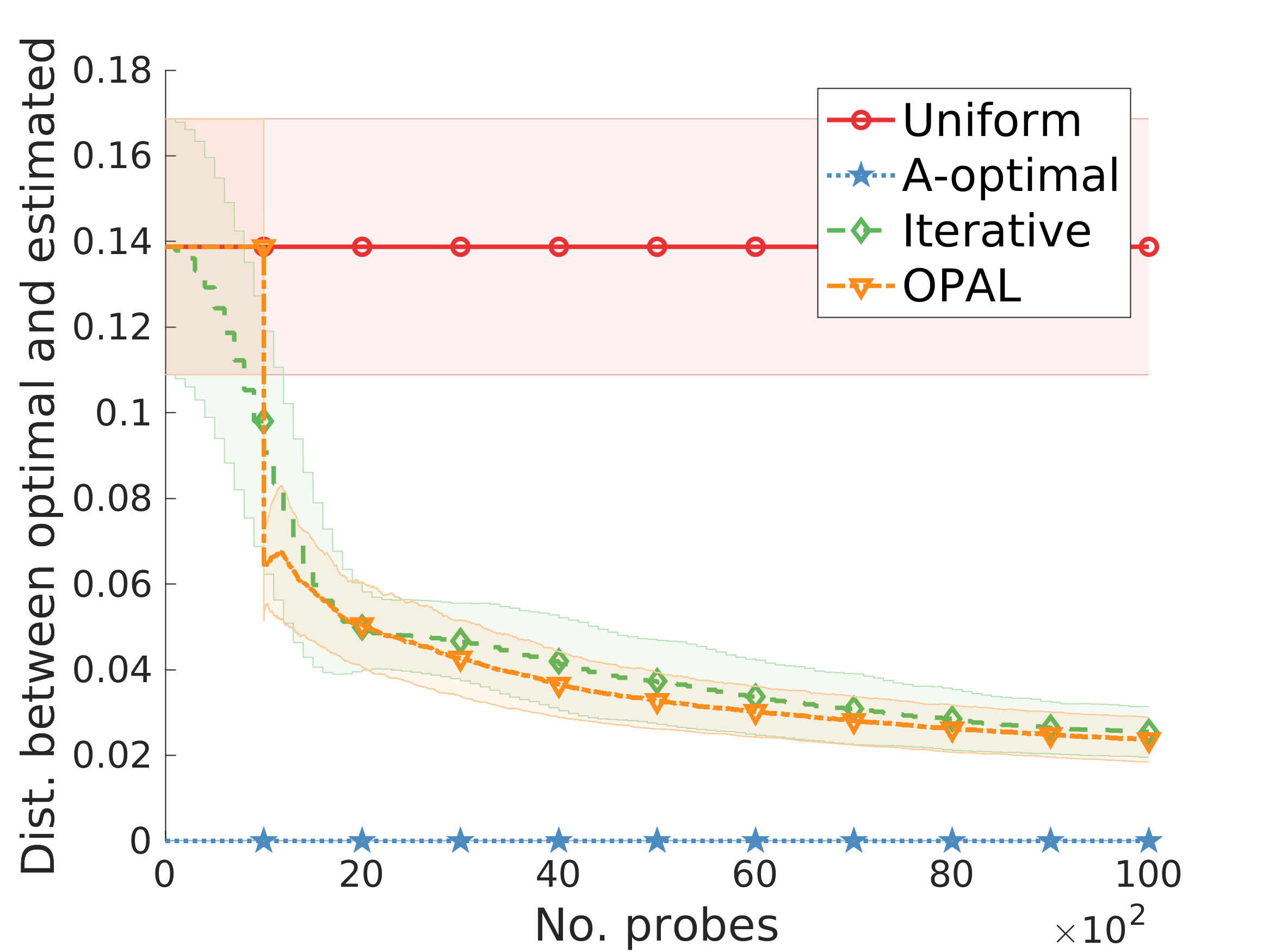}
        \caption{Dist. of estimated: \(\norm*{\bm{\phi}^* - \hat{\bm\phi}_t}_2\)}
    \end{subfigure}%
    \begin{subfigure}{.25\textwidth}
        \centering
        \includegraphics[width=\linewidth]{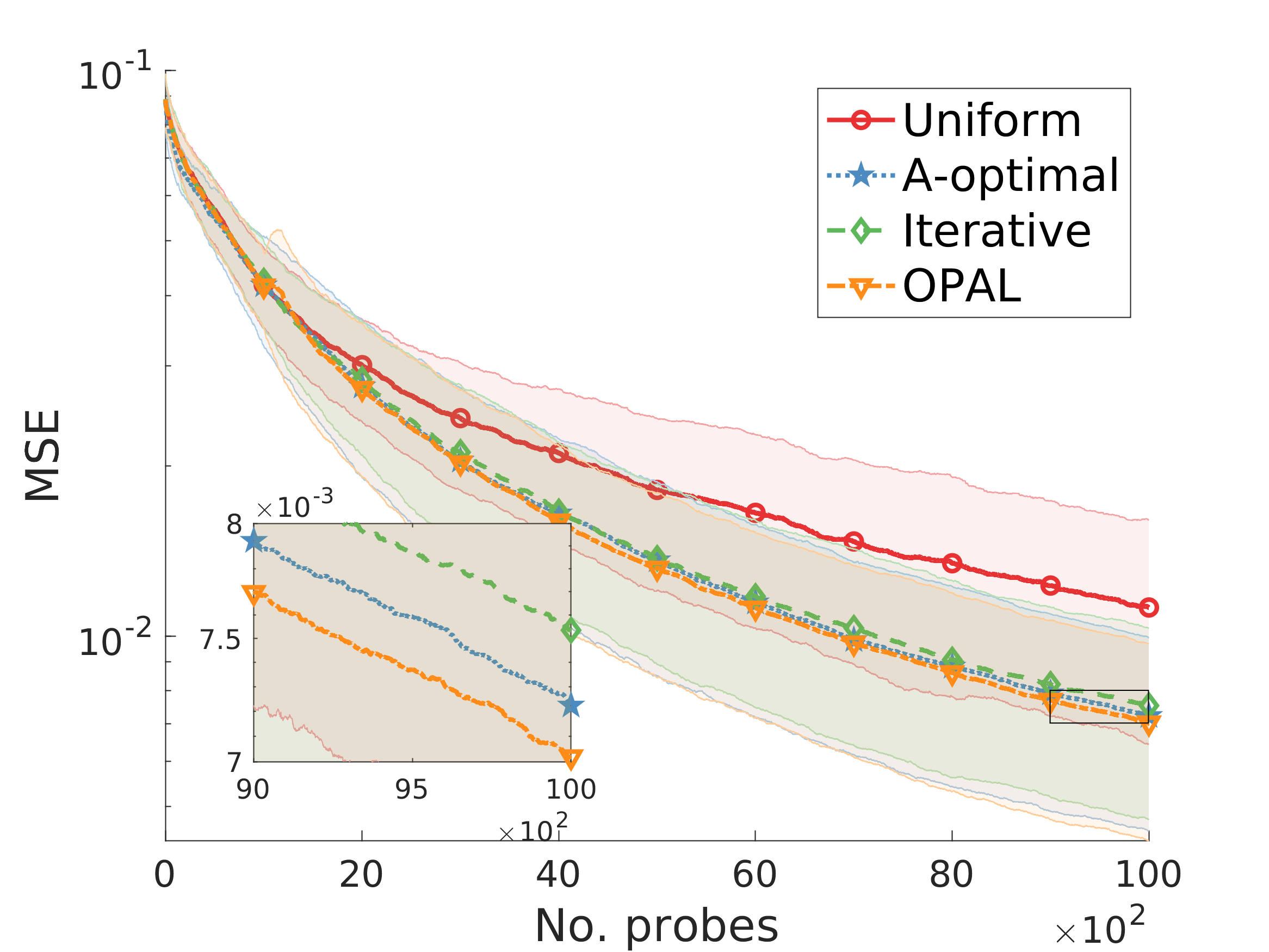}
        \caption{MSE: \(\mathbb{E}[\norm{\hat{\bm\mu}_t - \bm\mu}_2^2]\)}\label{subfig:er-mse}
    \end{subfigure}%
    % \begin{subfigure}{.25\textwidth}
    %     \centering
    %     \includegraphics[width=\linewidth]{figures/imc_figures/Random_Compare_ER_RSeed_Trial_Num_5_Node_20_mc_num=100_num_probes=10000_warm_num=1000_iter_len=100_Bias.png}
    %     \caption{Squared bias: \(\norm{\mathbb{E}[\hat{\bm\mu}_t - \bm\mu]}_2^2\)}
    % \end{subfigure}%
    % \begin{subfigure}{.25\textwidth}
    %     \centering
    %     \includegraphics[width=\linewidth]{figures/imc_figures/Random_Compare_ER_RSeed_Trial_Num_5_Node_20_mc_num=100_num_probes=10000_warm_num=1000_iter_len=100_MSE_minus_Bias.png}
    %     \caption{Empirical variance}
    % \end{subfigure}%
    \caption{Comparison of algorithms for \(20\)-node ER networks (illustrated in Figure~\ref{subfig:ER-illustration})
    }
    \label{fig:classical-ER}
\end{figure*}

\begin{figure*}[tp]
    \centering
    \begin{subfigure}{.25\textwidth}
        \centering
        \includegraphics[width=\linewidth]{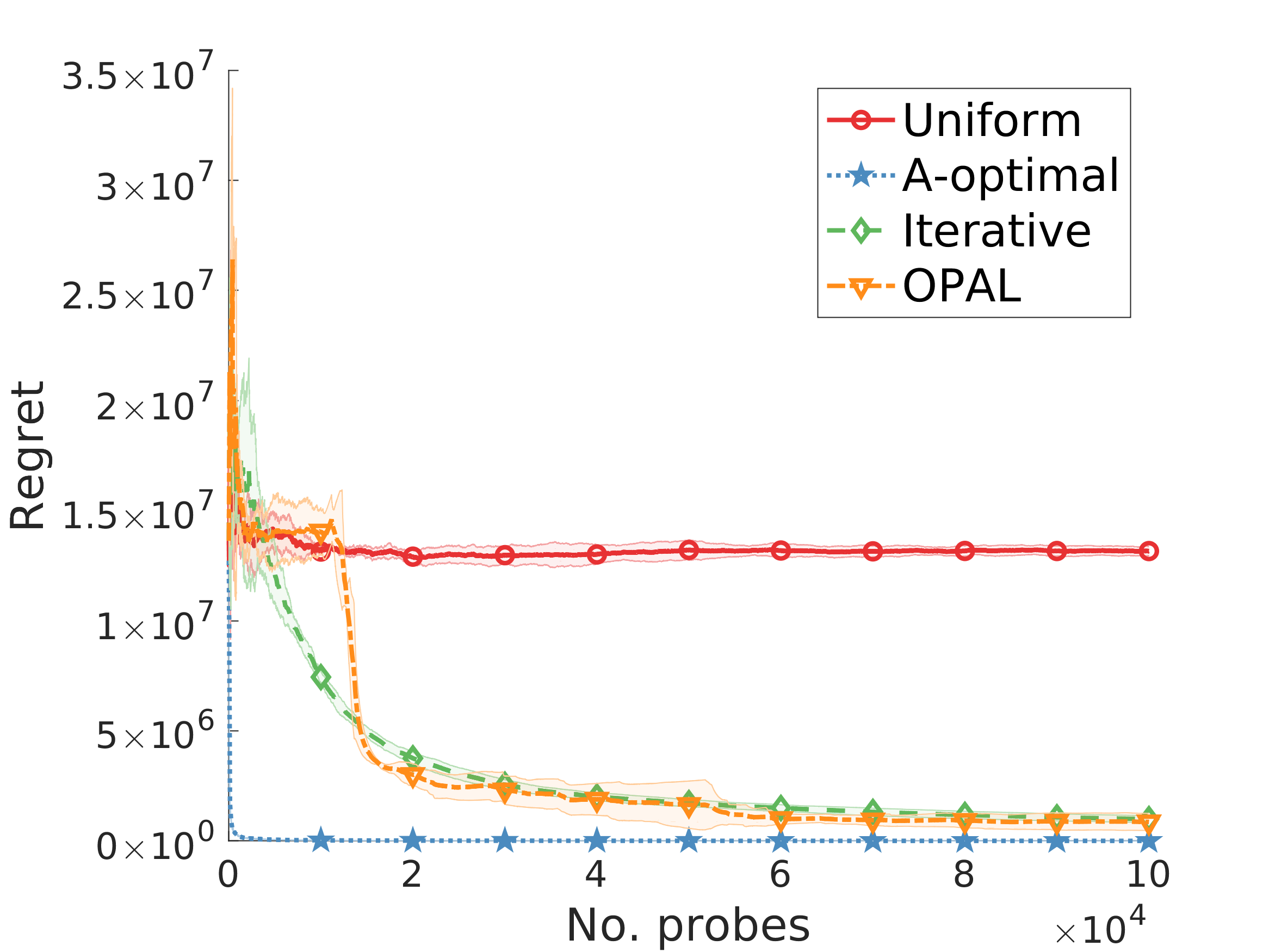}
        \caption{Regret: $R_t{=}F\!(\bm\mu,\!\bm\phi_t\!) {-} F\!(\bm\mu,\!\bm\phi^*\!)$}
    \end{subfigure}%
    \begin{subfigure}{.25\textwidth}
        \centering
        \includegraphics[width=\linewidth]{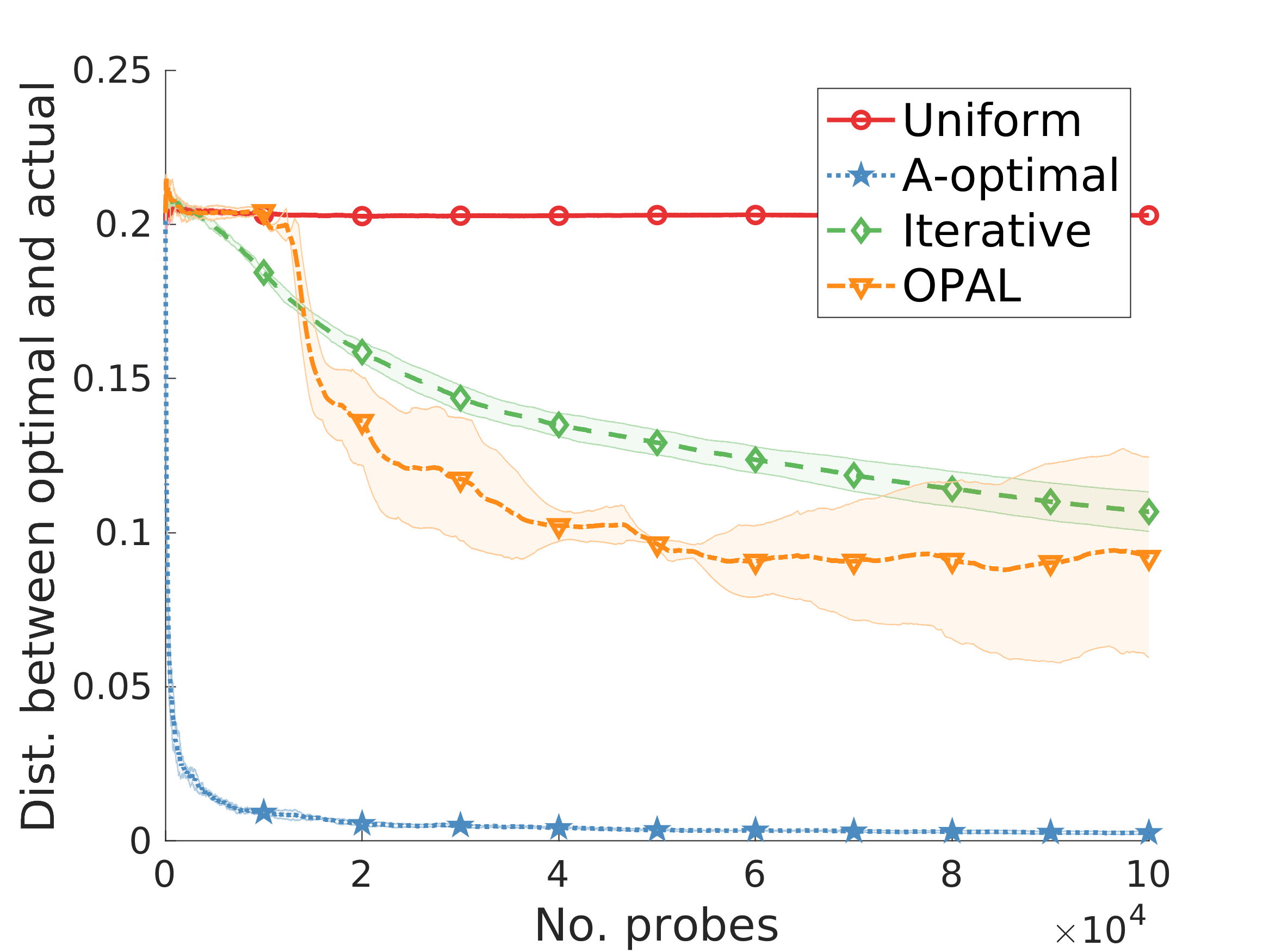}
        \caption{Dist. of actual: \(\norm*{\bm{\phi}^* - {\bm\phi}_t}_2\)}
    \end{subfigure}%
    \begin{subfigure}{.25\textwidth}
        \centering
        \includegraphics[width=\linewidth]{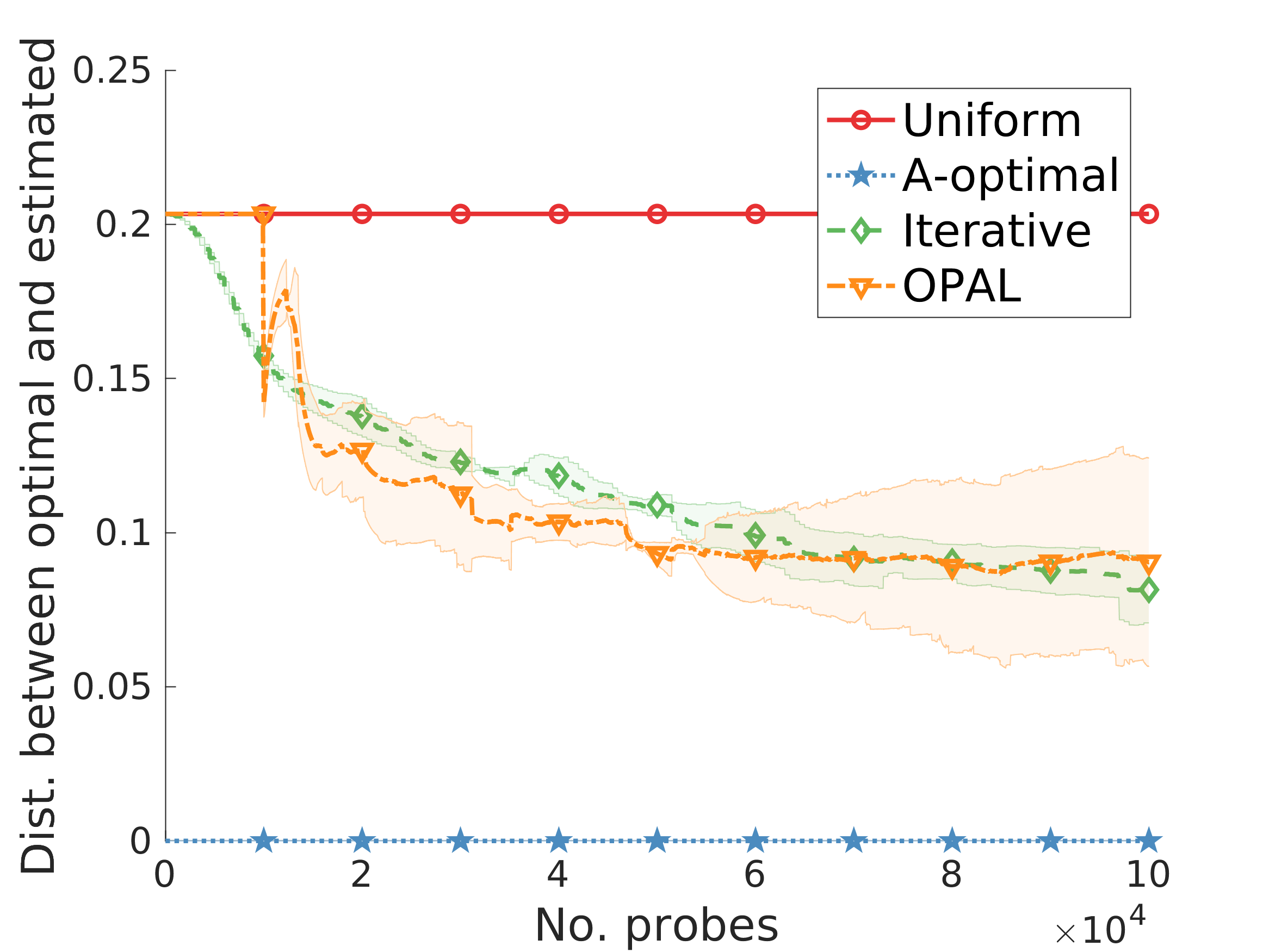}
        \caption{Dist. of estimated: \(\norm*{\bm{\phi}^* - \hat{\bm\phi}_t}_2\)}
    \end{subfigure}%
    \begin{subfigure}{.25\textwidth}
        \centering
        \includegraphics[width=\linewidth]{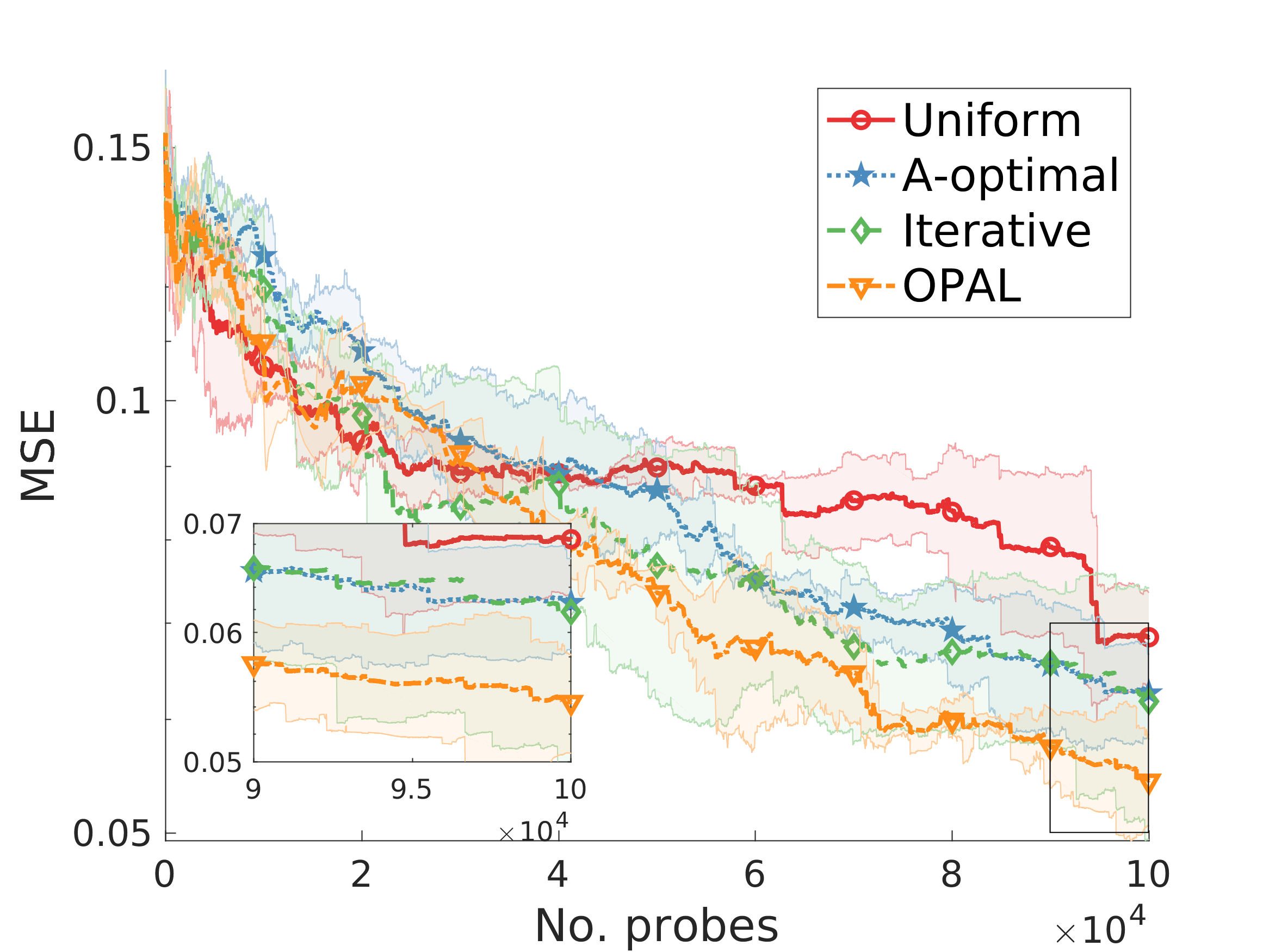}
        \caption{MSE: \(\mathbb{E}[\norm{\hat{\bm\mu}_t - \bm\mu}_2^2]\)}\label{subfig:roofnet-mse}
    \end{subfigure}%
    % \begin{subfigure}{.25\textwidth}
    %     \centering
    %     \includegraphics[width=\linewidth]{figures/imc_figures/Compare_Roofnet_RSeed_1_dataset_4_Node_37_mc_num=15_num_probes=100000_warm_num=10000_iter_len=500_Bias.png}
    %     \caption{Squared bias: \(\norm{\mathbb{E}[\hat{\bm\mu}_t - \bm\mu]}_2^2\)}
    % \end{subfigure}%
    % \begin{subfigure}{.25\textwidth}
    %     \centering
    %     \includegraphics[width=\linewidth]{figures/imc_figures/Compare_Roofnet_RSeed_1_dataset_4_Node_37_mc_num=15_num_probes=100000_warm_num=10000_iter_len=500_MSE_minus_Bias.png}
    %     \caption{Empirical variance}
    % \end{subfigure}%
    \caption{Comparison of algorithms for Roofnet (illustrated in Figure~\ref{subfig:roofnet-illustration})
    }
    \label{fig:classical-Roofnet}
\end{figure*}

We conducted experiments on three network topologies as illustrated in Figure~\ref{fig:topology-illustration}:
(1) a \(40\)-node star network, as shown in Figure~\ref{subfig:star-illustration};
(2) a random Erdős-Rényi (ER) network with \(20\) nodes and approximately \(35\) links (the specific number varies across scenarios), illustrated in Figure~\ref{subfig:ER-illustration}; and
(3) a topology based on Roofnet dataset~\citep{aguayo2004link} with \(37\) nodes and \(114\) links, as shown in Figure~\ref{subfig:roofnet-illustration}.
% \todo{\tdmark~ask Janice to revise the figure details. Remove frame. Enlarge node size and make link thicker.}
% and a random Erdős-Rényi (ER) network with \(20\) nodes and approximately \(35\) links (the specific number varies across different scenarios), illustrated in Figure~\ref{fig:classical-ER}. 
% \todo{\ting{there are no ``illustrations'' of the evaluated topologies right now; if adding those illustrations (I suggest so), can refer from here (together with Roofnet)}}
The initial sampling phase of \opal is set to \(T_0 = 0.1 T\) for both ER and Roofnet and \(T_0 = 0.35 T\) for the star network, where \(T\) is the total number of probes.
In Appendix~\ref{sec:lazy-simulation}, we conduct experiments to investigate the impact of lazy update batch \(B\) on \opal-lazy (Algorithm~\ref{alg:lazy-chasing}).

For each of these three experiments, we present four subfigures:
(a) regret,
(b) the distance between the \emph{actual} allocation and the optimal allocation,
(c) the distance between the \emph{estimated} optimal allocation and the true optimal allocation, and
(d) the mean squared error (MSE).
% (e) the squared bias of the estimated loss rates, and
% (f) the empirical variance, which equals the difference between MSE and the squared bias.
\emph{Smaller values for these four metrics indicate better algorithm performance.}
Regret (a) is the primary objective that \opal seeks to minimize.
MSE (d) is the practical objective of network tomography.

% The key difference between metrics (b) and (c) is that (b) evaluates the actual allocation \(\bm\phi_t\) of the algorithm, while (c) assesses the latest estimated allocation \(\hat{\bm\phi}_t\).

Figures~\ref{fig:classical-star-3},~\ref{fig:classical-ER} and~\ref{fig:classical-Roofnet} highlight the following key findings:
(I) \opal consistently outperforms both the iterative algorithm and uniform allocation across all metrics.
Additionally, \opal often matches the performance of the A-optimal allocation (offline optimal).
(II) The trends in regret (\(F(\bm\mu, \bm\phi_t) - F(\bm\mu, \bm\phi^*)\)) in subfigures (a) closely follow the trends in the distance between the actual allocation and the optimal allocation \(\|\bm{\phi}_t - \bm\phi^*\|_2\) in subfigures (b). This aligns with the identified Lipschitz continuity of the A-optimal design criterion (Condition~\ref{cond:lipschitz}).
(III) Comparing the distance between the estimated allocation and the optimal allocation \(\|\bm\phi^* - \hat{\bm\phi}_t\|_2\) in subfigures (c) with the distance of the actual allocation in subfigures (b), we observe that while both \opal and the iterative algorithm produce good estimated allocations \(\hat{\bm\phi}_t\) after a sufficient number of probes in subfigures (c), only \opal achieves a good actual allocation \(\bm\phi_t\) close to the optimal in subfigures (b). This underscores the importance of \opal's online nature, particularly the chase optimal bound strategy, in achieving a favorable actual allocation and minimizing regret.

Figures~\ref{subfig:star-mse},~\ref{subfig:er-mse} and~\ref{subfig:roofnet-mse} presents the MSE of the estimated parameters, which represents algorithms' practical performance.
We observe that (IV) the MSE of \opal is consistently lower than that of the two baselines, and it closely approaches the ideal A-optimal allocation (sometimes even surpasses it, likely due to the biased nature of the MLE estimator used in loss tomography).
Comparing with the state-of-the-art baseline (Iterative), \opal achieves \textbf{\(2.60\%\), \(7.36\%\), \(13.64\%\)} improvements in MSE for the star, ER, and Roofnet networks, respectively.

\section{Case Study for Quantum Network Tomography}
\label{sec:case-study-quantum}

This section presents a case study on the bit-flip quantum channel tomography in the quantum star network. Section~\ref{subsec:quantum-multicast-case} verifies the conditions outlined in Section~\ref{subsec:network-tomography-conditions} and presents the specific regret of \opal.
Subsequently, in Section~\ref{subsec:empirical-quantum}, we provide empirical results for the quantum scenario, validating the superiority of \opal over existing methods.

\subsection{Analytical Case 2: Quantum Bit-Flip Star Network Multicast Setting}\label{subsec:quantum-multicast-case}

We consider an \(L\)-link star network for A-optimal experimental design, where each link \(\ell \in \mathcal{L}\) is a \emph{quantum} bit-flip channel with an unknown bit-flip probability \(\mu_\ell\). As no unicast protocol in quantum network tomography has been proposed for this setting, we consider the root-independent (RI) multicast protocol devised by \citet{de2023characterization}.
The probing experiment set \(\mathcal{M}\) consists of \(M = L\) multicast probes, each selecting one link \(\ell \in \mathcal{L}\) as the root, multicasting quantum states to the remaining \(L - 1\) links, and measuring the received qubits at these links.
Denote \(A_{m,\ell}\) as the total non-flip counts from the measurements of probe \(m\) for each link \(\ell \neq m\).
Further details regarding the bit-flip channel and the RI multicast protocol are provided in Section~\ref{subsec:quantum-tomography-model} and Figure~\ref{fig:root-independent}.

\textbf{Lipschitz Continuity of the A-Optimal Design.}
We first derive the trace of the inverse of the Fisher information matrix for the quantum star network with bit-flip quantum channels as follows,
\(
F(\bm\mu; \bm\phi) = \tr \bm I^{-1} (\bm\mu; \bm\phi) = \sum_{m \in \mathcal{M}} \frac{(1 - \mu_m) \mu_m}{1 - \phi_m}.
\)
This expression verifies the Lipschitz continuity of the A-optimal design criterion (first part of Condition~\ref{cond:lipschitz}).
Next, by applying the method of Lagrange multipliers, we derive the A-optimal solution \(\bm\phi^*\) for the quantum star network with bit-flip quantum channels. The optimal allocation depends on the link parameters, as presented in~\eqref{eq:a-optimal-quantum-star} of Algorithm~\ref{alg:a-optimal-bit-flip}. Specifically, one first sorts in descending order the links by the value of \(\sqrt{(1 - \mu_\ell) \mu_\ell}\), and then iteratively eliminates the link with the largest value of \(\sqrt{(1 - \mu_\ell) \mu_\ell}\) until the condition in Line~\ref{line:eliminate-condition} is not satisfied. The optimal allocation is then given by~\eqref{eq:a-optimal-quantum-star} for the remaining links in the candidate set \(\mathcal{M}_{\text{opt}}\).

\begin{algorithm}[tp]
    \caption{A-Optimal Allocation for Bit-Flip Tomograph in Quantum Star Networks}\label{alg:a-optimal-bit-flip}
    \begin{algorithmic}[1]
        \Input link parameters \(\bm\mu\) and candidate probe set \(\mathcal{M}_{\text{opt}} \gets \{1,2,3,\dots, M\}\) (\(M=L\))
        \While{\(\sum_{\ell \in \mathcal{M}_{\text{opt}}} \sqrt{(1 - \mu_\ell) \mu_\ell} -  (\abs{\mathcal{M}_{\text{opt}}} - 1) \max_{\ell' \in \mathcal{M}_{\text{opt}}} \sqrt{(1 - \mu_{\ell'}) \mu_{\ell'}} < 0\)} \label{line:eliminate-condition}
        \State \(\mathcal{M}_{\text{opt}} \gets \mathcal{M}_{\text{opt}} \setminus \{\argmax_{\ell' \in \mathcal{M}_{\text{opt}}} \sqrt{(1 - \mu_{\ell'}) \mu_{\ell'}}\}\)
        \EndWhile
        \Output \(\phi_\ell^* =\)
        % \begin{fleqn}
        \begin{equation}\label{eq:a-optimal-quantum-star}
            \begin{cases}
                \!\!\frac{\sum_{\ell' \in \mathcal{M}_{\text{opt}}} \sqrt{(1 {-} \mu_{\ell'}) \mu_{\ell'}} - (\abs{\mathcal{M}_{\text{opt}}} {-} 1)\sqrt{(1 {-} \mu_\ell) \mu_\ell}}{\sum_{\ell' \in \mathcal{M}_{\text{opt}}} \sqrt{(1 {-} \mu_{\ell'}) \mu_{\ell'}}}
                 & \!\! \!\! \text{for } \ell {\in} \mathcal{M}_{\text{opt}}
                \\
                0
                 & \!\! \!\! \text{for } \ell {\notin} \mathcal{M}_{\text{opt}}
            \end{cases}
        \end{equation}
        % \end{fleqn}
    \end{algorithmic}
\end{algorithm}

While the expression for the optimal allocation solution \(\bm\phi^*\) in~\eqref{eq:a-optimal-quantum-star} is piecewise, these pieces are continuous at all critical points with respect to the link parameters. Because altering the inequality in Line~\ref{line:eliminate-condition} from ``less than'' to ``less than or equal to'' does not change the solution.
Hence, the Lipschitz continuity of the A-optimal design allocation (second part of Condition~\ref{cond:lipschitz}) is verified.

\textbf{MLE Estimator for Link Parameters.}
The observations at each link \(\ell\) from the RI multicast follow a Bernoulli distribution with success probability \(\mu_\ell\)~\citep[Table II and Eq. (17)]{de2023characterization}. Hence, the maximum likelihood estimator (MLE) for the link parameter \(\mu_\ell\) is calculated by the observations from all other RI probes \(m \in \mathcal{M} \setminus \{\ell\}\) as follows:
\begin{equation}\label{eq:mle-quantum-star}
    \hat\mu_\ell = \frac{\sum_{m \in \mathcal{M} \setminus \{\ell\}} A_{m,\ell}}{\sum_{m \in \mathcal{M} \setminus \{\ell\}} S_{m}}, \quad \text{for all links } \ell \in \mathcal{L},
\end{equation}
where we recall that \(S_{m}\) is the total number of qubits sent by probe \(m\), and \(A_{m,\ell}\) is the total number of qubits not flipped at link \(\ell\) from the measurements of probe \(m\).
% This estimator is unbiased.

\textbf{Confidence Interval of MLE Estimator.}
By Hoeffding's inequality, the confidence interval for the MLE estimator in~\eqref{eq:mle-quantum-star} is given by,
for some constant \(C_2 > 0\)
\begin{equation}
    \label{eq:confidence-interval-quantum-star}
    \mu_\ell \in \left( \hat\mu_\ell - C_2\sqrt{\frac{\log \delta^{-1}}{\sum_{m \in \mathcal{M} \setminus \{\ell\}} S_{m}}}, \hat\mu_\ell + C_2\sqrt{\frac{\log \delta^{-1}}{\sum_{m \in \mathcal{M} \setminus \{\ell\}} S_{m}}} \right).
\end{equation}
with confidence level \(1 - \delta\), for all links  \(\ell \in \mathcal{L}.\)  Therefore, Condition~\ref{cond:finite-confidence-interval} is verified with all \(\gamma_{m,\ell} = \frac{1}{2}\) and thus \(\gamma_{\min} = \frac{1}{2}.\)

\begin{figure*}[tp]
    \centering
    \begin{subfigure}{.25\textwidth}
        \centering
        \includegraphics[width=\linewidth]{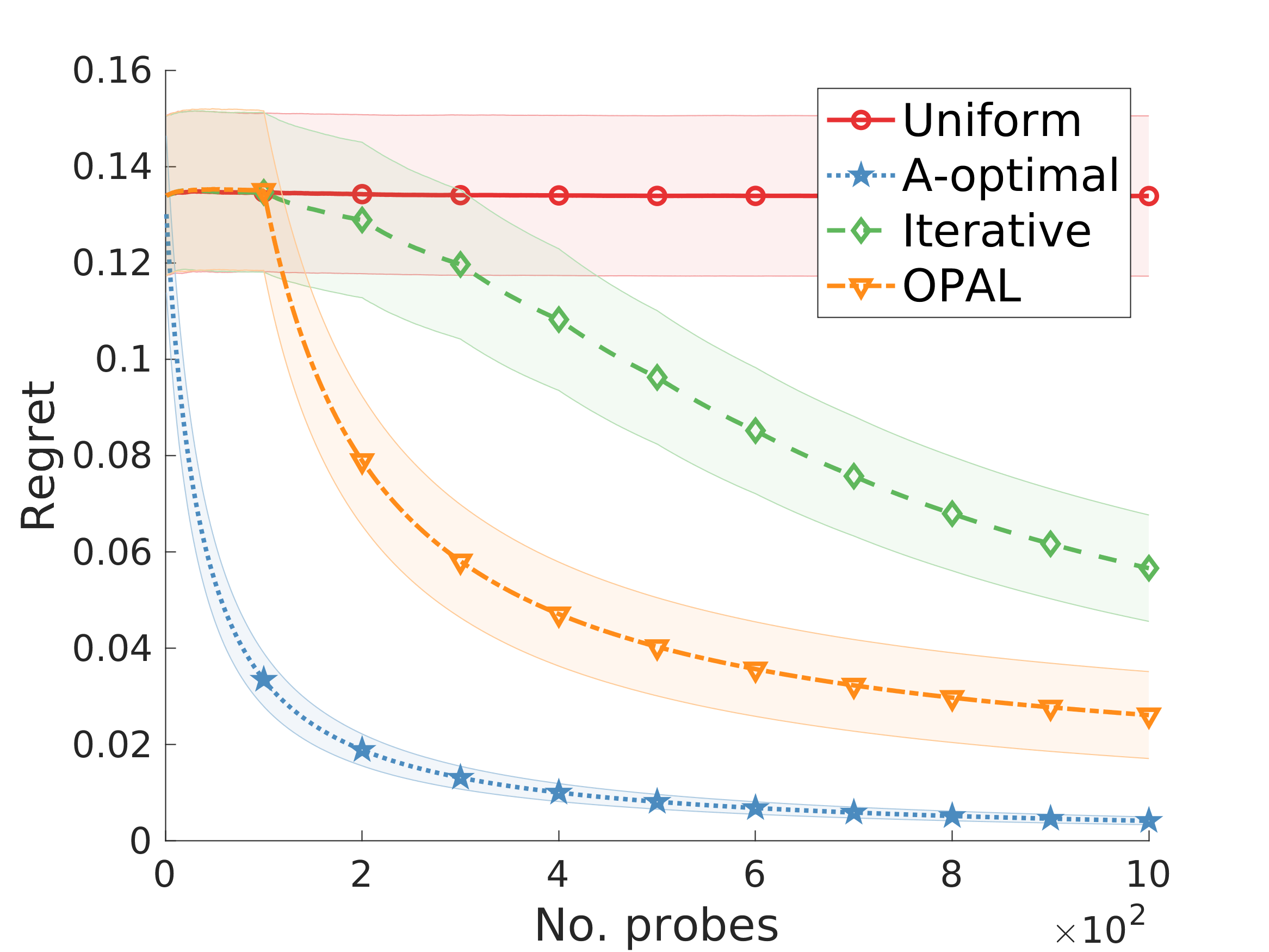}
        \caption{Regret: $R_t{=}F\!(\bm\mu,\!\bm\phi_t\!) {-} F\!(\bm\mu,\!\bm\phi^*\!)$}
    \end{subfigure}%
    \begin{subfigure}{.25\textwidth}
        \centering
        \includegraphics[width=\linewidth]{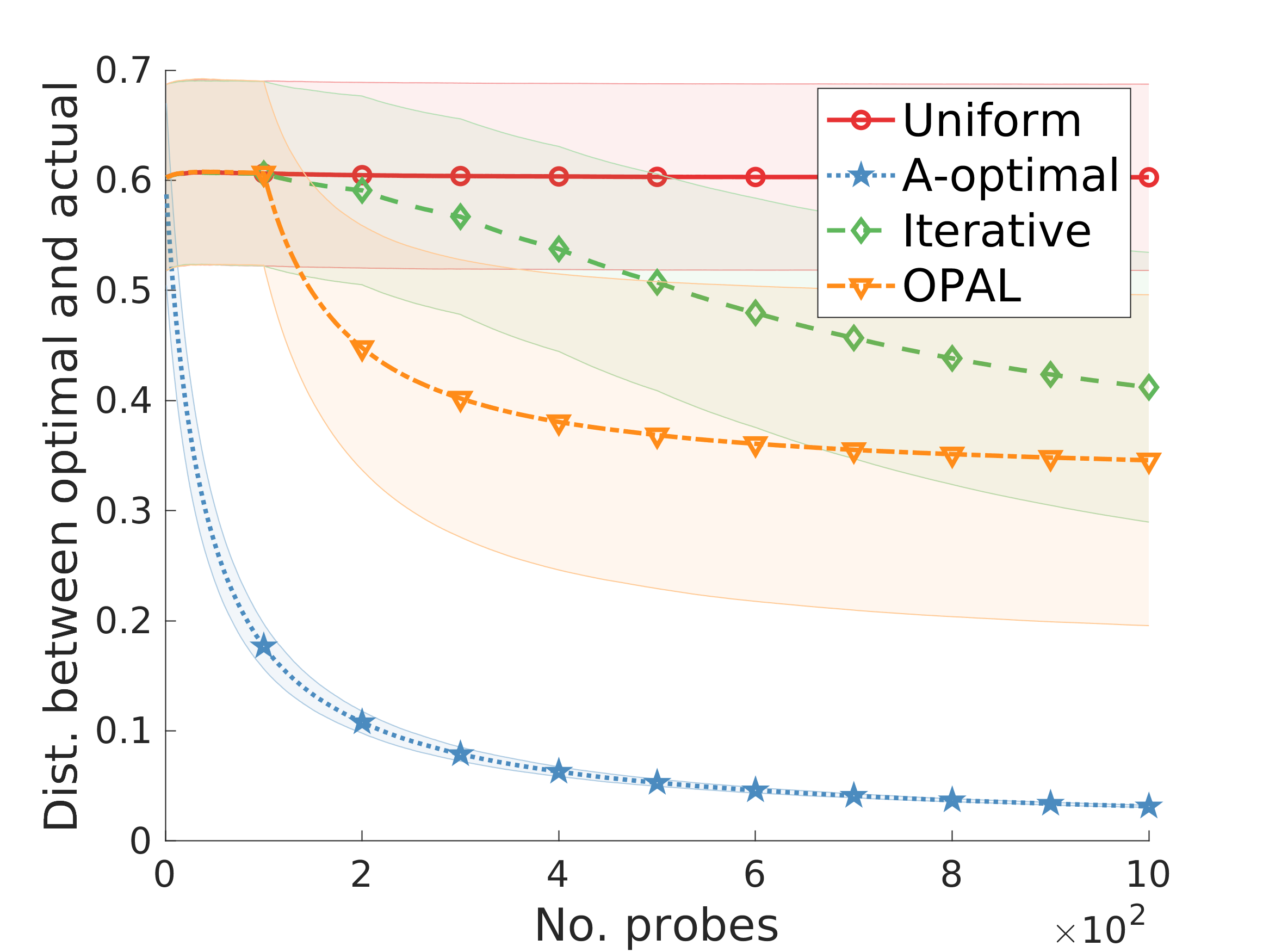}
        \caption{Dist. of actual: \(\norm*{\bm{\phi}^* - {\bm\phi}_t}_2\)}
    \end{subfigure}%
    \begin{subfigure}{.25\textwidth}
        \centering
        \includegraphics[width=\linewidth]{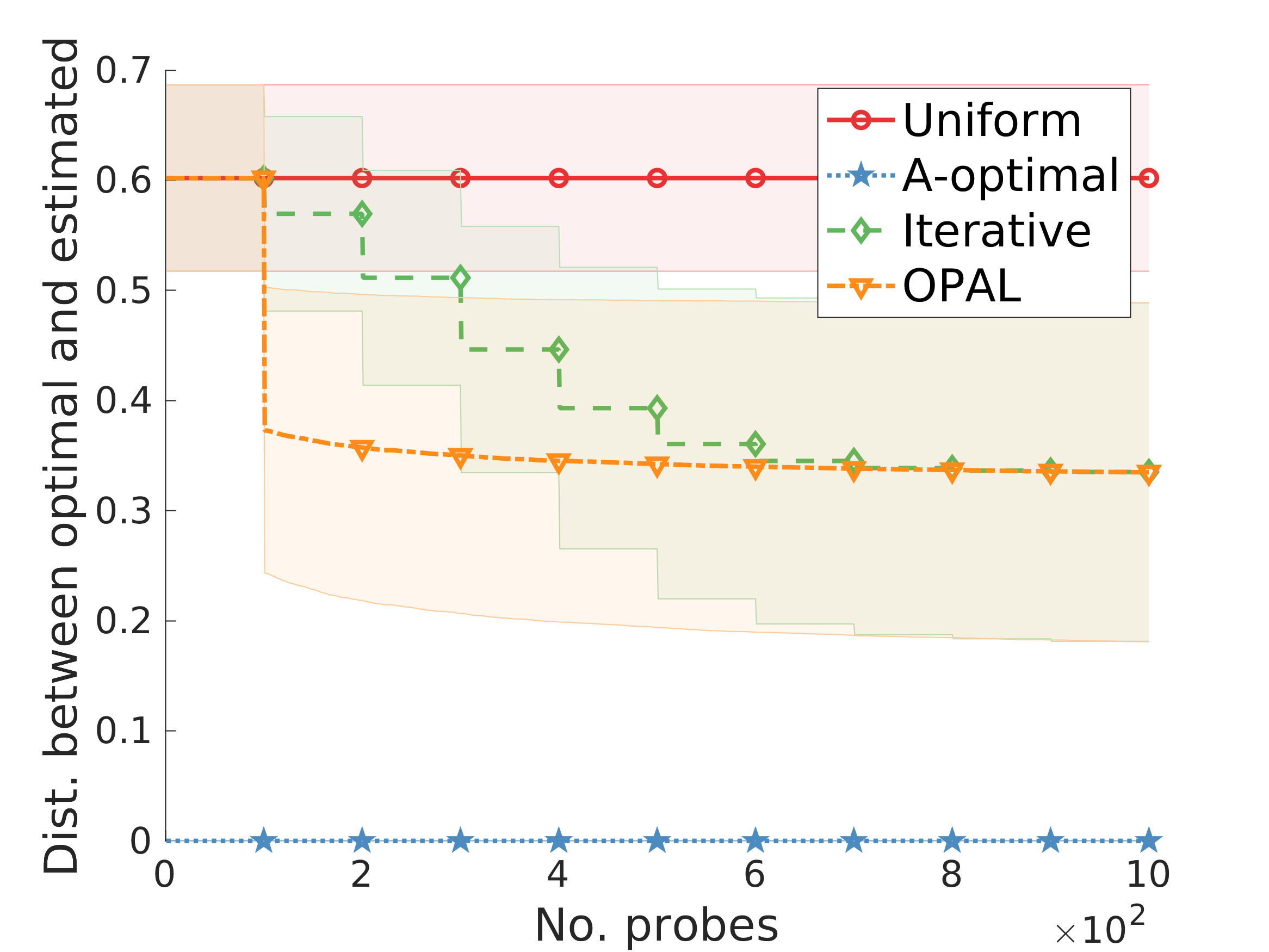}
        \caption{Dist. of estimated: \(\norm*{\bm{\phi}^* - \hat{\bm\phi}_t}_2\)}
    \end{subfigure}%
    \begin{subfigure}{.25\textwidth}
        \centering
        \includegraphics[width=\linewidth]{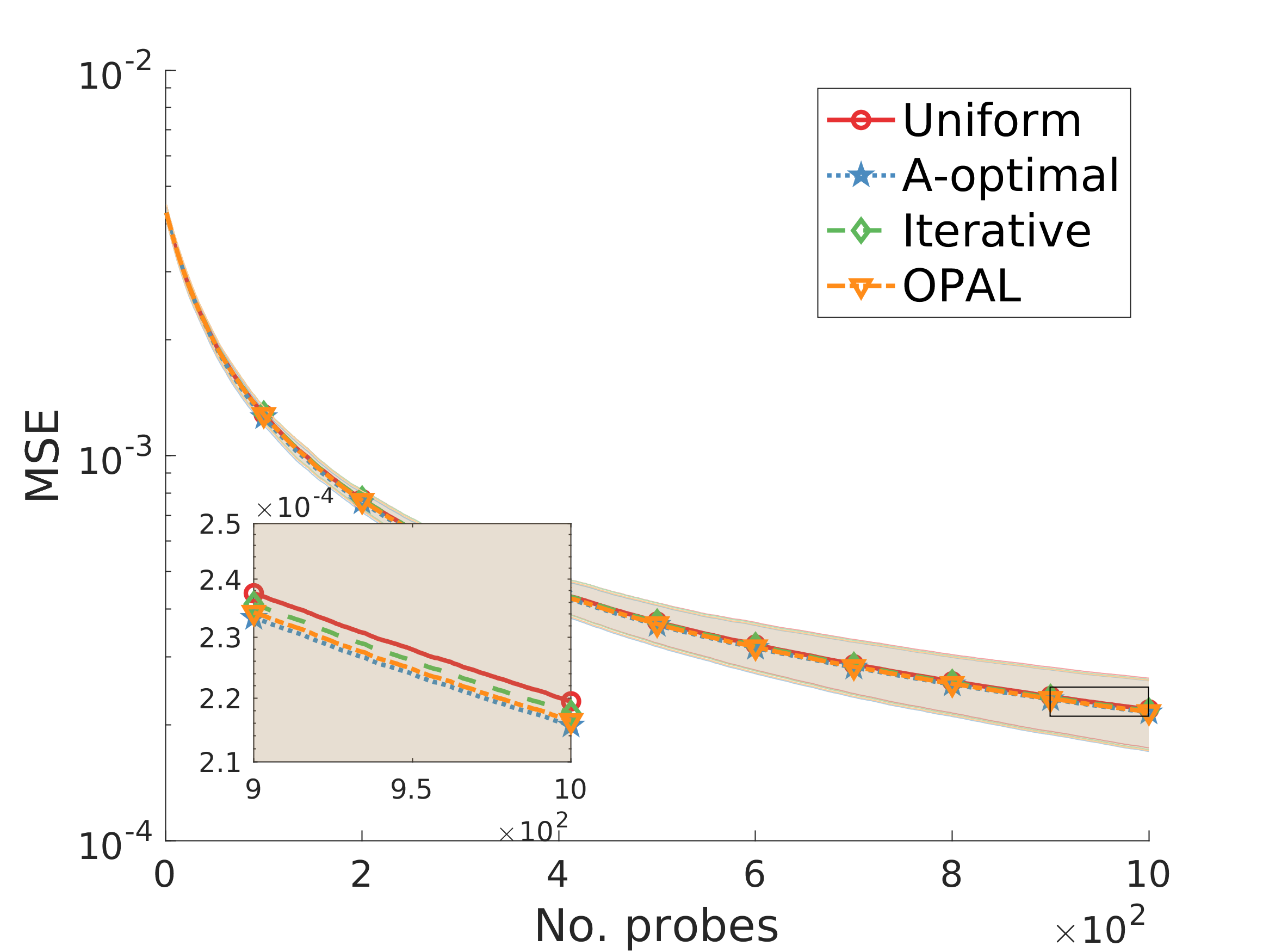}
        \caption{MSE: \(\mathbb{E}[\norm{\hat{\bm\mu}_t - \bm\mu}_2^2]\)}\label{subfig:quantum-star-mse}
    \end{subfigure}%
    % \begin{subfigure}{.25\textwidth}
    %     \centering
    %     \includegraphics[width=\linewidth]{figures/imc_figures/Random_Compare_Quantum_Star_RSeed_Trial_Num_20_Node_40_mc_num=100_num_probes=1000_warm_num=100_iter_len=100_Bias.png}
    %     \caption{Squared bias: \(\norm{\mathbb{E}[\hat{\bm\mu}_t - \bm\mu]}_2^2\)}
    % \end{subfigure}%
    % \begin{subfigure}{.25\textwidth}
    %     \centering
    %     \includegraphics[width=\linewidth]{figures/imc_figures/Random_Compare_Quantum_Star_RSeed_Trial_Num_20_Node_40_mc_num=100_num_probes=1000_warm_num=100_iter_len=100_MSE_minus_Bias.png}
    %     \caption{Empirical variance}
    % \end{subfigure}%
    \caption{Comparison of algorithms for $40$-node quantum star networks}
    \label{fig:quantum-star-3}
\end{figure*}

\textbf{Regret of \opal on the Bit-Flip Quantum Star Network with RI Multicast.}
With the Lipschitz continuity of the A-optimal design criterion and the finite confidence interval with \(\gamma_{\min} = \frac{1}{2}\), we apply Theorem~\ref{thm:convergence-rate} to derive the regret of \opal on the quantum star network with bit-flip quantum channels as follows,
\(
R_T = \tr \bm I^{-1}(\bm\mu; {\bm\phi}_T) - \tr\bm I^{-1}(\bm\mu; \bm\phi^*) \le C \left( \frac{\log T}{\xi T} \right)^{\frac{1}{2}} + \alpha\xi.
\)
Unlike the regret derivation in Section~\ref{subsec:classical-unicast-case}, we cannot cancel the second \(\alpha\xi\) term by setting a constant input \(\xi\) in this quantum network tomography task. Because the A-optimal allocation in~\eqref{eq:a-optimal-quantum-star} inherently includes probes with a zero allocation ratio, i.e., \(\phi_m^* = 0\) for some \(m\), the second term remains non-zero.
This aligns with the intuition that  it is difficult for an algorithm to avoid performing these ineffective probes without knowledge of the exact link parameters a priori, leading to a regret (convergence rate) slower than \(\tilde O(T^{-\gamma_{\min}})\).

By selecting \(\xi = T^{-\frac{1}{3}}\), we obtain the worst-case convergence (regret) guarantee of \opal on the quantum bit-flip star network under the RI multicast, expressed as,
\[
    R_T = \tr \bm I^{-1}(\bm\mu; {\bm\phi}_T) - \tr\bm I^{-1}(\bm\mu; \bm\phi^*) \le O\left(\left( \frac{\log T}{T} \right)^{\frac{1}{3}}\right).
\]

% \ting{perhaps adding a subusbsection (unnumbered) on \emph{Complexity of \opal for Quantum Bit-Flip Star Network}}

\textbf{Complexity of \opal for Quantum Bit-Flip Star Network.}
The major computational cost of applying \opal to the bit-flip star networks is from executing Algorithm~\ref{alg:a-optimal-bit-flip} (Line~\ref{line:greedy-a-optimal} in \opal), whose time complexity is \(O(L)\).
This improvement from \(O(L^3)\) in classical networks to \(O(L)\) in quantum networks is due to the efficient root-independent multicast protocol~\citep{de2023characterization} designed explicitly for quantum networks.

% In this case study, we focus on the quantum multicast setting for the quantum star network.
% Currently, no quantum unicast probing algorithm has been devised, and we thus leave the verification of the unicast setting for future works.

\subsection{Simulations for Quantum Network Tomography}\label{subsec:empirical-quantum}

This section reports the empirical performance of \opal algorithm in a \(40\)-node quantum bit-flip star network tomography task:
one center node connects to \(39\) other root nodes via bit-flip quantum channels (links).
The set of probing experiments consists of all \(39\) root-independent (RI) multicast probes, each selecting the end node of one link as the root node and the other \(38\) end nodes as the multicast receivers.
We consider the same three baselines as in Section~\ref{subsec:empirical-classical}.
The initial sampling phase of \opal is set to \(T_0 = 0.1T\) rounds.
The experiment is conducted with \(20\) scenarios, each with a set of randomly generated bit-flip probabilities \(\bm\mu\) from a uniform distribution over \((0.1, 1)\) for the \(40\)-node quantum star network.
% , and the bit-flip probability of each link is independently drawn from a uniform distribution over \((0.1, 1)\).
For each scenario, we run \(100\) Monte Carlo simulations and take the average as the performance of this scenario.
We report the average performance of these \(20\) scenarios as solid lines, and the shaded areas represent their standard deviation across these scenarios.

The RI protocol considered in the experiment is specifically developed for quantum bit-flip star network,
and it is already the most advanced in the QNT literature~\citep{de2023characterization,de2024quantum}.
Therefore, other general QNT experiments under other more complex quantum channels and network topologies---e.g., the ER and Roofnet topologies in the classical setting at Section~\ref{subsec:empirical-classical}---are invalid here.
This quantum star network experiment serves as a proof-of-concept for the applicability of \opal to QNT.
Once more advanced QNT protocols are developed for more general network topologies,
one can apply \opal to optimizing the probe allocation over them in more general scenarios.

Figure~\ref{fig:quantum-star-3} presents the six performance metrics of algorithms for bit-flip tomography in a \(40\)-node quantum star network. The results demonstrate the superior performance of \opal in the quantum network tomography scenario and support the same observations as stated in Section~\ref{subsec:empirical-classical}.
However, the improvements of \opal over the baselines are not as significant as in the classical case (e.g., Figures~\ref{subfig:quantum-star-mse}).
This is because the optimal allocation of the RI multicast protocol \(\bm\phi^*\) in Algorithm~\ref{alg:a-optimal-bit-flip} is easier to estimate than those in classical settings, and hence even for the simplistic uniform allocation (the naive baseline), the MSE of the estimated link parameters is already small and therefore, there is less room to illustrate the improvement.
Applying \opal to QNT protocols for more complex quantum network topologies is expected to yield more significant improvements, which is infeasible to verify in this paper due to the lack of existing QNT protocols for general topologies.

\section{Conclusion}

% This paper devises a novel and general online experimental design algorithm, called online probe allocation (\opal), for network tomograph.
% Theoretically, this is the first online experimental design algorithm for network tomography with rigorous convergence rate guarantees.
% To prove this convergence rate, the paper reveals two key conditions: Lipschitz continuity and confidence interval concentration.
% Practically, we prove that the theoretical conditions are indeed valid in both the classical loss unicast network and quantum bit-flip multicast network, which are two representative network tomography scenarios.
% Empirically, we demonstrate the superior performance of \opal over existing methods in both classical and quantum network tomography scenarios.

This paper introduces a novel and general online experimental design algorithm for network tomography, termed online probe allocation (\opal). Theoretically, \opal is the first algorithm to offer rigorous regret guarantees for network tomography. We establish these guarantees by identifying two critical conditions: Lipschitz continuity and confidence interval concentration.
On the practical side, we validate these theoretical conditions in classical loss unicast networks and quantum bit-flip multicast networks, representing key use cases for network tomography. Empirically, we illustrate the superior performance of \opal compared to existing methods in classical and quantum network tomography scenarios.

% \begin{itemize}
%     \item[\tdmark] add some future work discussion
%     \item[\tdmark] Can we further generalize Lemma~\ref{lma:inverse-matrix-fix-entry} to the general network?
%           \begin{itemize}
%               \item[\xtimes] to read MILS in ``Towards Unbiased End-to-End Network Diagnosis'' [This paper is on the directed network, i.e., the link parameters are not symmetric.]
%           \end{itemize}
%           % \item[\check] Revise the writing of this section for the new results.
%           % \item[\check] Make the result of the first case study consistency over the whole paper.
% \end{itemize}

% \section*{Acknowledgments}
% This should be a simple paragraph before the References to thank those individuals and institutions who have supported your work on this article.

% \section{References Section}
% You can use a bibliography generated by BibTeX as a .bbl file.
% BibTeX documentation can be easily obtained at:
% http://mirror.ctan.org/biblio/bibtex/contrib/doc/
% The IEEEtran BibTeX style support page is:
% http://www.michaelshell.org/tex/ieeetran/bibtex/

% argument is your BibTeX string definitions and bibliography database(s)
%\bibliography{IEEEabrv,../bib/paper}
%

\bibliographystyle{IEEEtranN}
\bibliography{reference.bib}

\newpage

\section*{Biography}
% If you have an EPS/PDF photo (graphicx package needed), extra braces are
% needed around the contents of the optional argument to biography to prevent
% the LaTeX parser from getting confused when it sees the complicated
% $\backslash${\tt{includegraphics}} command within an optional argument. (You can create
% your own custom macro containing the $\backslash${\tt{includegraphics}} command to make things
% simpler here.)

% \vspace{11pt}

% \bf{If you will not include a photo:}\vspace{-33pt}
% \begin{IEEEbiographynophoto}{John Doe}
%   Use $\backslash${\tt{begin\{IEEEbiographynophoto\}}} and the author name as the argument followed by the biography text.
% \end{IEEEbiographynophoto}

\begin{IEEEbiographynophoto}{Xuchuang Wang} (\emph{Member, IEEE}) received the BEng degree from the School of Electronic and Information Engineering, Xi'an Jiaotong University, in 2019, and the Ph.D. degree from the Department of Computer Science and Engineering, Chinese University of Hong Kong (CUHK), in 2023. He is currently a postdoctoral researcher at the Manning College of Information and Computer Sciences, University of Massachusetts, Amherst. His research interests include performance evaluation and optimization of multi-agent learning systems and quantum networks.
\end{IEEEbiographynophoto}

\begin{IEEEbiographynophoto}{Yu-Zhen Janice Chen}
  received the B.Sc. degree in computer science from the
  Chinese University of Hong Kong in 2019. She is currently pursuing the Ph.D.
  degree with the University of Massachusetts, Amherst. Her research interests
  include statistical machine learning, sequential decision-making, performance
  analysis, modeling, and algorithm design for computing systems.
\end{IEEEbiographynophoto}

\begin{IEEEbiographynophoto}{Matheus Guedes de Andrade} received the B.A. degree in computer and information engineering from the Federal University of Rio de Janeiro (UFRJ),
  in 2018, and the M.Sc. degree in computer and systems engineering from COPPE UFRJ, in 2020. He is currently pursuing the
  Ph.D. degree in CS with the Manning College of Information
  and Computer Science, UMass Amherst, under the supervision
  of Prof. Don Towsley. He worked as the Student Leadership
  Council Chair at the Center for Quantum Networks (CQN),
  from January 2021 to January 2023. His main research interests include distributed quantum computing, quantum network
  tomography, and performance evaluation of quantum networks.
\end{IEEEbiographynophoto}

\begin{IEEEbiographynophoto}{Mohammad Hajiesmaili} (\emph{Member, IEEE}) is an associate professor
  of computer science at the University of Massachusetts
  Amherst. His research interests
  include optimization, machine learning, and algorithms.
  Before joining UMass,
  Mohammad was a postdoctoral fellow with the
  Johns Hopkins University, from 2017 to 2018,
  and with the Chinese University of Hong Kong,
  from 2015 to 2016. He received his Ph.D. and
  M.Sc. degrees from the University of Tehran,
  and his B.Sc. degree from Sharif University of
  Technology.
\end{IEEEbiographynophoto}

\begin{IEEEbiographynophoto}{John C.S. Lui}
  (\emph{Fellow, IEEE}) received the Ph.D. degree in computer science from UCLA. He is currently the Choh-Ming Li chair professor with the Department of Computer Science and Engineering (CSE), The Chinese University of Hong Kong (CUHK). After his graduation, he joined the IBM Laboratory and participated in research and development projects on file systems and parallel I/O architectures. He later joined the CSE Department with CUHK. His current research interests are in quantum networks, online learning algorithms and applications (e.g., multi-armed bandits, reinforcement learning), machine learning on network sciences and networking systems, large scale data analytics, network/system security, network economics, large scale storage systems, and performance evaluation theory. He is an elected member of the IFIP WG 7.3, fellow of ACM, senior research fellow of the Croucher Foundation, fellow of the Hong Kong Academy of Engineering Sciences (HKAES).
\end{IEEEbiographynophoto}

\begin{IEEEbiographynophoto}{Ting He}
  (\emph{Senior Member, IEEE}) eceived the Ph.D. degree in ECE
  from Cornell University. Dr. He is an Associate
  Professor in the School of EECS at the Pennsylvania
  State University, University Park, PA. Her interests
  span computer networking, performance evaluation,
  and machine learning. Dr. He has served as Associate Editor for IEEE Transactions on Communications and IEEE/ACM Transactions on Networking,
  General Co-Chair of IEEE RTCSA, TPC Co-Chair
  of ACM MobiHoc and IEEE ICCCN, and Area TPC
  Chair of IEEE INFOCOM. She received multiple
  paper awards from IEEE Communications Society, ICDCS, SIGMETRICS,
  ICASSP, IMC, and SmartGridComm.
\end{IEEEbiographynophoto}

\begin{IEEEbiographynophoto}{Don Towsley}
  (\emph{Life Fellow, IEEE}) received the B.A. degree in physics
  and the Ph.D. degree in computer science from the University of Texas, in 1971 and 1975, respectively. He is currently
  a Distinguished Professor with the Manning College of Information and Computer Sciences, University of Massachusetts.
  He has held Visiting Positions at numerous universities and
  research labs. His research interests include quantum communications and networking and distributed quantum sensing and
  computing. He was a Co-Founder and a Co-Editor-in-Chief of
  the ACM Transactions on Modeling and Performance Evaluation of Computing Systems (TOMPECS), and has served as the
  Editor-in-Chief for IEEE/ACM Transactions on Networking (ToN) and on numerous editorial boards. He has served as the Program Co-Chair of several conferences including INFOCOM
  2009. He is a Corresponding Member of the Brazilian Academy of Sciences and has received several achievement awards
  including the 2007 IEEE Koji Kobayashi Award, the 2007 ACM
  SIGMETRICS, the 2008 ACM SIGCOMM Awards, and the
  2023 Network Science Society Euler Award. He has received
  numerous paper awards including the 2012 ACM SIGMETRICS
  Test-of-Time Award, the 2008 SIGCOMM Test-of-Time Paper
  Award, and the 2018 SIGMOBILE Test-of-time Award. He also
  received the 1998 IEEE Communications Society William Bennett Best Paper Award. Last, he has been elected as a Fellow
  of ACM.
\end{IEEEbiographynophoto}

\vfill

\appendix

% Notation Table for the Paper

\section{Detailed Proof of Theorem~\ref{thm:convergence-rate}}\label{sec:proof-convergence-rate}
After the initial sampling phase, we have \(S_{m,t} = \xi T\) and \(t=T_0 = M\xi T\). In addition, the gap between the estimated optimal allocation \(\hat{\bm\phi}^*_{T_0}\) and the actual optimal allocation \(\bm\phi^*\) can be characterized as follows.
\begin{equation}\label{eq:bound-init-allocation}
    \begin{split}
        \norm{\hat{\bm\phi}^*_{T_0} - {\bm\phi}^*}_\infty
         & \overset{(\text{a})}\le \beta \norm{\hat{\bm\mu}_{T_0} - \bm\mu}_\infty
        \\
         & \overset{(\text{b})}\le \beta\max_{\ell\in\mathcal{L}} \sum_{m\in\mathcal{M}} c_{\ell,m} \left(\frac{\log \delta}{\xi T}\right)^{\gamma_{\ell,m}}
        \\
         & \overset{(\text{c})}\le \beta c_{\max} \left(\frac{\log \delta}{\xi T}\right)^{\gamma_{\min}},
    \end{split}
\end{equation}
where inequality (a) is due to the second Lipschitz continuity in Condition~\ref{cond:lipschitz},
inequality (b) follows from the confidence radius of estimates in Condition~\ref{cond:finite-confidence-interval} with a probability of at least \(1-L\delta\),
and inequality (c) follows from the definition of
\(c_{\max} = \max_{\ell\in\mathcal{L}} \sum_{m\in\mathcal{M}} c_{\ell,m}\) and \(\gamma_{\min} = \min_{(\ell, m)\in\mathcal{L}\times\mathcal{M}:\gamma_{\ell,m}>0} \gamma_{\ell,m}\).

% \ting{a more compact format might be preferable if the proof is to be included in the main body} \begin{itemize}
%     \item 
%     \item 
%     \item 
% \end{itemize}

% \todo{\ting{I do not think this is consistent with the definition of regret in online learning, which is the gap in cumulative reward/cost; perhaps ``suboptimality of experimental design''?} \mo{make sure to use either convergence rate or regret consistently.} \xw{I will use regret for the metric. ``convergence rate of regret'' or ``regret convergence rate'' when referring the rate.}\mo{good.} }

Next, we bound the regret --- the difference between the OED criterion \(F\) of the actual allocation \(\bm\phi_T\) and the optimal allocation \(\bm\phi^*\) --- as follows.
\[
    \begin{split}
         & \quad\, F(\bm\mu; \bm\phi_T) - F(\bm\mu; \bm\phi^*)
        \overset{\text{(d)}}\le \alpha \norm{\bm\phi_T - {\bm\phi}^*}_\infty
        \\
         &
        \overset{\text{(e)}}\le \alpha \norm{\bm\phi_T - \hat{\bm\phi}_T^*}_\infty + \alpha \norm{\hat{\bm\phi}_T^* - \bm\phi^*}_\infty
        \\
         & \overset{\text{(f)}}\le \alpha \norm{\bm\phi_T - \hat{\bm\phi}_T^*}_\infty + \alpha\beta \norm{\hat{\bm\mu}_T - \bm\mu}_\infty
        \\
         &
        \overset{\text{(g)}}\le \alpha \norm{\bm\phi_T - \hat{\bm\phi}_T^*}_\infty + \alpha\beta\max_{\ell\in\mathcal{L}} \sum_{m\in\mathcal{M}} c_{\ell,m} (2\log T / S_{m,T})^{\gamma_{\ell,m}}
        \\
         & \overset{\text{(h)}}\le \alpha \norm{\bm\phi_T - \hat{\bm\phi}_T^*}_\infty + \alpha\beta c_{\max} \left(\frac{\log \delta}{\xi T}\right)^{\gamma_{\min}}
        \\
         & \overset{\text{(i)}}\le \alpha \left( \frac{1}{(1-\xi) T} + 3\beta c_{\max} \left(\frac{\log \delta}{\xi T}\right)^{\gamma_{\min}} \right) + \alpha\beta c_{\max} \left(\frac{\log \delta}{\xi T}\right)^{\gamma_{\min}}
        \\
         & = 4\alpha\beta c_{\max}\left(\frac{\log \delta}{\xi T}\right)^{\gamma_{\min}}
        + \frac{\alpha}{(1-\xi) T},
    \end{split}
\]
% \[
%     \begin{split}
%         F(\bm\mu; \bm\phi_T) - F(\bm\mu; \bm\phi^*)
%          & \overset{\text{(d)}}\le \alpha \norm{\bm\phi_T - {\bm\phi}^*}_\infty
%         \overset{\text{(e)}}\le \alpha \norm{\bm\phi_T - \hat{\bm\phi}_T^*}_\infty + \alpha \norm{\hat{\bm\phi}_T^* - \bm\phi^*}_\infty
%         \\
%          & \overset{\text{(f)}}\le \alpha \norm{\bm\phi_T - \hat{\bm\phi}_T^*}_\infty + \alpha\beta \norm{\hat{\bm\mu}_T - \bm\mu}_\infty
%         \\
%          & \overset{\text{(g)}}\le \alpha \norm{\bm\phi_T - \hat{\bm\phi}_T^*}_\infty + \alpha\beta\max_{\ell\in\mathcal{L}} \sum_{m\in\mathcal{M}} c_{\ell,m} (2\log T / S_{m,T})^{\gamma_{\ell,m}}
%         \\
%          & \overset{\text{(h)}}\le \alpha \norm{\bm\phi_T - \hat{\bm\phi}_T^*}_\infty + \alpha\beta c_{\max} \left(\frac{\log \delta}{\xi T}\right)^{\gamma_{\min}}
%         \\
%          & \overset{\text{(i)}}\le \alpha \left( \frac{1}{(1-\xi) T} + 3\beta c_{\max} \left(\frac{\log \delta}{\xi T}\right)^{\gamma_{\min}} \right) + \alpha\beta c_{\max} \left(\frac{\log \delta}{\xi T}\right)^{\gamma_{\min}}
%         \\
%          & = 4\alpha\beta c_{\max}\left(\frac{\log \delta}{\xi T}\right)^{\gamma_{\min}}
%         + \frac{\alpha}{(1-\xi) T},
%     \end{split}
% \]
% \todo{\mo{\text{in ineq g, use the same frac format at following lines.}}}
where inequality (d) follows from the Lipschitz continuity of function \(F\) with respect to allocations \(\bm\phi\) in Condition~\ref{cond:lipschitz},
inequality (e) follows from the triangle inequality,
inequality (f) follows from the Lipschitz continuity of the optimal allocation in function \(\bm\phi^*(\bm\mu)\) concerning the network parameters \(\bm\mu\) in Condition~\ref{cond:lipschitz},
inequality (g) follows from the confidence radius of estimates in Condition~\ref{cond:finite-confidence-interval}, holding with a probability of at least \(1-L\delta\),
inequality (h) is by telescoping the maximization term \(\max_{\ell\in\mathcal{L}}\) and the sum \(\sum_{m'\in\mathcal{M}}\)
% \ting{this is not in (g)} 
by their upper bounds, and \(S_{m,T} \ge \xi T\) for any probing experimental \(m\) due to the initial sample phase input \(\mathcal{S}_0 = (\xi T)_{m\in\mathcal{M}}\).
% The details of this derivation are given later in this proof. 
Finally, inequality (i) is by~\eqref{eq:phi_star_diff} for deriving the first term in~\eqref{eq:convergence-rate}; the second term in~\eqref{eq:convergence-rate} is derived by~\eqref{eq:phi_star_diff_for_small_phi} (see below for a detailed derivation).

% Denote \(\xi = \min_{m: \phi^*_m > 0} \phi_m^*\).

Last, we present the detailed telescoping derivation of inequality (i) in the above proof for bounding \(\norm{\bm\phi_T - \hat{\bm\phi}_T^*}_\infty\). The high-level derivation technique is illustrated in Fig.~\ref{fig:bound-illustration}.
We consider two cases: (1) all probes have optimal allocations greater than \(\xi\), and (2) there exists at least one probe with optimal allocation less than \(\xi\).

\emph{Case (1).} When \(\phi^*_{m} > \xi\) for all probes \(m\), we have
\begin{equation}\label{eq:phi_star_diff}
    \begin{split}
        \norm*{{\bm\phi}_{T} - \hat{\bm\phi}_{T}^*}_\infty
         & \overset{(\text{j})}\le \norm*{{\bm\phi}_{T} - \hat{\bm\phi}^*_{T_0}}_\infty + \norm*{\hat{\bm\phi}^*_{T_0} - \hat{\bm\phi}_{T}^*}_\infty
        \\
         & \overset{(\text{k})}\le  \frac{1}{(1-\xi) T} + \beta c_{\max} \left(\frac{\log \delta}{\xi T}\right)^{\gamma_{\min}} + \norm*{\hat{\bm\phi}^*_{T_0} - \hat{\bm\phi}_{T}^*}_\infty
        \\
         & \overset{(\text{l})}\le  \frac{1}{(1-\xi) T} + 3\beta c_{\max} \left(\frac{\log \delta}{\xi T}\right)^{\gamma_{\min}},
    \end{split}
\end{equation}
where inequality (j) follows from the triangle inequality,
inequality (k) is by noticing that while the chased estimated optimal allocation \(\hat{\bm\phi}_t^*\) changes over time, it always lies in the confidence interval of the initial estimated optimal allocation \(\hat{\bm\phi}^*_{T_0}\) in~\eqref{eq:bound-init-allocation},
and inequality (l) follows from applying triangle inequality to show that \(\norm*{\hat{\bm\phi}^*_{T_0} - \hat{\bm\phi}_{T}^*}_\infty \le \norm*{\hat{\bm\phi}^*_{T_0} - {\bm\phi}^*}_\infty + \norm*{{\bm\phi}^* - \hat{\bm\phi}_{T}^*}_\infty\), and both terms in RHS are less than \(\beta c_{\max} \left(\frac{\log \delta}{\xi T}\right)^{\gamma_{\min}}\) as~\eqref{eq:bound-init-allocation} shows.

\emph{Case (2).} When there exists at least one probe \(m\) such that \(\phi^*_m \le \xi < \xi + \beta c_{\max} \left(\frac{\log \delta}{\xi T}\right)^{\gamma_{\min}}\),
% (slightly overlap Case (1) \dt{(point of this?)}), 
by modifying the derivation of inequality~\eqref{eq:phi_star_diff} and noticing that its LHS is for infinity norm,  we have \begin{equation}\label{eq:phi_star_diff_for_small_phi}
    \begin{split}
        \norm*{{\bm\phi}_{T} - \hat{\bm\phi}_{T}^*}_\infty
         & \le \max_{m: \phi_m^* > \xi} \abs{{\phi}_{m,T} - \hat{\phi}_{m,T}^*}                                                                                 \\
         & \qquad + \max_{m: \phi^*_m \le  \xi + \beta c_{\max} \left(\frac{\log \delta}{\xi T}\right)^{\gamma_{\min}}} \abs{{\phi}_{m,T} - \hat{\phi}_{m,T}^*}
        \\
         & \le \frac{1}{(1-\xi) T} + 7\beta c_{\max} \left(\frac{\log \delta}{\xi T}\right)^{\gamma_{\min}} + 2\xi,
    \end{split}
\end{equation}
where the first maximization term follows the same derivation as~\eqref{eq:phi_star_diff} in Case (1),
and the second term is by noticing that
\(\hat{\phi}_{m,t}^* \in \left( \phi_m^* - \beta c_{\max} \left(\frac{\log \delta}{\xi T}\right)^{\gamma_{\min}}, \phi_m^* + \beta c_{\max} \left(\frac{\log \delta}{\xi T}\right)^{\gamma_{\min}}  \right)\)
for all time slots \(t\) in the chasing phase with high probability, and hence the actual allocation \(\bm\phi_t\), chasing the estimated optimal allocation \(\hat{\bm\phi}_t^*\), is also less than \(\phi_m^* + \beta c_{\max} \left(\frac{\log \delta}{\xi T}\right)^{\gamma_{\min}}\).
Therefore, the second term in~\eqref{eq:phi_star_diff_for_small_phi} is at most \(2\left(  \phi_m^* + \beta c_{\max} \left(\frac{\log \delta}{\xi T}\right)^{\gamma_{\min}}\right) \le 2\xi + 4\beta c_{\max} \left(\frac{\log \delta}{\xi T}\right)^{\gamma_{\min}}\).

At the end, replacing \(\delta\) by \(1/(LT^2)\) yields the final regret bound. Over the proof procedure, the confidence interval in Condition~\ref{cond:finite-confidence-interval} is applied multiple times. Applying a union bound of all these confidence interval applications leads to a failure probability of at most \(LT\delta\), which becomes \(1/T\) after the substitution.\qed

\begin{figure*}[tb]
    \centering
    \begin{subfigure}{.25\textwidth}
        \centering
        \includegraphics[width=\linewidth]{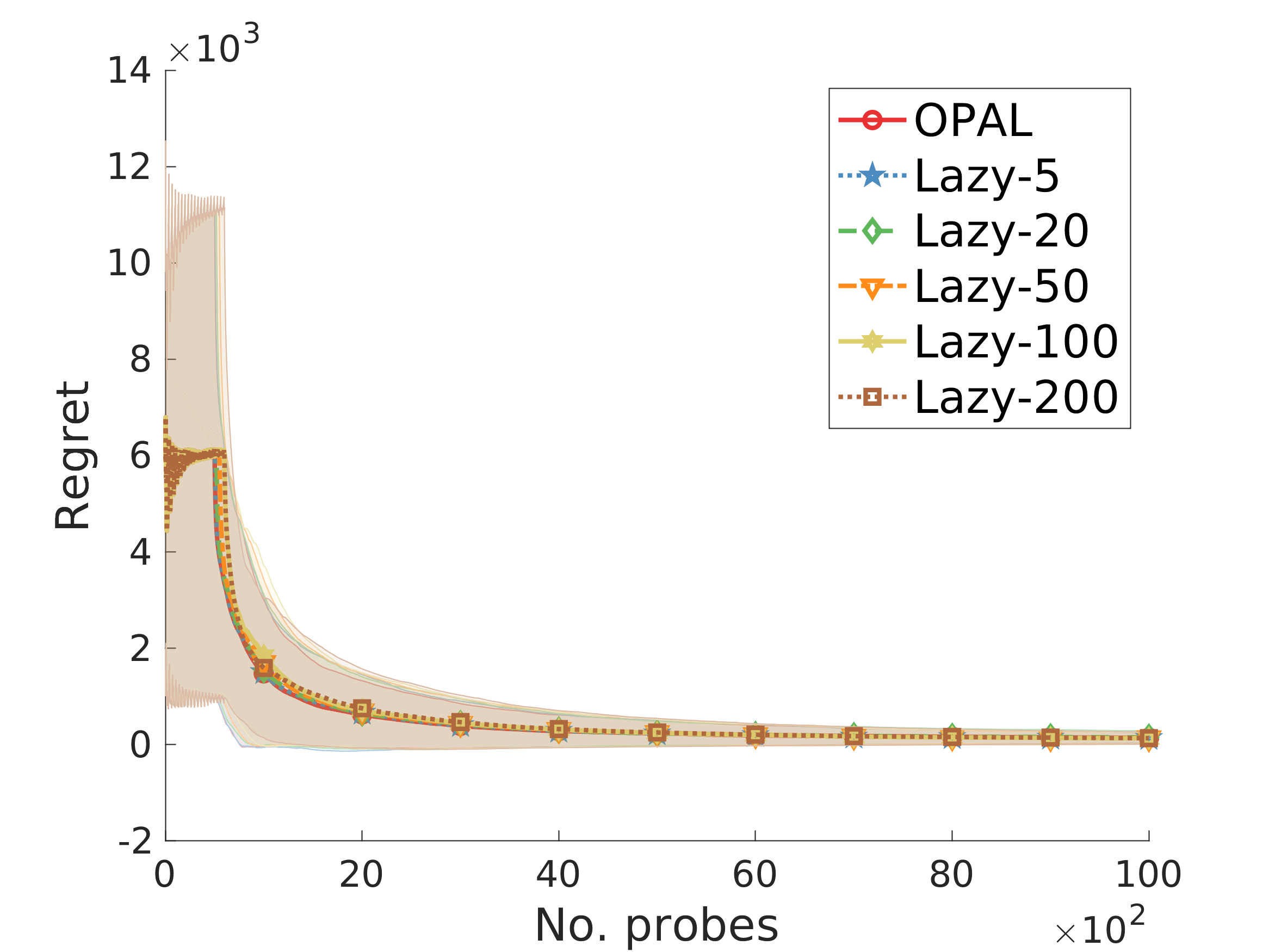}
        \caption{Regret: $R_t{=}F\!(\bm\mu,\!\bm\phi_t\!) {-} F\!(\bm\mu,\!\bm\phi^*\!)$}
    \end{subfigure}%
    \begin{subfigure}{.25\textwidth}
        \centering
        \includegraphics[width=\linewidth]{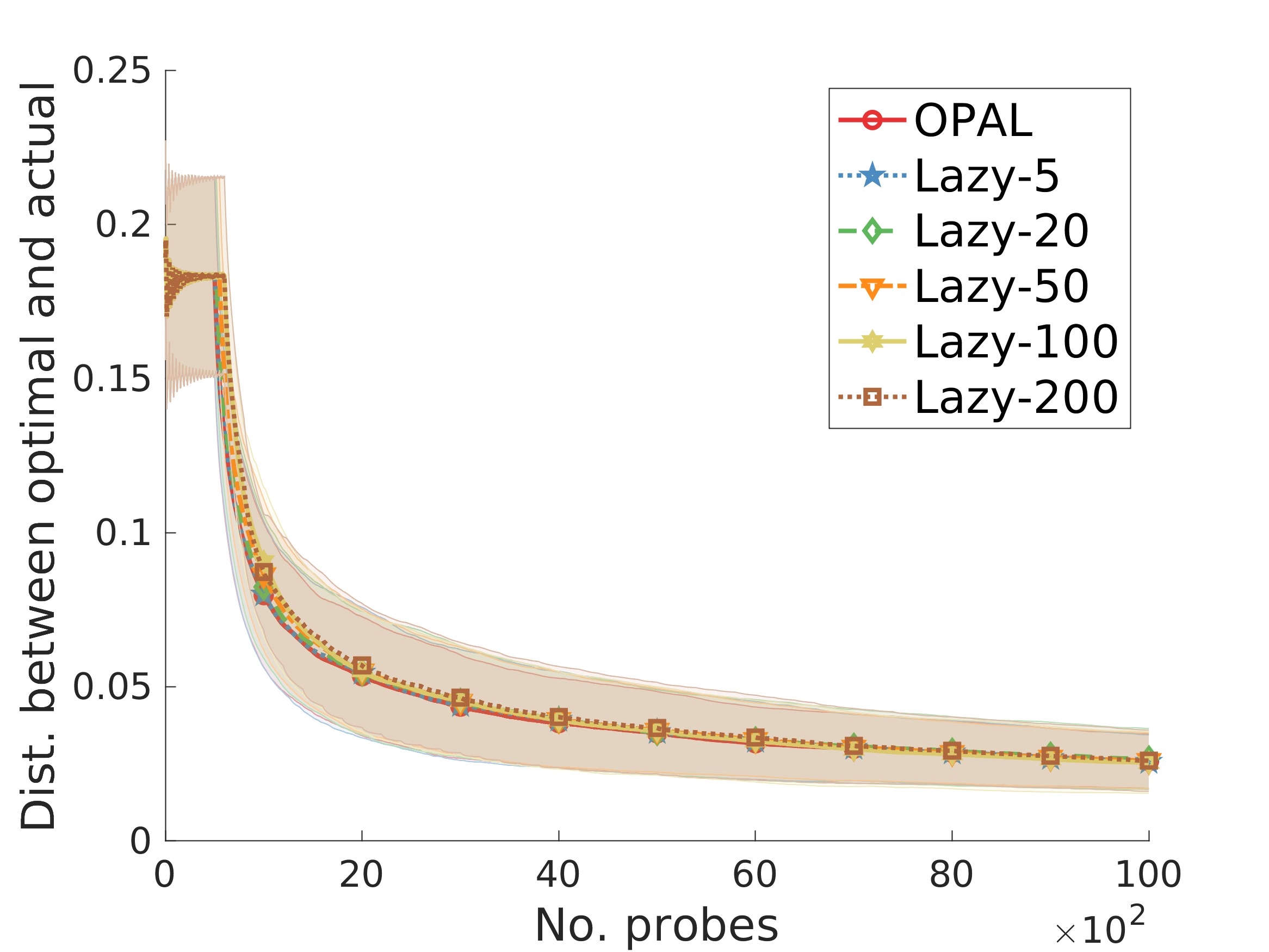}
        \caption{Dist. of actual: \(\norm*{\bm{\phi}^* - {\bm\phi}_t}_2\)}
    \end{subfigure}%
    \begin{subfigure}{.25\textwidth}
        \centering
        \includegraphics[width=\linewidth]{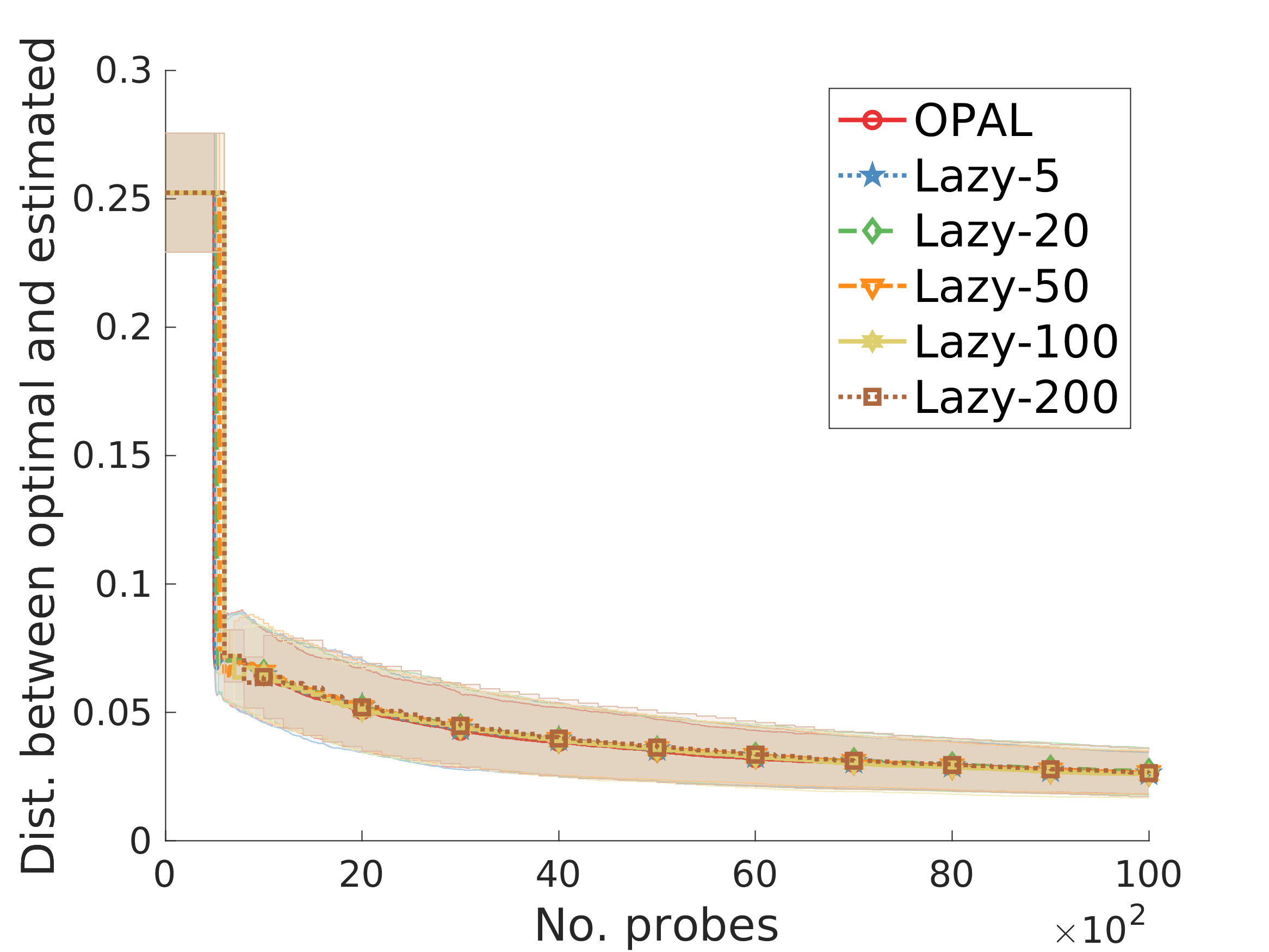}
        \caption{Dist. of estimated: \(\norm*{\bm{\phi}^* - \hat{\bm\phi}_t}_2\)}
    \end{subfigure}%
    \begin{subfigure}{.25\textwidth}
        \centering
        \includegraphics[width=\linewidth]{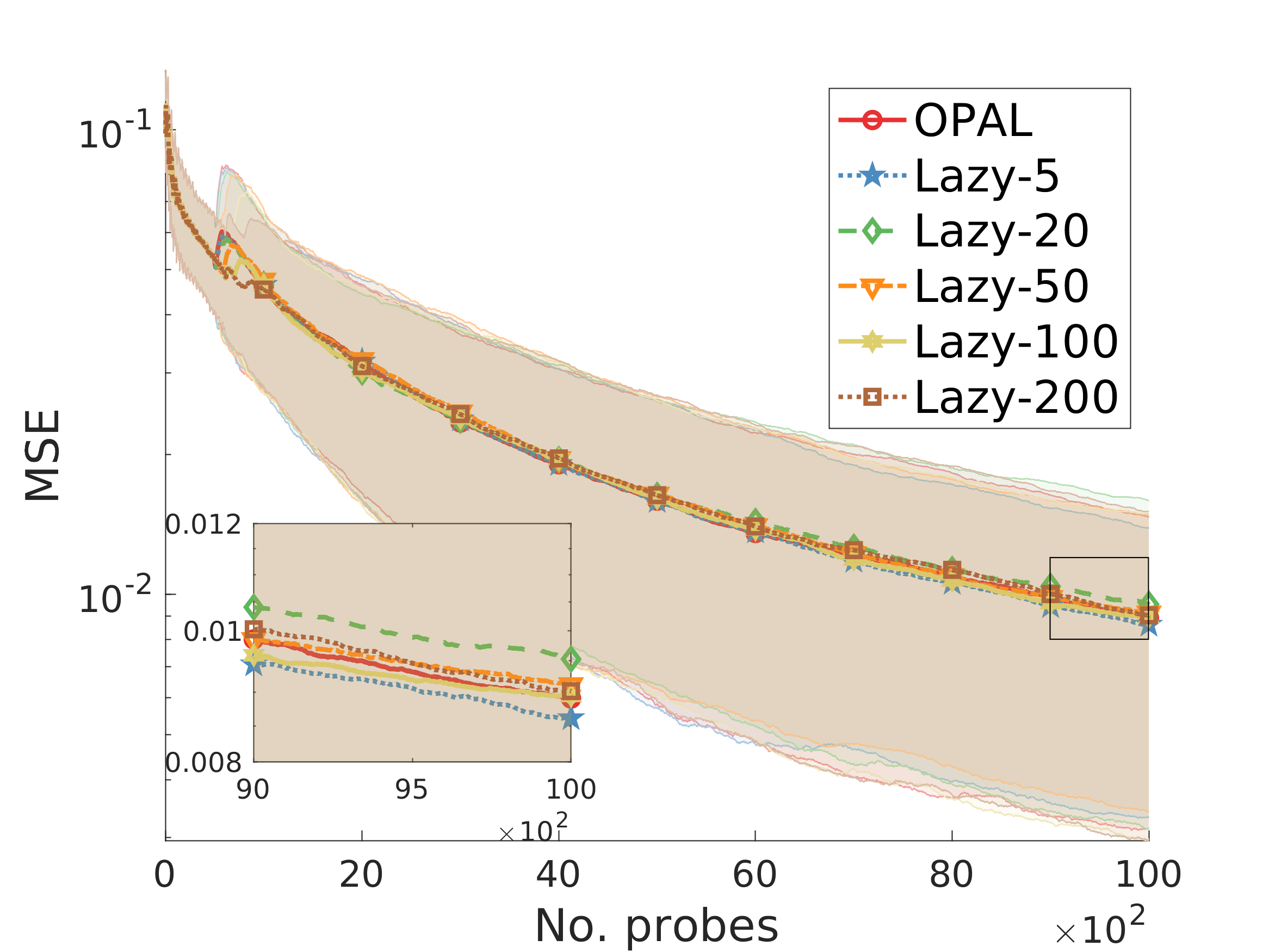}
        \caption{MSE: \(\mathbb{E}[\norm{\hat{\bm\mu}_t - \bm\mu}_2^2]\)}
    \end{subfigure}%
    % \begin{subfigure}{.25\textwidth}
    %     \centering
    %     \includegraphics[width=\linewidth]{}
    %     \caption{Squared bias: \(\norm{\mathbb{E}[\hat{\bm\mu}_t - \bm\mu]}_2^2\)}
    % \end{subfigure}%
    % \begin{subfigure}{.25\textwidth}
    %     \centering
    %     \includegraphics[width=\linewidth]{}
    %     \caption{Empirical variance}
    % \end{subfigure}%
    \caption{Impact of lazy update batch size on \opal in loss tomography in classical ER network (zero initial sampling)}
    \label{fig:classical-lazy}
\end{figure*}

\begin{figure*}[tb]
    \centering
    \begin{subfigure}{.25\textwidth}
        \centering
        \includegraphics[width=\linewidth]{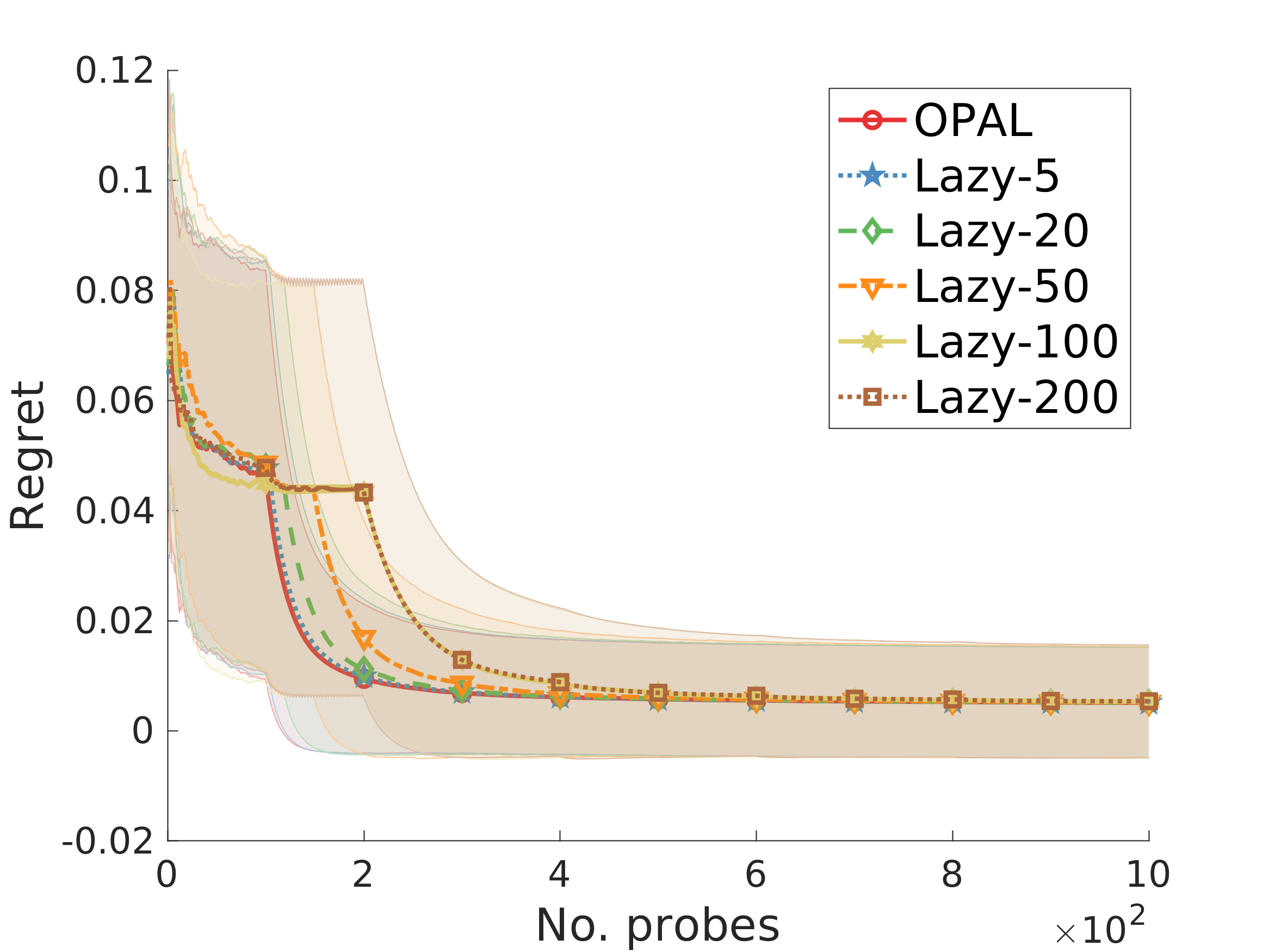}
        \caption{Regret: $R_t{=}F\!(\bm\mu,\!\bm\phi_t\!) {-} F\!(\bm\mu,\!\bm\phi^*\!)$}
    \end{subfigure}%
    \begin{subfigure}{.25\textwidth}
        \centering
        \includegraphics[width=\linewidth]{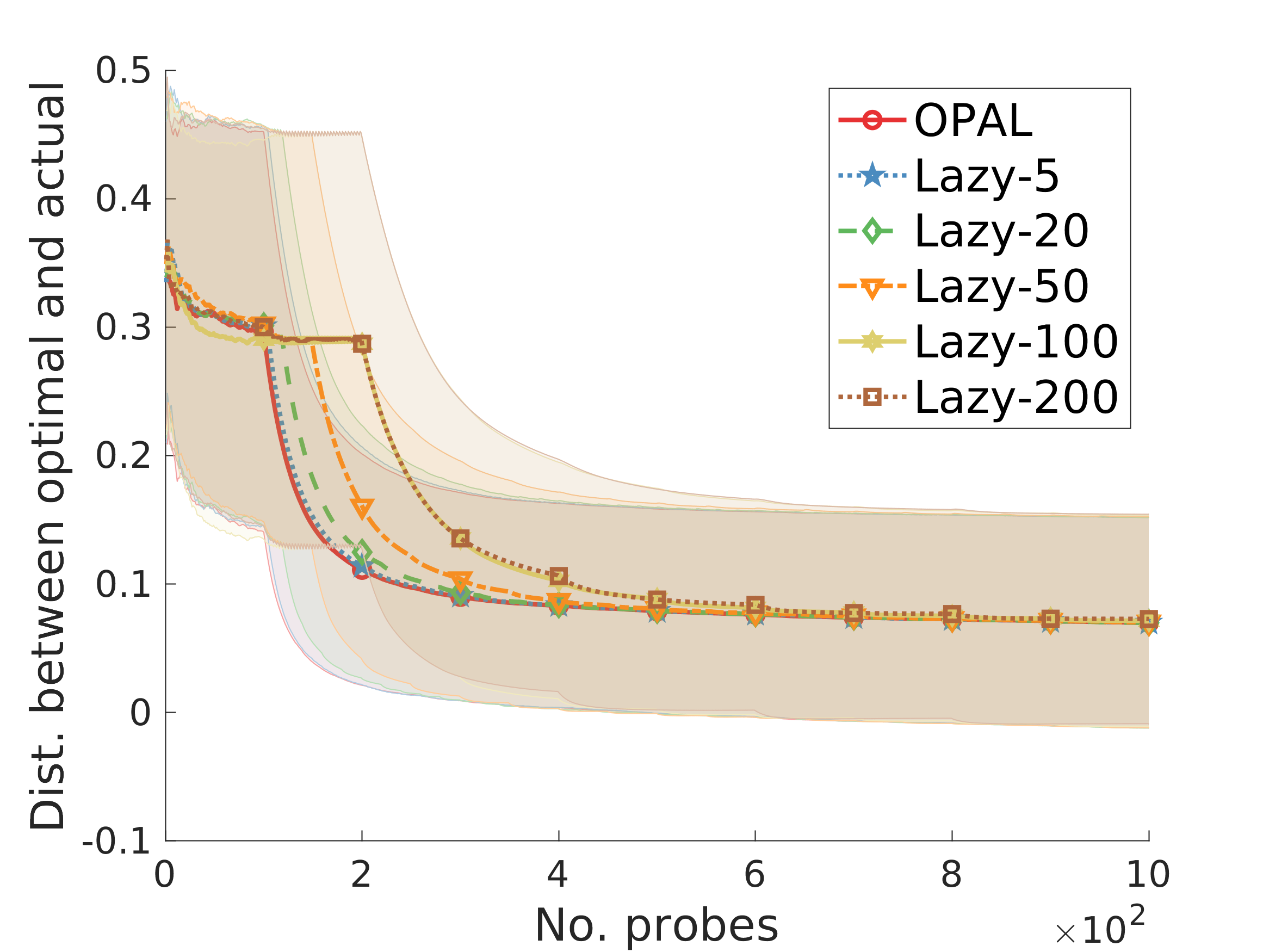}
        \caption{Dist. of actual: \(\norm*{\bm{\phi}^* - {\bm\phi}_t}_2\)}
    \end{subfigure}%
    \begin{subfigure}{.25\textwidth}
        \centering
        \includegraphics[width=\linewidth]{figures/lazycomparison/Lazy_ER_RSeed_Trial_Num_5_Node_10_mc_num=100_num_probes=10000_warm_num=500_iter_len=1_Phi_Est_Dist.png}
        \caption{Dist. of estimated: \(\norm*{\bm{\phi}^* - \hat{\bm\phi}_t}_2\)}
    \end{subfigure}%
    \begin{subfigure}{.25\textwidth}
        \centering
        \includegraphics[width=\linewidth]{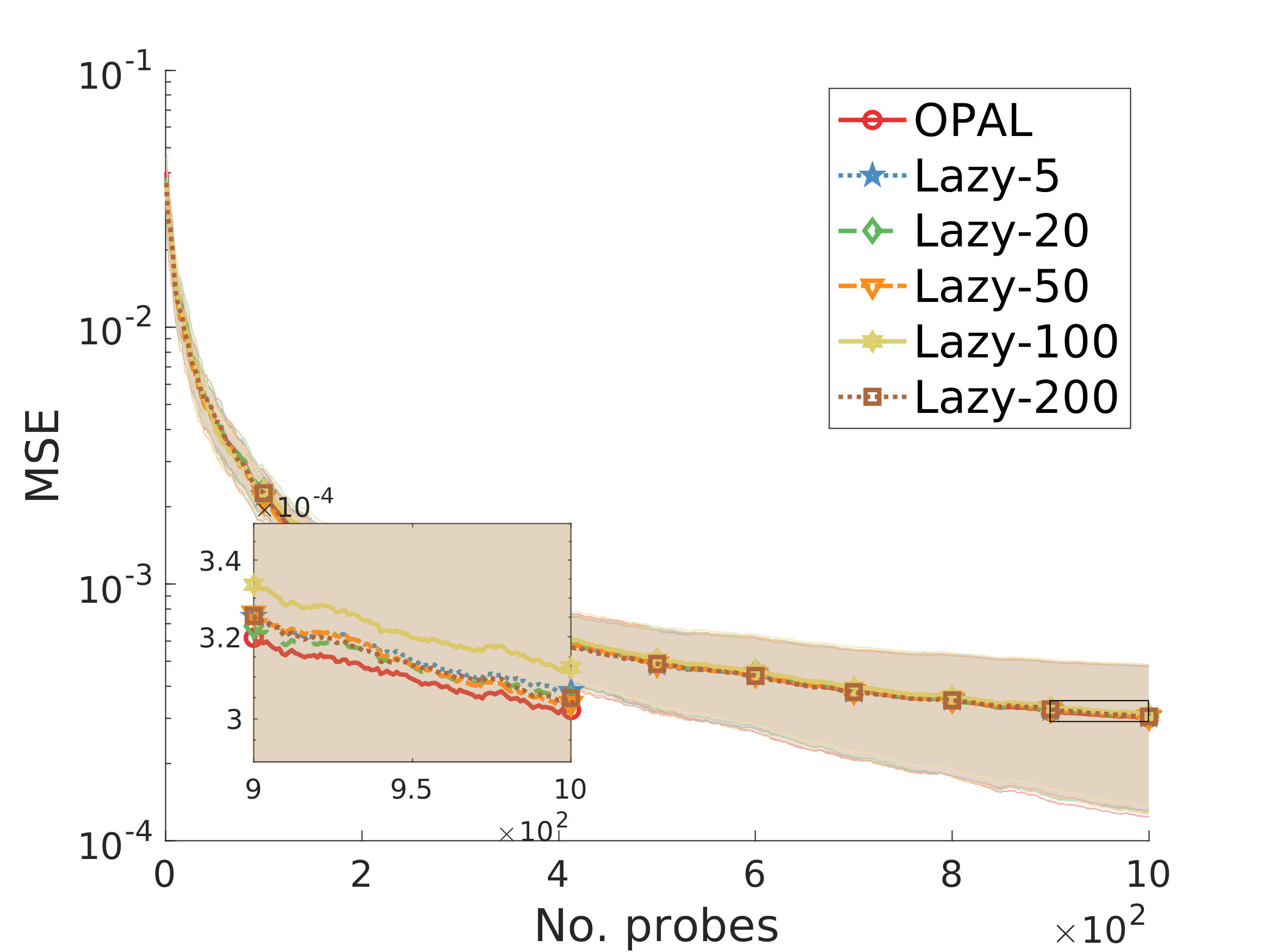}
        \caption{MSE: \(\mathbb{E}[\norm{\hat{\bm\mu}_t - \bm\mu}_2^2]\)}
    \end{subfigure}%
    % \begin{subfigure}{.25\textwidth}
    %     \centering
    %     \includegraphics[width=\linewidth]{}
    %     \caption{Squared bias: \(\norm{\mathbb{E}[\hat{\bm\mu}_t - \bm\mu]}_2^2\)}
    % \end{subfigure}%
    % \begin{subfigure}{.25\textwidth}
    %     \centering
    %     \includegraphics[width=\linewidth]{}
    %     \caption{Empirical variance}
    % \end{subfigure}%
    \caption{Impact of lazy update batch size on \opal in bit-flip tomography in quantum bit-flip star network (\(5\%\) initial sampling)}
    \label{fig:quantum-lazy}
\end{figure*}

\section{Detailed Proof of Theorem~\ref{thm:MLE-estimator-fix-radius}}\label{app:proof-case-study-classical}

The next two lemmas
show the impact of the entries of the inverse matrix \(\bm Q^{-1}\) on the link parameter estimates.
% We first prove a general lemma \dt{(refer to the appendix; don't wait till after lemma 3.  You can state "The next twop lemmas, proven in Appendix A show ...")} showing 

\begin{lemma}\label{lma:MLE-estimator-fix-radius}
    For any estimator \(\hat\mu_\ell = \prod_{m\in\mathcal{M}} \hat\nu_m^{\kappa_{\ell, m}}\), where the \(\kappa_{\ell, m} \in \mathbb R\) is the \(\ell^{\text{th}}\)-row, \(m^{\text{th}}\)-column entry of matrix \(\bm Q^{-1}\) and \(\kappa_{\ell, m}' \coloneqq \min\{\frac 1 2, \abs*{\kappa_{\ell, m}}\}\),  we have, with a probability of at least \(1-\delta\) (for any $\delta\in (0,M^{-1})$),
    % \ting{for any $\delta\in (0,M^{-1})$?},
    \begin{equation}
        \begin{split}
            \mu_\ell \in
             & \left(
            \hat\mu_\ell - \sum_{m\in\mathcal{M}:  \kappa_{\ell, m}\neq 0}
            c_{m,\ell}\left(  \frac{\log (\delta^{-1}/M)}{S_m} \right)^{\kappa_{\ell, m}'},
            \right.
            \\
             & \qquad \left.
            \hat\mu_\ell + \sum_{m\in\mathcal{M}:  \kappa_{\ell, m}\neq 0}
            c_{m,\ell} \left(  \frac{\log (\delta^{-1}/M)}{S_m} \right)^{\kappa_{\ell, m}'}
            \right),
        \end{split}
    \end{equation}
    where \(c_{m,\ell}\)'s
    % \ting{you used lower case $c_{l,m}$ in Condition~\ref{cond:finite-confidence-interval}}
    are positive constants depending only on the network topology and link parameters.
\end{lemma}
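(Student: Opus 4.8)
The plan is to reduce everything to the elementary concentration of the per-probe success estimators $\hat\nu_m = B_m/S_m$ and then push that through the deterministic algebraic map $\hat{\bm\nu}\mapsto \hat\mu_\ell=\prod_{m}\hat\nu_m^{\kappa_{\ell,m}}$. First, since $B_m$ is a sum of $S_m$ i.i.d.\ Bernoulli$(\nu_m)$ variables, Hoeffding's inequality gives $|\hat\nu_m-\nu_m|\le \epsilon_m$ with $\epsilon_m=\sqrt{\log(2/\delta_m)/(2S_m)}$ and failure probability $\delta_m$. Allocating $\delta_m=\delta/M$ and taking a union bound over the $M$ probes yields, with probability at least $1-\delta$, a single event $\mathcal{E}$ on which $|\hat\nu_m-\nu_m|\le \epsilon_m=c\,\sqrt{\log(M/\delta)/S_m}$ holds simultaneously for every $m$; the restriction $\delta\in(0,M^{-1})$ is what keeps the logarithmic factor positive. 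All remaining steps are carried out on $\mathcal{E}$.

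Next I would transfer this bound to the link estimate through the log scale. Because $\bm{Q}$ is square and invertible, $\log\bm\mu=\bm{Q}^{-1}\log\bm\nu$, so the \emph{true} parameter satisfies the same product form, $\mu_\ell=\prod_m \nu_m^{\kappa_{\ell,m}}$, and hence $D\coloneqq \log\hat\mu_\ell-\log\mu_\ell=\sum_m \kappa_{\ell,m}\,(\log\hat\nu_m-\log\nu_m)$. The strategy is then to bound each scalar perturbation $|\log\hat\nu_m-\log\nu_m|$, sum with weights $|\kappa_{\ell,m}|$ to control $|D|$, and finally convert back via $|\hat\mu_\ell-\mu_\ell|\le \mu_\ell\,|e^{D}-1|\le \mu_\ell\,|D|e^{|D|}$, which is linear in $|D|$ whenever $|D|$ is small.

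The workhorse for each scalar term is the mean value theorem. On $\mathcal{E}$, provided $S_m$ is large enough that $\epsilon_m$ lies below, say, half of the parameter lower bound $\nu_{\min}\coloneqq \min_m \nu_m>0$, the intermediate point $\tilde\nu_m$ between $\hat\nu_m$ and $\nu_m$ stays bounded away from zero, so $|\log\hat\nu_m-\log\nu_m|\le \epsilon_m/\tilde\nu_m\le 2\epsilon_m/\nu_{\min}$. Writing $\epsilon_m=c\,(\log(M/\delta)/S_m)^{1/2}$ and using $\log(M/\delta)/S_m\le 1$ in this regime, the exponent $1/2$ may be relaxed downward to $\min\{1/2,|\kappa_{\ell,m}|\}$ — a smaller exponent only enlarges the bound — which is exactly the exponent in the claim; for factors with small exponent the lower-bound-free power inequality $|a^{\kappa}-b^{\kappa}|\le |a-b|^{\kappa}$ (for $0<\kappa\le 1$) provides the complementary estimate. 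Summing the weighted per-factor errors, restricting to indices with $\kappa_{\ell,m}\neq 0$, and folding all multiplicative constants into $c_{m,\ell}$ — which then depend only on $\bm{Q}^{-1}$ and on the link parameters through $\nu_{\min}$, i.e.\ on topology and parameters — produces the stated two-sided interval around $\hat\mu_\ell$.

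The main obstacle is controlling this product-of-powers map uniformly over the randomness, because the map is nonlinear and can be ill-conditioned: the entries $\kappa_{\ell,m}$ of $\bm{Q}^{-1}$ may be negative, so a factor $\hat\nu_m^{\kappa_{\ell,m}}$ blows up as $\hat\nu_m\to 0$, and for small $S_m$ the event $\mathcal{E}$ does not by itself keep $\hat\nu_m$ bounded away from zero, invalidating the linearization. The delicate care-point is therefore to secure a lower bound on $\hat\nu_m$ — e.g.\ by requiring $S_m$ to exceed a $\log(M/\delta)$-dependent threshold, which is precisely what the initial sampling phase and the choice $\delta=1/(LT^2)$ in Theorem~\ref{thm:convergence-rate} provide — so that the mean value estimate applies, and to fall back on the power estimate otherwise; reconciling the two regimes so that the single exponent $\min\{1/2,|\kappa_{\ell,m}|\}$ upper-bounds both is the bookkeeping that the full proof must carry through.
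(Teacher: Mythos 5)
Your proposal is correct and reaches the stated bound, but it takes a genuinely different route from the paper's proof. Both arguments share the same foundation: Hoeffding's inequality on each $\hat\nu_m$ with a union bound over the $M$ probes, the exact relation $\mu_\ell = \prod_m \nu_m^{\kappa_{\ell,m}}$, and an implicit sample-size requirement (the paper needs $\nu_m - r_m > 0$ for its step (a) to make sense and $S_m \ge O(\log T)$ to lower-bound a denominator by a constant; you need $\epsilon_m \le \nu_{\min}/2$ for the mean value theorem). From there the paths diverge: the paper works directly on the ratio of products, splitting $\mathcal{M}$ into the sets $\mathcal{M}^+$ and $\mathcal{M}^-$ of positive and negative exponents, substituting the Hoeffding bounds $\nu_m \pm r_m$, factoring out $\mu_\ell$, rationalizing via $\tfrac{a}{b}-1 = \tfrac{a^2-b^2}{b(a+b)}$, and expanding the resulting numerator to leading order in $r'_m = r_m/\nu_m$ — an expansion that in fact leans on the star-specific structure $2\abs{\kappa_m}\in\{1,2\}$ from Lemma~\ref{lma:inverse-matrix-fix-entry}, even though the lemma is stated for arbitrary real $\kappa_{\ell,m}$. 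You instead pass to the log scale, bound each $\abs{\log\hat\nu_m - \log\nu_m}$ by the mean value theorem, sum with weights $\abs{\kappa_{\ell,m}}$, and exponentiate back via $\abs{e^D-1}\le \abs{D}e^{\abs{D}}$; the exponent $\kappa'_{\ell,m}=\min\{\tfrac12,\abs{\kappa_{\ell,m}}\}$ then appears by the (valid) relaxation $x^{1/2}\le x^{\kappa'}$ for $x\le 1$. Your delta-method-style argument is cleaner, treats arbitrary real exponents uniformly, and makes explicit the sample-size condition that the paper leaves implicit, whereas the paper's expansion avoids logarithms at the cost of heavier product bookkeeping. One caveat on your fallback: the inequality $\abs{a^\kappa - b^\kappa}\le\abs{a-b}^\kappa$ only applies to exponents $0<\kappa\le 1$, so it cannot rescue factors with negative $\kappa_{\ell,m}$ when $\hat\nu_m$ is near zero; but this fallback is never actually needed, since the initial sampling phase guarantees the regime where your mean value estimate applies to every factor, which is exactly how the lemma is invoked in Theorem~\ref{thm:MLE-estimator-fix-radius} and Theorem~\ref{thm:convergence-rate}.
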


As Lemma~\ref{lma:MLE-estimator-fix-radius} shows, the actual concentration rate (exponents \(\kappa'_{\ell,m}\)) of the MLE depends on the entries of the inverse measurement matrix \(\bm Q^{-1}\).
For unicast loss tomography in a star network, we prove a lemma that characterizes the inverse matrix \(\bm Q^{-1}\)  as follows,

\begin{lemma}\label{lma:inverse-matrix-fix-entry}
    For unicast loss tomography in a star network, inverse measurement matrix  \(\bm Q^{-1}\) only has \(\left\{0, \pm \frac{1}{2}, \pm 1\right\}\) entries.
\end{lemma}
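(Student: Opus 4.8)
The plan is to interpret $\bm Q$ as a graph incidence matrix and then solve the linear system $\log\bm\nu = \bm Q\log\bm\mu$ structurally, tracking coefficients rather than computing a determinant. Recall that in the star network each unicast probe $(i\leftrightarrow j)$ traverses exactly the two links $i$ and $j$, so every row of $\bm Q$ has exactly two ones; hence $\bm Q$ is precisely the (edge-by-vertex) \emph{unsigned incidence matrix} of the graph $G$ whose vertex set is the link set $\mathcal{L}$ and whose $L$ edges are the $L$ chosen probes. Writing $x_\ell := \log\mu_\ell$ and $b_m := \log\nu_m$, each probe contributes the scalar equation $x_i + x_j = b_m$ for its endpoints $\{i,j\}$, and the entries $\kappa_{\ell,m}=(\bm Q^{-1})_{\ell,m}$ are exactly the coefficients in the solved form $x_\ell=\sum_m \kappa_{\ell,m}b_m$.

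First I would pin down the structure of $G$. Using the classical fact that the rank of an unsigned incidence matrix equals $n$ minus the number of bipartite connected components (here $n=|\mathcal{L}|=L$), identifiability (invertibility of the square matrix $\bm Q$) forces every component to be non-bipartite, hence to contain an odd cycle and therefore to have at least as many edges as vertices. Since $G$ has exactly $L$ edges and $L$ vertices in total, equality must hold component-wise, so each connected component is \emph{odd-unicyclic}: a single odd cycle with trees attached. After reordering rows and columns, $\bm Q$ becomes block diagonal with one block per component, so $\bm Q^{-1}$ is block diagonal too; entries linking distinct components are $0$, and it suffices to bound the entries within a single odd-unicyclic block.

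Within one component I would solve by propagating from the cycle outward. For the odd cycle $v_1,\dots,v_{2k+1}$ with edge values $b_1,\dots,b_{2k+1}$, the alternating substitution $x_{v_1}=b_1-x_{v_2}=b_1-b_2+x_{v_3}=\cdots$ wraps around an odd number of edges and returns $-x_{v_1}$, giving $2x_{v_1}=\sum_i(-1)^{i-1}b_i$; thus every cycle vertex equals one half of a signed sum of cycle-edge values, all coefficients lying in $\{\pm\tfrac12\}$. Since the incidence submatrix of the odd cycle is itself invertible ($\det=\pm2$), these values are fixed by the cycle equations alone and involve no tree edges. Then, processing the attached trees in order of increasing distance from the cycle, each new vertex $u$ is reached from an already-solved parent $w$ through a unique bridge edge $\{u,w\}$, and $x_u=b_{\{u,w\}}-x_w$. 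Because $\{u,w\}$ is a bridge lying on $u$'s side, its value is absent from the expression for $x_w$, so the coefficient of $b_{\{u,w\}}$ in $x_u$ is exactly $+1$, while every other coefficient is the negation of a coefficient of $x_w$, which by induction lies in $\{0,\pm\tfrac12,\pm1\}$. This yields $\kappa_{\ell,m}\in\{0,\pm\tfrac12,\pm1\}$.

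The main obstacle I anticipate is the bookkeeping that guarantees coefficients never combine out of this set: concretely, the claim that the bridge value $b_{\{u,w\}}$ does not appear in $x_w$, so that no two contributions of magnitude $\tfrac12$ or $1$ ever stack on the same $b_m$. This follows from the tree/bridge structure of the non-cycle edges, but it is the step requiring care, best argued by the induction on distance from the cycle combined with uniqueness of the solution of the square invertible system. The structural characterization of $G$ as a disjoint union of odd-unicyclic graphs is the other ingredient I would state and justify before the coefficient computation.
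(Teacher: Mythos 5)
Your proof is correct, and it takes a genuinely different route from the paper's. The paper argues purely in matrix-algebra terms: it permutes $\bm Q$ into indecomposable diagonal blocks, shows each block has determinant $\pm 2$ by repeatedly Laplace-expanding along columns containing a single $1$ (this is exactly your leaf-stripping, in matrix language) until a square matrix with exactly two $1$s in every row and column remains --- a cycle --- whose Leibniz expansion has only two nonzero terms of equal sign, and then concludes through the adjugate formula $\bm Q_h^{-1} = \bm C_h^T/\det(\bm Q_h)$ together with the claim that every cofactor lies in $\{0,\pm 1,\pm 2\}$. You instead read $\bm Q$ as the unsigned incidence matrix of a graph on the links, use the rank formula (rank equals number of vertices minus number of bipartite components) to force every component to be odd-unicyclic, and then solve the linear system explicitly: the alternating substitution around the odd cycle produces the $\pm\tfrac{1}{2}$ coefficients, and the outward induction along attached trees produces the $\pm 1$ coefficients while guaranteeing no stacking. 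Your route buys two things. First, it bypasses the paper's cofactor step, which is its tersest point: justifying that every minor's determinant lies in $\{0,\pm 1,\pm 2\}$ essentially requires re-running the determinant analysis on each submatrix, whereas your induction handles this implicitly and rigorously. Second, it yields a sharper, constructive statement: it identifies exactly which entries of $\bm Q^{-1}$ equal $\pm\tfrac{1}{2}$ (columns of cycle edges) versus $\pm 1$ (tree edges on the path from the given link back to the cycle), and it characterizes identifiability graph-theoretically (every component must be non-bipartite). The cost is reliance on the classical incidence-matrix rank fact, which you should cite or prove, whereas the paper's argument needs no external graph theory. Both arguments establish the lemma.
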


Combining both lemmas yields the concentration rate of the MLE estimator for the link parameters \(\bm\mu\) in the star network (Theorem~\ref{thm:MLE-estimator-fix-radius}).

Below, we provide the proofs of Lemmas~\ref{lma:MLE-estimator-fix-radius} and~\ref{lma:inverse-matrix-fix-entry}.

\begin{proof}[Proof of Lemma~\ref{lma:MLE-estimator-fix-radius}]
    In this proof, we fix a link \(\ell\in\mathcal{L}\) and omit its appearance in subscript for simplicity.
    Denote \(\mathcal{M}^+ \coloneqq \{m\in \mathcal{M}: \kappa_{\ell, m} > 0\}\) and \(\mathcal{M}^- \coloneqq \{m\in \mathcal{M}: \kappa_{\ell, m} < 0\}\) as the sets of probes with positive and negative entries in the inverse matrix \(\bm Q^{-1}\)'s \(\ell^{\text{th}}\) column, respectively.
    Then, we have
    \begin{align}
        \hat\mu_\ell
         & = \prod_{m\in\mathcal{M}} \hat\nu_m^{\kappa_{m}}
        = \frac{\prod_{m\in\mathcal{M}^+} \hat\nu_m^{\kappa_{m}}}{\prod_{m\in\mathcal{M}^-} \hat\nu_m^{-\kappa_{m}}}
        \\
         & \overset{(a)}\le \frac{\prod_{m\in\mathcal{M}^+} (\nu_m + r_m)^{\kappa_{m}}}{\prod_{m\in\mathcal{M}^-} (\nu_m - r_m)^{-\kappa_{m}}}
        \\
         & = \frac{\prod_{m\in\mathcal{M}^+} \nu_m^{\kappa_{m}}}{\prod_{m\in\mathcal{M}^-} \nu_m^{-\kappa_{m}}} \cdot \frac{\prod_{m\in\mathcal{M}^+} (1 + r_m/\nu_m)^{\kappa_{m}}}{\prod_{m\in\mathcal{M}^-} (1 - r_m/\nu_m)^{-\kappa_{m}}}
    \end{align}
    \begin{align}
         & \overset{(b)} = \mu_\ell \cdot \frac{\prod_{m\in\mathcal{M}^+} (1 + r_m')^{\kappa_{m}}}{\prod_{m\in\mathcal{M}^-} (1 - r_m')^{-\kappa_{m}}}
        \\
         & = \mu_\ell + \mu_\ell \left( \frac{\prod_{m\in\mathcal{M}^+} (1 + r_m')^{\kappa_{m}}}{\prod_{m\in\mathcal{M}^-} (1 - r_m')^{-\kappa_{m}}} - 1 \right)
        \\
         & = \mu_\ell + \mu_\ell \times
        \\
         & \frac{
            \prod_{m{\in}\mathcal{M}^+} (1 + r_m')^{2\kappa_{m}} - \prod_{m{\in}\mathcal{M}^-} (1 - r_m')^{-2\kappa_{m}}}{
            \prod_{m{\in}\mathcal{M}^-} (1 {-} r_m')^{-\kappa_{m}}\!\!\left(
            \prod_{m{\in}\mathcal{M}^+} (1 {+} r_m')^{\kappa_{m}}
            {+} \prod_{m{\in}\mathcal{M}^-} (1 {-} r_m')^{-\kappa_{m}} \!\right)}
        \\
         & \overset{(c)}\le \mu_\ell + C \sum_{m\in\mathcal{M}^+ \cup\mathcal{M}^-} (r_m')^{2\kappa_m'}, \quad\text{for some constant }C,
    \end{align}
    where inequality (a) is because \(\nu_m \in (\hat{\nu}_m - r_m, \hat{\nu}_m + r_m)\) by the Hoeffding's inequality and \(r_m  \coloneqq \sqrt{\frac{\log (\delta^{-1}/M)}{S_m}}\),
    equation (b) is due to \(\mu_\ell
    = \prod_{m\in\mathcal{M}} \nu_m^{\kappa_{m}}\) when the MLE estimator has accurate inputs and denote \(r_m' \coloneqq r_m / \nu_m\),
    and inequality (c) is by noticing that (c1) for the nominator, \(2\kappa_m \in \{1,2\}, \forall m \in \mathcal{M}^+\) and \(-2\kappa_m \in \{1,2\}, \forall m \in \mathcal{M}^-\), and therefore, the nominator can be upper bounded by \(O(\sum_{m\in\mathcal{M}^+ \cup \mathcal{M}^-} r_m')\) while all other terms are with high orders can be omitted,
    and (c2) the denominator can be lower bounded by a constant when \(S_m\ge O(\log T)\) for all \(m\in\mathcal{M}\).
\end{proof}

\begin{proof}[Proof of Lemma~\ref{lma:inverse-matrix-fix-entry}]
    % \check~give the detail of this proof! Very interesting proof:)
    \textbf{Step 1. Block diagonalization.} By row and column switch, one can transform matrix \(\bm Q\) to a block diagonal matrix, as follows,
    \[
        \bm Q \longleftrightarrow \begin{bmatrix}
            \bm Q_1 & 0       & \ldots & 0       \\
            0       & \bm Q_2 & \ldots & 0       \\
            \vdots  & \vdots  & \ddots & \vdots  \\
            0       & 0       & \ldots & \bm Q_H
        \end{bmatrix},
    \]
    where each submatrix \(\bm Q_h\) is a square matrix that cannot be further block diagonalized.

    \textbf{Step 2. Determinant calculation.} By Laplace expansion and Leibniz's equation, we show \(\det(\bm Q_h) \in \{\pm 2\}\) for all \(h = 1,\dots,H\).

    For each row of the matrix \(\bm Q_h\), as in a unicast star network, it contains only two \(1\) entries while others are all zero entries. If there exists a column \(c\) in matrix \(\bm Q_h\) that only has one \(1\) entry in row \(r\) and all other entries are zero, then we apply the Laplace expansion on this single \(1\) entry, leading to \[
        \det (\bm Q_h) =  (-1)^{r+c} \det (\bm Q_{h,\text{sub},r,c}).
    \]
    For the submatrix \(\bm Q_{h,\text{sub},r,c}\), if there exists another column \(c'\) that only has one \(1\) entry, then we can apply the Laplace expansion on this single \(1\) entry again.
    By iteratively applying the Laplace expansion, we finally obtains a matrix \(\bm Q'_h\) with no single \(1\)-entry column.
    As each row has exactly two \(1\) entries, no single \(1\)-entry column also implies there is no column with more than two \(1\) entries.
    Therefore, \(\bm Q'_h\) is a square matrix with all columns and rows having exactly two \(1\) entries.
    Following the block diagonalization in Step 1, we also note that the matrix \(\bm Q'_h\) cannot be further decomposed into smaller block diagonal matrices, either.

    Then, we apply the Leibniz's equation to calculate the determinant of \(\bm Q'\) as follows,
    \[
        \det (\bm Q'_h) = \sum_{\sigma \in \text{perm.}} \text{sgn}(\sigma) \prod_{\ell=1}^L (\bm Q'_h)_{\ell, \sigma(\ell)},
    \]
    where \(\sigma\) is a permutation of the column of the matrix \(\bm Q'\), and \(\text{sgn}(\sigma)\) is the sign of the permutation.
    Because each row has exactly two \(1\) entries and the matrix cannot be further decomposed, the Leibniz's equation only has two non-zero terms with value either \(1\) or \(-1\) in the summation, and noticing the matrix is non-singular, the determinant is \(\pm 2\).

    \textbf{Step 3. Submatrix inversion.} Notice that for the block diagonal matrix, the inverse matrix is also block diagonalized, as follows,
    \[
        \begin{bmatrix}
            \bm Q_1 & 0       & \ldots & 0       \\
            0       & \bm Q_2 & \ldots & 0       \\
            \vdots  & \vdots  & \ddots & \vdots  \\
            0       & 0       & \ldots & \bm Q_H
        \end{bmatrix}^{-1}
        = \begin{bmatrix}
            \bm Q_1^{-1} & 0            & \ldots & 0            \\
            0            & \bm Q_2^{-1} & \ldots & 0            \\
            \vdots       & \vdots       & \ddots & \vdots       \\
            0            & 0            & \ldots & \bm Q_H^{-1}
        \end{bmatrix}.
    \]
    Hence, we only need to show that for each submatrix \(\bm Q_h\), the inverse matrix \(\bm Q_h^{-1}\) only has \(\left\{0, \pm \frac{1}{2}, \pm 1\right\}\) entries.
    As the inverse matrix \(\bm Q_h^{-1}\) can be calculated as follows,
    \[
        \bm Q_h^{-1} = \frac{1}{\det(\bm Q_h)} \bm C_h^T,
    \]
    where \(\bm C_h\) is the cofactor matrix of \(\bm Q_h\).
    The entries of \(\bm C_h\) are the determinants of the submatrices of \(\bm Q_h\), whose values are in \(\{0, \pm 1, \pm 2\}\) by applying the Leibniz's equation.
    % \tdmark~give more details of this statement.
    As \(\det(\bm Q_h) = \pm 2\), the entries of \(\bm Q_h^{-1}\) are in \(\{0, \pm \frac{1}{2}, \pm 1\}\).
\end{proof}

\section{Additional Experiments: Impact of Lazy Update Batch Size}\label{sec:lazy-simulation}

In this appendix, we present additional experiments to investigate the impact of the lazy updates on \opal.
% In Figure~\ref{fig:classical-large-star}, we compare the performance of \opal with the classical algorithms in a $30$-link star network. The results support the same observations stated in Section~\ref{subsec:empirical-classical}.
Figures~\ref{fig:classical-lazy} and~\ref{fig:quantum-lazy} illustrate the impact of the lazy update batch size on \opal, indicating that the performance of \opal remains robust regardless of the chosen lazy update batch size. From this observation, when the computation cost is a concern, one can choose a larger lazy update batch size to reduce the number of updates while maintaining a relatively good performance.

\end{document}